\keywords{Inference systems, session types, safety, liveness, induction, coinduction}
\newif\ifproofs
\newif\ifcomments
\definecolor{lucared}{rgb}{0.5,0,0}
\definecolor{lucagreen}{rgb}{0,0.75,0}
\definecolor{lucablue}{rgb}{0,0,0.5}
\newcommand{\rulename}[1]{\textsc{\small[#1]}}
\newcommand{\refrule}[1]{\hyperlink{rule:#1}{\rulename{#1}}}
\newcommand{\eoe}{\hfill$\blacksquare$}
\newcommand{\eor}{\eoe}
\newcommand{\infercorule}[3][]{%
  {\mprset{fraction={===}}\inferrule[#1]{#2}{#3}}%
}
\let\oldmarginpar\marginpar
\renewcommand{\marginpar}[1]{\oldmarginpar[\raggedleft\tiny #1]%
{\raggedright\tiny #1}}
\newcommand{\action}{\actionA}
\newcommand{\actionA}{\alpha}
\newcommand{\actionB}{\beta}
\newcommand{\actionC}{\gamma}
\newcommand{\actions}{\actionsA}
\newcommand{\actionsA}{\varphi}
\newcommand{\actionsB}{\psi}
\newcommand{\es}{\varepsilon}
\newcommand{\Message}{\mathbb{V}}
\newcommand{\Nat}{\mathbb{N}}
\newcommand{\NatPlus}{\Nat^{\texttt{+}}}
\newcommand{\vtrue}{\mathsf{true}}
\newcommand{\vfalse}{\mathsf{false}}
\newcommand{\Bool}{\mathbb{B}}
\newcommand{\x}{x}
\newcommand{\SessionType}{\mathbb{S}}
\newcommand{\FairlyTerminating}{\mathbb{F}}
\newcommand{\T}{T}
\renewcommand{\S}{S}
\newcommand{\R}{R}
\newcommand{\End}[1][\In]{#1\mathsf{\color{lucablue}end}}
\newcommand{\Win}{\End[\Out]}
\newcommand{\TNil}{\mathsf{\color{lucared}nil}}
\newcommand{\Out}{{!}}
\newcommand{\In}{{?}}
\newcommand{\branch}{+}
\newcommand{\session}[2]{#1 \mathrel{\#} #2}
\newcommand{\co}[1]{\overline{#1}}
\newcommand{\traces}[1]{\mathtt{tr}(#1)}
\newcommand{\set}[1]{\{#1\}}
\newcommand{\annotation}[1]{\mathsf{#1}}
\newcommand{\Ind}{\annotation{Ind}}
\newcommand{\terminates}[2][]{#2{\Downarrow_{#1}}}
\newcommand{\compliance}[3][]{#2 \dashv_{#1} #3}
\newcommand{\subt}[3][]{#2 \leqslant_{#1} #3}
\newcommand{\isubt}{\subt[\Ind]}
\newcommand{\eqdef}{\stackrel{\text{\tiny\sf def}}=}
\newcommand{\red}{\to}
\newcommand{\nred}{\arrownot\red}
\newcommand{\wred}{\Rightarrow}
\newcommand{\lred}[1]{\stackrel{#1}\longrightarrow}
\newcommand{\xlred}[1]{\xrightarrow{#1}}
\newcommand{\wlred}[1]{\stackrel{#1}\Longrightarrow}
\newcommand{\universe}{\mathcal{U}}
\newcommand{\is}[1][I]{\mathcal{#1}}
\newcommand{\cois}[1][I]{\is[#1]_{\mathsf{co}}}
\newcommand{\judg}{\mathit{j}} 
\newcommand{\InfOp}[1]{\mathit{F}_{#1}} 
\newcommand{\gis}[2]{\langle#1,#2\rangle}
\newcommand{\RulePair}[2]{\Pair{#1}{#2}} 
\newcommand{\Rule}[2]{\inferrule{#1}{#2}}
\newcommand{\prem}{\mathit{pr}}
\newcommand{\sem}[1]{\llbracket #1 \rrbracket} 
\newcommand{\Inductive}[1]{\textup{\textsf{Ind}}\sem{#1}}
\newcommand{\CoInductive}[1]{\textup{\textsf{CoInd}}\sem{#1}}
\newcommand{\FlexCo}[2]{\textup{\textsf{Gen}}\sem{#1,#2}}
\newcommand{\Trees}[1]{\mathcal{T}}
\newcommand{\Spec}{\mathcal{S}}
\newcommand{\mis}{\mathcal{I}} 
\newcommand{\mcois}{\mis_{\mathsf{co}}}
\begin{document}

\title[Inference Systems with Corules for Safety and Liveness Properties]{Inference Systems with Corules for Combined Safety\texorpdfstring{\\}{} and Liveness Properties of Binary Session Types}

\titlecomment{{\lsuper*} Extended version of a paper appeared in the proceedings of ICALP 2021.}

\author[L.~Ciccone]{Luca Ciccone}
% \lmcsorcid{0000-0001-9515-5280}
\author[L.~Padovani]{Luca Padovani}
% \lmcsorcid{0000-0001-9097-1297}

\address{Universit\`a di Torino}
\email{luca.ciccone@unito.it, luca.padovani@unito.it}

\begin{abstract}
  Many properties of communication protocols combine \emph{safety} and
  \emph{liveness} aspects. Characterizing such combined properties by means of a
  single inference system is difficult because of the fundamentally different
  techniques (coinduction and induction, respectively) usually involved in
  defining and proving them.
  In this paper we show that \emph{Generalized Inference Systems} allow us to
  obtain sound and complete characterizations of (at least some of) these
  combined inductive/coinductive properties of binary session types.  In
  particular, we illustrate the role of \emph{corules} in characterizing fair
  termination (the property of protocols that can always eventually terminate),
  fair compliance (the property of interactions that can always be extended to
  reach client satisfaction) and fair subtyping, a liveness-preserving
  refinement relation for session types.
  The characterizations we obtain are simpler compared to the previously
  available ones and corules provide insight on the liveness properties being
  ensured or preserved. Moreover, we can conveniently appeal to the
  \emph{bounded coinduction principle} to prove the completeness of the provided
  characterizations.
\end{abstract}

%%% Local Variables:
%%% mode: latex
%%% TeX-master: "main"
%%% End:

\maketitle

\newcommand{\nil}{[]}
\newcommand{\cons}[2]{#1 \mathbin{::} #2}
\newcommand{\maxlist}{\mathit{maxElem}}
\newcommand{\GIS}{GIS\xspace}
\newcommand{\GISs}{GISs\xspace}

\section{Introduction}
\label{sec:introduction}

Analysis techniques for concurrent and distributed programs usually
make a distinction between \emph{safety properties} --- ``nothing bad ever
happens'' --- and \emph{liveness properties} --- ``something good eventually
happens''~\cite{OwickiLamport82}.
For example, in a network of communicating processes, the absence of
communication errors and of deadlocks are safety properties, whereas the fact
that a protocol or a process can always successfully terminate is a liveness
property.
Because of their different nature, characterizations and proofs of safety and
liveness properties rely on fundamentally different (dual) techniques: safety
properties are usually based on invariance (coinductive) arguments, whereas
liveness properties are usually based on well foundedness (inductive)
arguments~\cite{AlpernSchneider85,AlpernSchneider87}.

The correspondence and duality between safety/coinduction and liveness/induction
is particularly apparent when properties are specified as formulas in the modal
\emph{$\mu$-calculus}~\cite{Kozen83,Stirling01,BradfieldStirling07}, a modal
logic equipped with least and greatest fixed points: safety properties are
expressed in terms of greatest fixed points, so that the ``bad things'' are
ruled out along all (possibly infinite) program executions; liveness properties
are expressed in terms of least fixed points, so that the ``good things'' are
always within reach along all program executions. Since the $\mu$-calculus
allows least and greatest fixed points to be interleaved arbitrarily, it makes
it possible to express properties that combine safety and liveness aspects,
although the resulting formulas are sometimes difficult to understand.

A different way of specifying (and enforcing) properties is by means of
\emph{inference systems}~\cite{Aczel77}. Inference systems admit two natural
interpretations, an inductive and a coinductive one respectively corresponding
to the least and the greatest fixed points of their associated inference
operator. Unlike the $\mu$-calculus, however, they lack the flexibility of
mixing their different interpretations since the inference rules are interpreted
either all inductively or all coinductively. For this reason, it is generally
difficult to specify properties that combine safety and liveness aspects by
means of a single inference system.
\emph{Generalized Inference Systems} (\GISs)~\cite{AnconaDagninoZucca17,Dagnino19} admit a wider range of interpretations,
including intermediate fixed points of the inference operator associated with
the inference system different from the least or the greatest one.  This is made
possible by the presence of \emph{corules}, whose purpose is to provide an
inductive definition of a space within which a coinductive definition is used.
Although \GISs do not achieve the same flexibility of the modal $\mu$-calculus
in combining different fixed points, they allow for the specification of
properties that can be expressed as the intersection of a least and a greatest
fixed point. This feature of \GISs resonates well with one of the fundamental
results in model checking stating that every property can be decomposed into a
conjunction of a safety property and a liveness
one~\cite{AlpernSchneider85,AlpernSchneider87,BaierKatoen08}.
The main contribution of this work is the realization that we can leverage this
decomposition result to provide compact and insightful characterizations of a
number of combined safety and liveness properties of binary session types using
\GISs.

Session types~\cite{Honda93,HondaVasconcelosKubo98,AnconaEtAl16,HuttelEtAl16}
are type-level specifications of communication protocols describing the allowed
sequences of input/output operations that can be performed on a communication
channel. By making sure that programs adhere to the session types of the
channels they use, and by establishing that these types enjoy particular
properties, it is possible to conceive compositional forms of analysis and
verification for concurrent and distributed programs.
In this work we illustrate the effectiveness of \GISs in characterizing the
following session type properties and relations:
\begin{description}
  \item[Fair termination] the property of protocols that can always eventually
terminate;
  \item[Fair compliance] the property of client/server interactions that can
    satisfy the client;
%   \item[Fair viability] the property of clients that can be satisfied by at
% least one server;
  \item[Fair subtyping] a liveness-preserving refinement relation for session
types.
\end{description}

We show how to provide sound and complete characterizations of these properties
just by adding a few corules to the inference systems of their ``unfair''
counterparts, those focusing on safety but neglecting liveness. Not only corules
shed light on the liveness(-preserving) property of interest, but we can
conveniently appeal to the \emph{bounded coinduction principle} of
\GISs~\cite{AnconaDagninoZucca17} to prove the completeness of the provided
characterizations, thus factoring out a significant amount of work.
We also make two side contributions. First, we provide an Agda~\cite{Norell07}
formalization of all the notions and results stated in the paper. In particular,
we give the first machine-checked formalization of a liveness-preserving
refinement relation for session types.
Second, the Agda representation of session types we adopt allows us to address a
family of \emph{dependent session
types}~\cite{ToninhoCairesPfenning11,ToninhoYoshida18,ThiemannVasconcelos20,CicconePadovani20}
in which the length and structure of the protocol may depend in non-trivial ways
on the \emph{content} of exchanged messages. Thus, we extend previously given
characterizations of fair compliance and fair
subtyping~\cite{Padovani13,Padovani16} to a much larger class of protocols.

\paragraph{Structure of the paper.}
We quickly recall the key definitions of \GISs in Section~\ref{sec:is} and
describe syntax and semantics of session types in Section~\ref{sec:types}. We
then define and characterize fair termination (Section~\ref{sec:termination}),
fair compliance (Section~\ref{sec:compliance}) %, fair viability
% (Section~\ref{sec:viability})
and fair subtyping (Section~\ref{sec:subtyping}).
Section~\ref{sec:agda} provides a walkthrough of the Agda formalization of fair
compliance and its specification as a \GIS.\@ The full Agda formalization is
accessible from a public repository~\cite{CicconePadovani21}.
We conclude in Section~\ref{sec:conclusion}.

\paragraph{Origin of the paper.}
This is a revised and extended version of a paper that appears in the
proceedings of the \emph{48\textsuperscript{th} International Colloquium on
Automata, Languages, and Programming (ICALP'21)}~\cite{CicconePadovani21B}. The
most significant differences between this and the previous version of the paper
are summarized below:
\begin{itemize}
  \item The representation of session types in this version of the paper differs
  from that in the Agda formalization and is more aligned with traditional ones.
  In the previous version, the representation of session types followed closely
  the Agda formalization, and that was a potential source of confusion
  especially for readers not acquainted with Agda.
  \item Example derivations for the presented \GISs have been added.
  \item ``Pen-and-paper'' proof sketches of the presented result have been
  added/expanded.
  \item The detailed walkthrough of the Agda formalization of fair compliance
  (Section~\ref{sec:agda}) is new.
\end{itemize}
Finally, the Agda formalization has been upgraded to a more recent version of
the \GIS library for Agda~\cite{CicconeDagninoZucca21} that supports (co)rules
with infinitely many premises. This feature allows us to provide a formalization
that is fully parametric on the set of values that can be exchanged over a
session. In contrast, the formalization referred to from the previous version of
the paper was limited to boolean values.

%%% Local Variables:
%%% mode: latex
%%% TeX-master: "main"
%%% End:

\newcommand{\Pair}[2]{\langle#1,#2\rangle}
\newcommand{\fun}[2]{#1 \to #2}
\newcommand{\ple}[1]{\langle#1\rangle}

\section{Generalized Inference Systems}
\label{sec:is}

In this section we briefly recall the key notions of Generalized Inference
Systems (\GISs). In particular, we see how \GISs enable the definition of
predicates whose purely (co)inductive interpretation does not yield the intended
meaning and we review the canonical technique to prove the completeness of a
defined predicate with respect to a given specification. Further details on
\GISs may be found in the existing
literature~\cite{AnconaDagninoZucca17,Dagnino19}.

An \emph{inference system}~\cite{Aczel77} $\mis$ over a \emph{universe}
$\universe$ of \emph{judgments} is a set of \textit{rules}, which are pairs
$\RulePair\prem\judg$ where $\prem\subseteq\universe$ is the set of
\emph{premises} of the rule and $\judg\in\universe$ is the \emph{conclusion} of
the rule. A rule without premises is called \emph{axiom}. Rules are typically
presented using the syntax
\[
  \Rule\prem\judg
\]
where the line separates the premises (above the line) from the
conclusion (below the line).

\begin{rem}
  In many cases, and in this paper too, it is convenient to present inference
  systems using \emph{meta-rules} instead of rules. A meta-rule stands for a
  possibly infinite set of rules which are obtained by instantiating the
  \emph{meta-variables} occurring in the meta-rule.
  In the rest of the paper we will not insist on this distinction and we will
  use ``(co)rule'' even when referring to meta-(co)rules. If necessary, we will
  use side conditions to constrain the valid instantiations of the
  meta-variables occurring in such meta-(co)rules.
  \eor
\end{rem}

A \emph{predicate} on $\universe$ is any subset of $\universe$. An
\emph{interpretation} of an inference system $\mis$ identifies a
\emph{predicate} on $\universe$ whose elements are called \emph{derivable
judgments}. To define the interpretation of an inference system $\mis$, consider
the \emph{inference operator} associated with $\mis$, which is the function
$\InfOp\mis : \wp(\universe) \to \wp(\universe)$ such that
\[
  \InfOp\mis(X)=\set{\judg\in\universe \mid \exists\prem\subseteq X: \RulePair\prem\judg\in\mis}
\]
for every $X \subseteq \universe$.
Intuitively, $\InfOp\mis(X)$ is the set of judgments that can be
derived in one step from those in $X$ by applying a rule of $\mis$.
Note that $\InfOp\mis$ is a monotone endofunction on the complete
lattice $\wp(\universe)$, hence it has least and greatest fixed
points.

\begin{defi}
  The \emph{inductive interpretation} $\Inductive\mis$ of an
  inference system $\mis$ is the least fixed point of $\InfOp\mis$
  and the \emph{coinductive interpretation} $\CoInductive\mis$ is
  the greatest one.
\end{defi}

From a proof theoretical point of view, $\Inductive\mis$ and $\CoInductive\mis$
are the sets of judgments derivable with well-founded and non-well-founded proof
trees, respectively.

Generalized Inference Systems enable the definition of (some)
predicates for which neither the inductive interpretation nor the
coinductive one give the expected meaning.

\begin{defi}[generalized inference system]
  \label{def:gis}
  A \emph{generalized inference system} is a pair $\Pair\mis\mcois$ where $\mis$
  and $\mcois$ are inference systems (over the same $\universe$) whose elements
  are called \emph{rules} and \emph{corules}, respectively.
  The interpretation of a generalized inference system $\Pair\mis\mcois$,
  denoted by $\FlexCo\mis\mcois$, is the greatest post-fixed point of
  $\InfOp\mis$ that is included in $\Inductive{\mis\cup\mcois}$.
\end{defi}

From a proof theoretical point of view, a \GIS $\ple{\mis,\mcois}$ identifies
those judgments derivable with an arbitrary (not necessarily well-founded) proof
tree in $\mis$ and whose nodes (the judgments occurring in the proof tree) are
derivable with a well-founded proof tree in $\mis\cup\mcois$.

Consider now a \emph{specification} $\Spec\subseteq\universe$, that is an
arbitrary subset of $\universe$. We can relate $\Spec$ to the interpretation of
a (generalized) inference system using one of the following proof principles.
The \emph{induction principle}~\cite[Corollary 2.4.3]{Sangiorgi11} allows us to
prove the \emph{soundness} of an inductively defined predicate by showing that
$\Spec$ is \emph{closed} with respect to $\mis$. That is, whenever the premises
of a rule of $\mis$ are all in $\Spec$, then the conclusion of the rule is also
in $\Spec$.

\begin{prop}
  \label{prop:ip}
  If $\InfOp\mis(\Spec) \subseteq \Spec$, then $\Inductive\mis \subseteq \Spec$.
\end{prop}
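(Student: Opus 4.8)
The plan is to reduce the claim to the standard fact that the least fixed point of a monotone operator on a complete lattice is its least \emph{pre-fixed point}. Recall that $\Inductive\mis$ is defined as the least fixed point of the monotone endofunction $\InfOp\mis$ on the complete lattice $\wp(\universe)$. By the Knaster--Tarski theorem this least fixed point admits the explicit description
\[
  \Inductive\mis = \bigcap \set{X \subseteq \universe \mid \InfOp\mis(X) \subseteq X},
\]
i.e.\ it is contained in every set $X$ that $\InfOp\mis$ maps into itself.

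Given this, the argument is immediate: the hypothesis $\InfOp\mis(\Spec) \subseteq \Spec$ asserts exactly that $\Spec$ is one of the sets $X$ over which the intersection above ranges. Since $\Inductive\mis$ is the meet of all such sets, I conclude $\Inductive\mis \subseteq \Spec$, as required.

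If a self-contained argument were preferred to a black-box appeal to Knaster--Tarski, I would instead present $\Inductive\mis$ as the union $\bigcup_\alpha \InfOp\mis^\alpha(\emptyset)$ of the ordinal-indexed iterates of $\InfOp\mis$ from $\emptyset$, and show $\InfOp\mis^\alpha(\emptyset) \subseteq \Spec$ for every ordinal $\alpha$ by transfinite induction: the base case is $\emptyset \subseteq \Spec$; at a successor ordinal, monotonicity of $\InfOp\mis$ together with the induction hypothesis and the assumption yield $\InfOp\mis^{\alpha+1}(\emptyset) = \InfOp\mis(\InfOp\mis^\alpha(\emptyset)) \subseteq \InfOp\mis(\Spec) \subseteq \Spec$; and at a limit ordinal the iterate is a union of sets already known to lie inside $\Spec$. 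There is no real obstacle here, as the statement is a textbook consequence of Knaster--Tarski; the only point demanding care is to invoke the \emph{pre}-fixed-point characterization rather than the fixed-point one, since the hypothesis makes $\Spec$ a pre-fixed point but not necessarily a fixed point of $\InfOp\mis$.
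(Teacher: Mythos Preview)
Your proof is correct. The paper does not give its own proof of this proposition: it is stated as a standard result and attributed to Sangiorgi's textbook (cited as Corollary~2.4.3 there), so there is nothing to compare your argument against beyond the usual Knaster--Tarski reasoning, which is exactly what you supply.
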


The \emph{coinduction principle}~\cite[Corollary 2.4.3]{Sangiorgi11} allows us
to prove the \emph{completeness} of a coinductively defined predicate by showing
that $\Spec$ is \emph{consistent} with respect to $\mis$.  That is, every
judgment of $\Spec$ is the conclusion of a rule whose premises are also in
$\Spec$.

\begin{prop}
  \label{prop:cp}
  If $\Spec \subseteq \InfOp\mis(\Spec)$, then $\Spec \subseteq
  \CoInductive\mis$.
\end{prop}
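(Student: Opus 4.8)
The plan is to recognize Proposition~\ref{prop:cp} as the exact dual of the induction principle (Proposition~\ref{prop:ip}) and to derive it from the Knaster--Tarski fixed point theorem, exploiting the fact already noted above that $\InfOp\mis$ is a monotone endofunction on the complete lattice $\wp(\universe)$. The hypothesis $\Spec \subseteq \InfOp\mis(\Spec)$ states precisely that $\Spec$ is a \emph{post-fixed point} of $\InfOp\mis$, while $\CoInductive\mis$ is by definition the \emph{greatest} fixed point of $\InfOp\mis$. Knaster--Tarski guarantees that this greatest fixed point coincides with the union of all post-fixed points, hence is itself the greatest post-fixed point; containment of any single post-fixed point in it is then immediate.

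Concretely, I would first let $Z$ denote the union of all post-fixed points of $\InfOp\mis$, that is, all $X \subseteq \universe$ with $X \subseteq \InfOp\mis(X)$. The key step is to check that $Z$ is again a post-fixed point: for each such $X$ we have $X \subseteq Z$, so by monotonicity $X \subseteq \InfOp\mis(X) \subseteq \InfOp\mis(Z)$, and taking the union over all $X$ yields $Z \subseteq \InfOp\mis(Z)$. Applying $\InfOp\mis$ once more and using monotonicity shows that $\InfOp\mis(Z)$ is itself a post-fixed point, hence $\InfOp\mis(Z) \subseteq Z$; combined with the reverse inclusion this gives $Z = \InfOp\mis(Z)$. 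Since every fixed point is in particular a post-fixed point and is therefore contained in $Z$, we conclude that $Z$ is the greatest fixed point, i.e.\ $Z = \CoInductive\mis$.

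Finally, because the hypothesis makes $\Spec$ one of the post-fixed points whose union is $Z$, we obtain $\Spec \subseteq Z = \CoInductive\mis$, as required. I do not expect any genuine obstacle here: the only points demanding care are the correct direction of the inclusions (post-fixed versus pre-fixed points, dualizing Proposition~\ref{prop:ip}) and the observation that completeness of the lattice $\wp(\universe)$ is what makes the arbitrary union defining $Z$ legitimate. If one prefers not to re-prove Knaster--Tarski inline, the statement can instead be cited directly from the same source used for Proposition~\ref{prop:ip}, since the two propositions are the symmetric halves of that single theorem.
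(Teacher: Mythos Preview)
Your argument is correct: it is the standard Knaster--Tarski derivation of the coinduction principle, and the steps (showing the union of post-fixed points is itself a post-fixed point, then a fixed point, hence the greatest one) are carried out cleanly.

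The paper, however, does not prove this proposition at all; it simply states it and cites \cite[Corollary~2.4.3]{Sangiorgi11}, treating it as a well-known fact from the literature (exactly as it does for Proposition~\ref{prop:ip}). So your proposal does not diverge from the paper's proof so much as supply one where the paper chose to defer to a reference. Your final remark already anticipates this option, and that is precisely what the authors did.
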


The \emph{bounded coinduction principle}~\cite{AnconaDagninoZucca17} allows us
to prove the \emph{completeness} of a predicate defined by a generalized
inference system $\ple{\mis,\mcois}$. In this case, one needs to show not only
that $\Spec$ is consistent with respect to $\mis$, but also that $\Spec$ is
\emph{bounded} by the inductive interpretation of the inference system $\mis
\cup \mcois$. Formally:

\begin{prop}
  \label{prop:bcp}
  If $\Spec \subseteq \Inductive{\mis\cup\mcois}$ and
  $\Spec \subseteq \InfOp\mis(\Spec)$, then
  $\Spec\subseteq\FlexCo\mis\mcois$.
\end{prop}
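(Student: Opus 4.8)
The plan is to prove Proposition~\ref{prop:bcp}, the bounded coinduction principle, by reducing it to the ordinary coinduction principle (Proposition~\ref{prop:cp}) applied within a suitably restricted universe. Recall that $\FlexCo\mis\mcois$ is defined as the \emph{greatest} post-fixed point of $\InfOp\mis$ that is contained in $\Inductive{\mis\cup\mcois}$. So to show $\Spec \subseteq \FlexCo\mis\mcois$ it suffices to exhibit a set $Y$ with $\Spec \subseteq Y$, with $Y$ a post-fixed point of $\InfOp\mis$ (i.e.\ $Y \subseteq \InfOp\mis(Y)$), and with $Y \subseteq \Inductive{\mis\cup\mcois}$; then $Y$ is one of the competitors in the union defining the greatest such post-fixed point, hence $Y \subseteq \FlexCo\mis\mcois$ and therefore $\Spec \subseteq \FlexCo\mis\mcois$.

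The natural candidate is simply $Y \eqdef \Spec$ itself. First I would check the two hypotheses directly give what is needed. The hypothesis $\Spec \subseteq \InfOp\mis(\Spec)$ says precisely that $\Spec$ is a post-fixed point of $\InfOp\mis$, which is the consistency condition. The hypothesis $\Spec \subseteq \Inductive{\mis\cup\mcois}$ is precisely the boundedness condition, saying $\Spec$ lies inside the inductive interpretation of the rules-plus-corules. Thus $\Spec$ is by construction a post-fixed point of $\InfOp\mis$ included in $\Inductive{\mis\cup\mcois}$.

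Then the argument closes by the definition of $\FlexCo\mis\mcois$ as a greatest element. Since $\FlexCo\mis\mcois$ is the greatest post-fixed point of $\InfOp\mis$ contained in $\Inductive{\mis\cup\mcois}$, and $\Spec$ is \emph{a} post-fixed point of $\InfOp\mis$ contained in $\Inductive{\mis\cup\mcois}$, we obtain $\Spec \subseteq \FlexCo\mis\mcois$ immediately, as any such post-fixed point is below the greatest one. No Knaster--Tarski machinery beyond the existence of the greatest post-fixed point within the complete lattice restricted to subsets of $\Inductive{\mis\cup\mcois}$ is required; the monotonicity of $\InfOp\mis$ noted earlier guarantees this greatest element exists.

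The part that deserves the most care is not any calculation but the justification that the object $\FlexCo\mis\mcois$ is genuinely well defined as a greatest element and that "greatest post-fixed point included in $\Inductive{\mis\cup\mcois}$" behaves as an upper bound for every competing post-fixed point that sits inside the same bounding set. Concretely, one must confirm that the collection of post-fixed points of $\InfOp\mis$ contained in $\Inductive{\mis\cup\mcois}$ is closed under arbitrary unions, so that its union is again such a post-fixed point and hence the maximum; this follows from monotonicity of $\InfOp\mis$ (each $Y_k \subseteq \InfOp\mis(Y_k) \subseteq \InfOp\mis(\bigcup_k Y_k)$) together with the fact that a union of subsets of $\Inductive{\mis\cup\mcois}$ stays inside $\Inductive{\mis\cup\mcois}$. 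Once that observation is in place, the inclusion of $\Spec$ in this maximum is automatic, and the proof is essentially a one-line consequence of the two hypotheses. I expect this definitional bookkeeping about the greatest element to be the only genuine obstacle, the rest being a direct application of the two given inclusions.
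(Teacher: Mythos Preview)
Your argument is correct and is the standard unfolding of Definition~\ref{def:gis}: $\Spec$ is, by hypothesis, a post-fixed point of $\InfOp\mis$ contained in $\Inductive{\mis\cup\mcois}$, and $\FlexCo\mis\mcois$ is by definition the greatest such set, so $\Spec\subseteq\FlexCo\mis\mcois$. The extra paragraph verifying that the greatest element exists (closure of the family of bounded post-fixed points under arbitrary unions, by monotonicity of $\InfOp\mis$) is a reasonable sanity check, though it is not strictly needed once Definition~\ref{def:gis} is accepted as well-posed.

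Note, however, that the paper does not actually supply a proof of Proposition~\ref{prop:bcp}: it is quoted from~\cite{AnconaDagninoZucca17} as a known proof principle for \GISs, on the same footing as Propositions~\ref{prop:ip} and~\ref{prop:cp}. So there is no ``paper's own proof'' to compare against; your derivation is the expected one and matches the argument in the cited source.
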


Proving the boundedness of $\Spec$ amounts to proving the completeness of
$\mis\cup\mcois$ (inductively interpreted) with respect to $\Spec$. All of the
\GISs that we are going to discuss in
Sections~\ref{sec:termination}--\ref{sec:subtyping} are proven complete using
the bounded coinduction principle.

\begin{exa}[maximum of a colist]
  \label{ex:maxelem}
  A recurring example in the literature of \GISs is the predicate
  $\maxlist(l,x)$, asserting that $x$ is the maximum element of a colist
  (a possibly infinite list) $l$. If we consider the inference system
  \begin{equation}
    \label{eq:maxlist}
    \inferrule{ }{
      \maxlist(\cons\x\nil, \x)
    }
    \qquad
    \inferrule{
      \maxlist(l, y)
    }{
      \maxlist(\cons\x{l}, \max\set{x,y})
    }
  \end{equation}
  where $\nil$ denotes the empty list and $\cons{}{}$ is the constructor, we
  observe that neither of its two natural interpretation gives the intended
  meaning to $\maxlist$.
  Indeed, the \emph{inductive} interpretation of the rules restricts the set of
  derivable judgments to those for which there is a \emph{well-founded}
  derivation tree. In this case, the $\maxlist$ predicate is sound but not
  complete, since it does not hold for any infinite colist, even those for which
  the maximum exists. The \emph{coinductive} interpretation of these rules
  allows us to derive judgments by means of \emph{non-well-founded} derivation
  trees. In this case, the $\maxlist$ predicate is complete but not sound. In
  particular, it becomes possible to derive any judgment $\maxlist(l, x)$ where
  $x$ is greater than the elements of the colist, but is not an element of the
  colist. For example, if $l = \cons1{l}$ is an infinite colist of $1$'s, the
  infinite derivation
  \[
    \begin{prooftree}
      \[
        \smash\vdots\mathstrut
        \justifies
        \maxlist(l,2)
      \]
      \justifies
      \maxlist(l,2)
    \end{prooftree}
  \]
  allows us to conclude that $2$ is the maximum of $l$, even though $2$ does not occur in $l$.
  
  To repair the above inference system we can add the following \emph{coaxiom}:
  \begin{equation}
    \label{eq:comaxlist}
    \infercorule{ }{
      \maxlist(\cons\x{l}, x)
    }
  \end{equation}
  
  Read naively, this coaxiom seems to assert that the first element of any
  colist is also its maximum. In the context of a \GIS, its actual effect is
  that of ruling out those judgments $\maxlist(l,x)$ in which $x$ is \emph{not}
  an element of the colist.
  Indeed, the inductive interpretation of the rules~\eqref{eq:maxlist} and the
  coaxiom~\eqref{eq:comaxlist} is the space of judgments $\maxlist(l,x)$ such
  that $x$ is an element of $l$. Then, the generalized interpretation of the
  \GIS is defined as the coinductive interpretation of the
  rules~\eqref{eq:maxlist} within this space.
  In other words, the coaxiom~\eqref{eq:comaxlist} adds a \emph{well-foundedness
  element} to the derivability of a judgment $\maxlist(l,x)$, by requiring that
  $x$ must be found in --- at some finite distance from the head of --- the colist
  $l$.
  \eoe
\end{exa}

\begin{rem}
  The terminology we adopt for \GISs may be misleading because the word
  \emph{co}rule seems to suggest that the rule is interpreted
  \emph{co}inductively, whereas corules play a role in the inductive
  interpretation of \GISs (Definition~\ref{def:gis}). The confusion is
  reinforced by the choice of notation, whereby corules are distinguished by a
  double line. This notation has been sometimes used in the existing literature
  for denoting coinductively interpreted inference rules. In this paper we have
  chosen to stick with the terminology and the notation used in the works that
  have introduced \GISs~\cite{AnconaDagninoZucca17,Dagnino19}.
  \eor
\end{rem}

%%% Local Variables:
%%% mode: latex
%%% TeX-master: "main"
%%% End:

\section{Syntax and Semantics of Session Types}
\label{sec:types}

We assume a set $\Message$ of \emph{values} that can be exchanged in
communications. This set may include booleans, natural numbers, strings, and so
forth. Hereafter, we assume that $\Message$ contains at least \emph{two}
elements, otherwise branching protocols cannot be described and the theoretical
development that follows becomes trivial. We use $x$, $y$, $z$ to range over the
elements of $\Message$.

We define the set $\SessionType$ of \emph{session types} over $\Message$ using
coinduction, to account for the possibility that session types (and the
protocols they describe) may be infinite.

\begin{defi}[session types]
  \label{def:st}
  Session types $T$, $S$ are the possibly infinite trees coinductively generated
  by the productions
  \[
    \begin{array}{rrcl}
      \textbf{Polarity}       & p, q & \in & \set{{\In}, {\Out}} \\
      \textbf{Session type} & T, S & ::= & \TNil \mid p\set{x:T_x}_{x\in\Message} \\
    \end{array}
  \]
\end{defi}

A session type describes the valid sequences of input/output actions that can be
performed on a communication channel. We use polarities to discriminate between
input actions ($\In$) and output actions ($\Out$). Hereafter, we write $\co{p}$
for the opposite or dual polarity of $p$, that is $\co\In = \Out$ and $\co\Out =
\In$.
An \emph{input session type} $\In\set{x:T_x}_{x\in\Message}$ describes a channel
used first for receiving a message $x\in\Message$ and then according to the
continuation $T_x$.
Dually, an \emph{output session type} $\Out\set{x:T_x}_{x\in\Message}$ describes
a channel used first for sending a message $x\in\Message$ and then according to
$T_x$.
Note that input and output session types specify continuations for \emph{all}
possible values in the set $\Message$. The session type $\TNil$, which describes
an \emph{unusable} session channel, can be used as continuation for those values
that \emph{cannot} be received or sent. As we will see shortly, the presence of
$\TNil$ breaks the symmetry between inputs and outputs.

It is convenient to introduce some notation for presenting session types in a
more readable and familiar form.
Given a polarity $p$, a set $X \subseteq \Message$ of values and a family
$T_{x\in X}$ of session types, we let
\[
  p\set{x:T_x}_{x\in X} \eqdef p\left(\set{x:T_x}_{x\in X} \cup \set{x:\TNil}_{x\in\Message\setminus X}\right)
\]
so that we can omit explicit $\TNil$ continuations. As a special case when all
the continuations are $\TNil$, we write $\End[p\,]$ instead of $p\emptyset$.
Both $\End$ and $\Win$ describe session channels on which no further
communications may occur, although they differ slightly with respect to the
session types they can be safely combined with.  Describing terminated protocols
as degenerate cases of input/output session types reduces the amount of
constructors needed for their Agda representation (Section~\ref{sec:agda}).
Another common case for which we introduce a convenient notation is when the
continuations are the same, regardless of the value being exchanged: in these
cases, we write $pX.T$ instead of $p\set{x:T}_{x\in X}$. For example,
$\Out\Bool.\T$ describes a channel used for sending a boolean and then according
to $T$ and $\In\Nat.\S$ describes a channel used for receiving a natural number
and then according to $S$. We abbreviate $\set\x$ with $\x$ when no confusion
may arise. So we write $\Out\vtrue.\T$ instead of $\Out\set\vtrue.\T$.

Finally, we define a partial operation $+$ on session types such that
\[
  p\set{x:T_x}_{x\in X} + p\set{x:T_x}_{x\in Y}
  \eqdef
  p\set{x:T_x}_{x\in X\cup Y}
\]
when $X \cap Y = \emptyset$. For example, $\Out\vtrue.S_1 + \Out\vfalse.S_2$
describes a channel used first for sending a boolean value and then according to
$S_1$ or $S_2$ depending on the boolean value.
It it easy to see that $+$ is commutative and associative and that $\End[p\,]$
is the unit of $+$ when used for combining session types with polarity $p$. Note
that $T + S$ is undefined if the topmost polarities of $T$ and $S$ differ.
We assume that $+$ binds less tightly than the `$.$' in continuations.

We do not introduce any concrete syntax for specifying infinite session types.
Rather, we specify possibly infinite session types as solutions of equations of
the form $S = \cdots$ where the metavariable $S$ may also occur (guarded) on the
right-hand side of `$=$'. Guardedness guarantees that the session type $S$
satisfying such equation does exist and is unique~\cite{Courcelle83}.

\begin{exa}
  \label{ex:running}
  The session types $T_1$ and $S_1$ that satisfy the equations
  \[
    T_1 = \Out\vtrue.\Out\Nat.\T_1 + \Out\vfalse.\End
    \qquad
    S_1 = \Out\vtrue.\Out\NatPlus.\S_1 + \Out\vfalse.\End
  \]
  both describe a channel used for sending a boolean. If the boolean
  is $\vfalse$, the communication stops immediately ($\End$). If it
  is $\vtrue$, the channel is used for sending a natural number (a
  strictly positive one in $S_1$) and then according to $T_1$ or
  $S_1$ again. Notice how the structure of the protocol after the
  output of the boolean depends on the \emph{value} of the boolean.

  The session types $T_2$ and $S_2$ that satisfy the equations
  \[
    T_2 = \In\vtrue.\Out\Nat.\T_2 \branch \In\vfalse.\End
    \qquad
    S_2 = \In\vtrue.\Out\NatPlus.\S_2 \branch \In\vfalse.\End
  \]
  differ from $T_1$ and $S_1$ in that the channel they describe is
  used initially for \emph{receiving} a boolean.
  \eoe
\end{exa}

We define the operational semantics of session types by means of a
\emph{labeled transition system}. \emph{Labels}, ranged over by
$\actionA$, $\actionB$, $\actionC$, have either the form $\In\x$
(input of message $x$) or the form $\Out\x$ (output of message $x$).
Transitions $T \lred\action S$ are defined by the following axioms:
\begin{equation}
  \label{eq:lts}
  \inferrule[input]{\mathstrut}{
    \In\x.S + T \lred{\In\x} S
  }
  \qquad
  \inferrule[output]{\mathstrut}{
    \Out\x.S + T \lred{\Out\x} S
  }
  ~
  S \ne \TNil
\end{equation}

There is a fundamental asymmetry between send and receive operations: the act of
sending a message is \emph{active} --- the sender may choose the message to send
--- while the act of receiving a message is \emph{passive} --- the receiver cannot
cherry-pick the message being received.
We model this asymmetry with the side condition $S \ne \TNil$ in
\refrule{output} and the lack thereof in~\refrule{input}:
a process that uses a session channel according to
$\Out\set{x:T_x}_{x\in\Message}$ refrains from sending a message $x$ if $T_x =
\TNil$, namely if the channel becomes unusable by sending that particular
message, whereas a process that uses a session channel according to
$\In\set{x:T_x}_{x\in\Message}$ cannot decide which message $x$ it will receive,
but the session channel becomes unusable if an unexpected message arrives. The
technical reason for modeling this asymmetry is that it allows us to capture a
realistic communication semantics and will be discussed in more detailed in
Remark~\ref{rem:asymmetry}. For the time being, these transition rules allow us
to appreciate a little more the difference between $\Win$ and $\End$. While both
describe a session endpoint on which no further communications may occur, $\Win$
is ``more robust'' than $\End$ since it has no transitions, whereas $\End$ is
``more fragile'' than $\Win$ since it performs transitions, all of which lead to
$\TNil$. For this reason, we use $\Win$ to flag successful session termination
(Section~\ref{sec:compliance}), whereas $\End$ only means that the protocol has
ended.

To describe \emph{sequences} of consecutive transitions performed by
a session type we use another relation $\wlred\actions$ where
$\actions$ and $\actionsB$ range over strings of labels. As usual,
$\es$ denotes the empty string and juxtaposition denotes string
concatenation. The relation $\wlred\actions$ is the least one such
that $T \wlred\es T$ and if $T \lred\action S$ and
$S \wlred\actions R$, then $T \wlred{\action\actions} R$.

%%% Local Variables:
%%% mode: latex
%%% TeX-master: "main"
%%% End:

\section{Fair Termination}
\label{sec:termination}

We say that a session type is fairly terminating if it preserves the possibility
of reaching $\Win$ or $\End$ along all of its transitions that do not lead to
$\TNil$.  Fair termination of $T$ does not necessarily imply that there exists an
upper bound to the length of communications that follow the protocol $T$, but it
guarantees the absence of ``infinite loops'' whereby the communication is forced
to continue forever.

To formalize fair termination we need the notion of \emph{trace}, which is a
finite sequence of actions performed on a session channel while preserving
usability of the channel.

\begin{defi}[traces and maximal traces]
  \label{def:traces}
  The \emph{traces} of $T$ are defined as $\traces\T \eqdef \set{ \actions \mid
  \exists\S : T \wlred\actions S \ne \TNil }$.  We say that $\actions \in
  \traces\T$ is \emph{maximal} if $\actions\actionsB \in \traces\T$ implies
  $\actionsB = \es$.
\end{defi}

For example, we have $\traces\TNil = \emptyset$ and $\traces\Win = \traces\End =
\set\es$. Note that $\Win$ and $\End$ have the same traces but different
transitions (hence different behaviors).
A \emph{maximal trace} is a trace that cannot be extended any further. For
example $\es$ is a maximal trace of both $\Win$ and $\End$ but not of
$\Out\Bool.\End$ whereas $\Out\vtrue$ and $\Out\vfalse$ are maximal traces of
$\Out\Bool.\End$.

\begin{defi}[fair termination]
  \label{def:wt}
  We say that $T$ is \emph{fairly terminating} if, for every
  $\actionsA\in\traces\T$, there exists $\actionsB$ such that
  $\actionsA\actionsB \in \traces\T$ and $\actionsA\actionsB$ is maximal.
\end{defi}

\begin{exa}
  \label{ex:termination}
  All of the session types presented in Example~\ref{ex:running} are fairly
  terminating. The session type $R = \Out\Bool.R$, which describes a channel
  used for sending an infinite stream of boolean values, is not fairly
  terminating because no trace of $R$ can be extended to a maximal one. Note
  that also $R' \eqdef \Out\vtrue.R + \Out\vfalse.\Win$ is not fairly
  terminating, even though there is a path leading to $\Win$, because fair
  termination must be \emph{preserved} along all possible transitions of the
  session type, whereas $R' \xlred{\Out\vtrue} R$ and $R$ is not fairly
  terminating.
  Finally, $\TNil$ is trivially fairly terminating because it has no trace.
  \eoe
\end{exa}

\begin{table}
  \[
    \begin{array}{@{}c@{}}
      \inferrule[t-nil]{\mathstrut}{
        \terminates\TNil
      }
      \qquad
      \inferrule[t-all]{
        \terminates{T_x}~{}^{(\forall x\in\Message)}
      }{
        \terminates{p\set{x:T_x}_{x\in\Message}}
      }
      \qquad
      \infercorule[t-any]{
        \terminates{S}
      }{
        px.S + T
      }
      ~
      S \ne \TNil
    \end{array}
  \]
  \caption{
    \label{tab:ft}
    Generalized inference system $\gis{\is[T]}{\cois[T]}$ for fair termination.
    }
\end{table}

To find an inference system for fair termination observe that the set
$\FairlyTerminating$ of fairly terminating session types is the largest one that
satifies the following two properties:
\begin{enumerate}
  \item it must be possible to reach either $\Win$ or $\End$ from every $T \in
\FairlyTerminating \setminus \set\TNil$;
  \item the set $\FairlyTerminating$ must be closed by transitions, namely if $T
\in \FairlyTerminating$ and $T \lred\action S$ then $S \in \FairlyTerminating$.
\end{enumerate}
Neither of these two properties, taken in isolation, suffices to define
$\FairlyTerminating$: the session type $R'$ in Example~\ref{ex:termination}
enjoys property (1) but is not fairly terminating; the set $\SessionType$ is
obviously the largest one with property (2), but not every session type in it is
fairly terminating.
This suggests the definition of $\FairlyTerminating$ as the largest subset of
$\SessionType$ satisfying (2) and whose elements are \emph{bounded} by property
(1), which is precisely what corules allow us to specify.

Table~\ref{tab:ft} shows a \GIS $\gis{\is[T]}{\cois[T]}$ for fair termination,
where $\is[T]$ consists of all the (singly-lined) rules whereas $\cois[T]$
consists of all the (doubly-lined) corules (we will use this notation also in
the subsequent \GISs).
The axiom~\refrule{t-nil} indicates that $\TNil$ is fairly terminating in a
trivial way (it has no trace), while~\refrule{t-all} indicates that fair
termination is closed by all transitions. Note that these two rules, interpreted
coinductively, are satisfied by all session types, hence $\set{ \T \mid
\terminates\T \in \CoInductive{\is[T]} } = \SessionType$.

\begin{thm}
  \label{thm:termination}
  $T$ is fairly terminating if and only if $\terminates\T \in
  \FlexCo{\is[T]}{\cois[T]}$.
\end{thm}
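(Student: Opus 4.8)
The plan is to prove the two implications separately, relying on the bounded coinduction principle (Proposition~\ref{prop:bcp}) for the ``only if'' (completeness) direction and on the defining properties of $\FlexCo{\is[T]}{\cois[T]}$ as the greatest post-fixed point of $\InfOp{\is[T]}$ bounded by $\Inductive{\is[T]\cup\cois[T]}$ for the ``if'' (soundness) direction. The technical heart of both directions is a characterization of the \emph{inductive} interpretation of rules \emph{and} corules, namely the claim that $\terminates\T \in \Inductive{\is[T]\cup\cois[T]}$ holds iff $T = \TNil$ or $T$ can reach $\Win$ or $\End$, i.e.\ $T \wlred\actions S$ with $S \in \set{\Win,\End}$ for some $\actions$. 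Intuitively this is exactly property~(1): the corule \refrule{t-any} forces a finite path witnessing the reachability of a terminated state. Throughout I use that the transition system is deterministic, so a trace identifies a unique reachable state; that $\es \in \traces\T$ precisely when $T \ne \TNil$; and that a trace is maximal iff it leads to $\Win$ or $\End$, these being the only non-$\TNil$ states without usable transitions.

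For the characterization lemma, the inclusion from left to right proceeds by induction on the well-founded derivation of $\terminates\T$ in $\is[T]\cup\cois[T]$: the axiom \refrule{t-nil} gives $T = \TNil$; a final application of \refrule{t-all} to $T = p\set{x:T_x}_{x\in\Message}$ either has all continuations equal to $\TNil$, so $T \in \set{\Win,\End}$, or has some $T_{x_0} \ne \TNil$ which by the induction hypothesis reaches $\Win$ or $\End$, whence so does $T$ via $T \lred{px_0} T_{x_0}$; a final application of \refrule{t-any} to $T = px.S + T'$ with $S \ne \TNil$ gives, by the induction hypothesis, that $S$ reaches $\Win$ or $\End$, and then so does $T$ through $T \lred{px} S$. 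For the converse, given a non-$\TNil$ $T$ that reaches $\Win$ or $\End$ along a path of length $n$, I build a finite derivation by induction on $n$: when $n = 0$ the type is $\Win$ or $\End$ and \refrule{t-all} applies with all premises discharged by \refrule{t-nil}; when $n > 0$ I peel off the first transition $T = px_0.S + T' \lred{px_0} S$ towards the terminated state and apply \refrule{t-any}, whose premise $\terminates S$ is available by the induction hypothesis.

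For completeness I take $\Spec = \set{\terminates\T \mid T \in \FairlyTerminating}$ and verify the two hypotheses of Proposition~\ref{prop:bcp}. Boundedness, $\Spec \subseteq \Inductive{\is[T]\cup\cois[T]}$, is immediate from the lemma: a fairly terminating $T$ is either $\TNil$ or, since $\es \in \traces\T$, can extend $\es$ to a maximal trace and hence reaches $\Win$ or $\End$. Consistency, $\Spec \subseteq \InfOp{\is[T]}(\Spec)$, amounts to showing every fairly terminating $T$ is the conclusion of a rule of $\is[T]$ (not a corule) with fairly terminating premises: for $T = \TNil$ this is \refrule{t-nil}; for $T = p\set{x:T_x}_{x\in\Message}$ this is \refrule{t-all}, so I must check each $T_x$ is fairly terminating, which is the closure of $\FairlyTerminating$ under transitions (property~(2)). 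Those $T_x = \TNil$ are trivially fairly terminating; for $T_x \ne \TNil$ we have $T \lred\action T_x$ for the label $\action$ of that branch, so any $\actions \in \traces{T_x}$ yields $\action\actions \in \traces\T$, which by fair termination of $T$ extends to a maximal trace, and by determinism removing the prefix $\action$ gives a maximal extension of $\actions$ in $\traces{T_x}$.

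For soundness, let $G = \set{T \mid \terminates\T \in \FlexCo{\is[T]}{\cois[T]}}$. Since $\FlexCo{\is[T]}{\cois[T]}$ is a post-fixed point of $\InfOp{\is[T]}$, every non-$\TNil$ $T \in G$ is the conclusion of \refrule{t-all}---the only non-corule with a conclusion of this shape---so all its continuations lie in $G$; thus $G$ is closed under transitions. Moreover $\FlexCo{\is[T]}{\cois[T]} \subseteq \Inductive{\is[T]\cup\cois[T]}$, so by the lemma every non-$\TNil$ member of $G$ reaches $\Win$ or $\End$. Fair termination of $T \in G$ then follows: given $\actions \in \traces\T$ with $T \wlred\actions S \ne \TNil$, closure under transitions gives $S \in G$, and $S$ reaches $\Win$ or $\End$, providing the required maximal extension of $\actions$. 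I expect the main obstacle to be the characterization lemma, specifically reconciling the asymmetry between \refrule{t-all}, which quantifies over \emph{all} continuations, and \refrule{t-any}, which selects a \emph{single} branch, together with the careful bookkeeping of the role of $\TNil$ in both the transition relation and the base cases.
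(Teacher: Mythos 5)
Your proposal is correct and follows essentially the same route as the paper: bounded coinduction for the ``only if'' direction, and closure under transitions plus the well-founded derivation in $\is[T]\cup\cois[T]$ for the ``if'' direction. The only differences are presentational --- you factor out the characterization of $\Inductive{\is[T]\cup\cois[T]}$ as an explicit reachability lemma and spell out the consistency check (closure of $\FairlyTerminating$ under transitions) that the paper's sketch dispatches by noting that $\CoInductive{\is[T]}$ contains all session types.
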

\begin{proof}[Proof sketch]
  For the ``if'' part, suppose $\terminates\T \in \FlexCo{\is[T]}{\cois[T]}$ and
  consider a trace $\actions\in\traces\T$. That is, $T \wlred\actions S$ for
  some $S \ne \TNil$. Using~\refrule{t-all} we deduce $\terminates\S \in
  \FlexCo{\is[T]}{\cois[T]}$ by means of a simple induction on $\actions$. Now
  $\terminates\S \in \FlexCo{\is[T]}{\cois[T]}$ implies $\terminates\S \in
  \Inductive{\is[T] \cup \cois[T]}$ by Definition~\ref{def:gis}. Another
  induction on the (well-founded) derivation of this judgment, along with the
  witness message $x$ of~\refrule{t-any}, allows us to find $\actionsB$ such
  that $\actions\actionsB$ is a maximal trace of $T$.

  For the ``only if'' part, we apply the bounded coinduction principle (see
  Proposition~\ref{prop:bcp}). Since we have already argued that the coinductive
  interpretation of the GIS in Table~\ref{tab:ft} includes all session types, it
  suffices to show that $T$ fairly terminating implies $\terminates\T \in
  \Inductive{\is[T] \cup \cois[T]}$.
  From the assumption that $T$ is fairly terminating we deduce that there exists
  a maximal trace $\actions\in\traces{T}$. An induction on $\actions$ allows us
  to derive $\terminates{T}$ using repeated applications of~\refrule{t-any}, one
  for each action in $\actions$, topped by a single application of~\refrule{t-nil}.
\end{proof}

\begin{rem}
  \label{rem:termination}
  \newcommand{\len}[1]{\mathsf{len}(#1)}
  The notion of \emph{fair termination} we have given in Definition~\ref{def:wt}
  is easy to relate with the GIS in Table~\ref{tab:ft} but, in the existing
  literature, fair termination is usually formulated in a different way that
  justifies more naturally the fact that it is a termination property. The two
  formulations are equivalent, at least when we consider \emph{regular session
  types}, those whose tree is made of finitely many distinct subtrees.
  To elaborate, let a \emph{run} of $T$ be a sequence of transitions
  \[
    T = T_0 \lred{\action_1} T_1 \lred{\action_2} T_2 \lred{\action_3} \cdots
  \]
  such that no $T_i$ is $\TNil$ and it is said to be \emph{maximal} either if it
  is infinite or if the last session type in the sequence is either $\End$ or
  $\Win$.
  A run is \emph{fair}~\cite{Francez86,AptFrancezKatz87,GlabbeekHofner19} if,
  whenever some $S$ occurs infinitely often in it and $S \lred\action S' \ne
  \TNil$ is a transition from $S$ to $S'$, then this transition occurs
  infinitely often in the run. Intuitively, a fair run is one in which no
  transition that is enabled sufficiently (infinitely) often is discriminated
  against.

  It is not difficult to prove that $T$ is fairly terminating if and only if
  every maximal fair run of $T$ is finite. For the ``only if'' part, let
  \[
    \len\S \eqdef \min\set{ \actions \mid \text{$\actions$ is a maximal trace of $S$} }
  \]
  be the minimum length of any maximal trace of $S$, where we postulate that
  $\min\emptyset = \infty$.
  If $T$ is fairly terminating suppose, by contradiction, that $T$ has an
  infinite fair run $T = T_0 \lred{\action_1} T_1 \lred{\action_2} \cdots$. From
  the hypothesis that $T$ is fairly terminating we deduce that so is each $T_i$.
  Since $T$ is regular, there must be some subtree $S_0$ of $T$ that occurs
  infinitely often in the run. Such subtree cannot be $\End$ or $\Win$, since
  these session types have no successor in the run whereas $S_0$ has one. Among
  all the transitions of $S_0$, there must be one $S_0 \lred\action S_1$ such
  that $\len{S_1} < \len{S_0}$. From the hypothesis that the run is fair we
  deduce that $S_1$ occurs infinitely often in it. By repeating this argument we
  find an infinite sequence of session types $S_0, S_1, \dots$ such that
  $\len{S_{i+1}} < \len{S_i}$ for every $i$, which is absurd. We conclude that
  $T$ cannot have any infinite fair run.

  For the ``if'' part of the property we are proving, suppose that every maximal
  fair run of $T$ is finite and consider a trace $\actions\in\traces\T$. Then
  there exist $\action_1,\dots,\action_n$ and $T_1,\dots,T_n$ such that
  $\actions = \action_1\cdots\action_n$ and $T \lred{\action_1} T_1 \cdots
  \lred{\action_n} T_n$.
  It is a known fact that every finite run can be extended to a maximal fair run
  (this property is called \emph{machine closure}~\cite{Lamport00} or
  \emph{feasibility}~\cite{AptFrancezKatz87,GlabbeekHofner19}). Since we know
  that every maximal fair run of $T$ is finite, there exist $\actionB_1, \dots,
  \actionB_m$ and $S_1,\dots,S_m$ such that $T_n \lred{\actionB_1} S_1 \cdots
  \lred{\actionB_m} S_m \in \set{\End,\Win}$. Named $\actionsB$ the sequence
  $\actionB_1\cdots\actionB_m$, we conclude that $\actions\actionsB$ is a
  maximal trace of $T$.
  \eor
\end{rem}

%%% Local Variables:
%%% mode: latex
%%% TeX-master: "main"
%%% End:

\section{Compliance}
\label{sec:compliance}

In this section we define and characterize two \emph{compliance} relations for
session types, which formalize the ``successful'' interaction between a client
and a server connected by a session. The notion of ``successful interaction''
that we consider is biased towards client satisfaction, but see
Remark~\ref{rem:duality} below for a discussion about alternative notions.

To formalize compliance we need to model the evolution of a session as client
and server interact. To this aim, we represent a session as a pair
$\session\R\T$ where $\R$ describes the behavior of the client and $T$ that of
the server.  Sessions reduce according to the rule
\begin{equation}
  \label{eq:sred}
  \inferrule{ }{
    \session\R\T \red \session{R'}{S'}
  }
  ~
  \text{if $R \lred{\co\action} R'$ and $T \lred\action T'$}
\end{equation}
where $\co\action$ is the \emph{complementary action} of $\action$ defined by
$\co{p\x} = \co{p}\x$.  We extend $\co{\,\cdot\,}$ to traces in the obvious way
and we write $\wred$ for the reflexive, transitive closure of $\red$. We write
$\session\R\T \red {}$ if $\session\R\T \red \session{R'}{T'}$ for some $R'$ and
$T'$ and $\session\R\T \nred$ if not $\session\R\T\red$.

The first compliance relation that we consider requires that, if the interaction
in a session stops, it is because the client ``is satisfied'' and the server
``has not failed'' (recall that a session type can turn into $\TNil$ only if an
unexpected message is received). Formally:

\begin{defi}[compliance]
  \label{def:comp}
  We say that $R$ is \emph{compliant} with $T$ if
  $\session\R\T \wred \session{\R'}{\T'} \nred$ implies $\R' = \Win$
  and $\T' \neq \TNil$.
\end{defi}

This notion of compliance is an instance of \emph{safety property} in which the
invariant being preserved at any stage of the interaction is that either client
and server are able to synchronize further, or the client is satisfied and the
server has not failed.

The second compliance relation that we consider adds a \emph{liveness}
requirement namely that, no matter how long client and server have been
interacting with each other, it is always possible to reach a configuration in
which the client is satisfied and the server has not failed.

\begin{defi}[fair compliance]
  \label{def:fcomp}
  We say that $R$ is \emph{fairly compliant} with $T$ if $\session\R\T \wred
  \session{\R'}{\T'}$ implies $\session{\R'}{\T'} \wred \session\Win{\T''}$ with
  $\T'' \neq \TNil$.
\end{defi}

It is easy to show that fair compliance implies compliance, but there exist
compliant session types that are not fairly compliant, as illustrated in the
following example.

\begin{exa}
  \label{ex:fcomp}
  Recall Example~\ref{ex:running} and consider the session types $R_1$ and $R_2$
  such that
  \[
    R_1 = \In\vtrue.\In\Nat.R_1 \branch \In\vfalse.\Win
    \qquad
    R_2 = \Out\vtrue.(\In0.\Win \branch \In\NatPlus.R_2)
  \]

  Then $R_1$ is fairly compliant with both $T_1$ and $S_1$ and $R_2$ is
  compliant with both $T_2$ and $S_2$.
  Even if $S_1$ exhibits fewer behaviors compared to $T_1$ (it never sends $0$
  to the client), at the beginning of a new iteration it can always send
  $\vfalse$ and steer the interaction along a path that leads $R_1$ to success.
  On the other hand, $R_2$ is fairly compliant with $T_2$ but not with $S_2$. In
  this case, the client insists on sending $\vtrue$ to the server in hope to
  receive $0$, but while this is possible with the server $T_2$, the server
  $S_2$ only sends strictly positive numbers.

  This example also shows that fair termination of both client and server is not
  sufficient, in general, to guarantee fair compliance. Indeed, both $R_2$ and
  $S_2$ are fairly terminating, but they are not fairly compliant. The reason is
  that the sequences of actions leading to $\Win$ on the client side are not
  necessarily the same (complemented) traces that lead to $\End$ on the server
  side. Fair compliance takes into account the synchronizations that can
  actually occur between client and server.
  \eoe
\end{exa}

\begin{rem}
  \label{rem:asymmetry}
  With the above notions of compliance we can now better motivate the
  asymmetric modeling of (passive) inputs and (active) outputs in the labeled
  transition system of session types~\eqref{eq:lts}. Consider the session
  types $R = \Out\vtrue.\Win + \Out\vfalse.\Out\vfalse.\Win$ and $S =
  \In\vtrue.\End$. Note that $R$ describes a client that succeeds by either
  sending a single $\vtrue$ value or by sending two $\vfalse$ values in
  sequence, whereas $S$ describes a server that can only receive a single
  $\vtrue$ value. If we add the same side condition $S \ne \TNil$ also for~\refrule{input} then $R$ would be compliant with $S$. Indeed, the server
  would be unable to perform the $\In\vfalse$-labeled transition, so that the
  only synchronization possible between $R$ and $S$ would be the one in which
  $\vtrue$ is exchanged. In a sense, with the $S \ne \TNil$ side condition in~\refrule{input} we would be modeling a communication semantics in which
  client and server \emph{negotiate} the message to be exchanged depending on
  their respective capabilities. Without the side condition, the message to be
  exchanged is always chosen by the active part (the sender) and, if the
  passive part (the receiver) is unable to handle it, the receiver fails. The
  chosen asymmetric communication semantics is also key to induce a notion of
  (fair) subtyping that is \emph{covariant} with respect to inputs
  (Section~\ref{sec:subtyping}).
  \eor
\end{rem}

\begin{table}
  \[
    \begin{array}{@{}c@{}}
      \inferrule[c-success]{\mathstrut}{
        \compliance\Win\T
      }
      ~T \neq \TNil
      \qquad
      \infercorule[c-sync]{
        \compliance{S}{T}
      }{
        \compliance{px.S + S'}{\co{p}x.T + T'}
      }
      \\\\
      \inferrule[c-inp-out]{
        \compliance{S_x}{T_x}~{}^{(\forall x\in X)}
      }{
        \compliance{\In\set{x:S_x}_{x\in\Message}}{\Out\set{x:T_x}_{x\in X}}
      }
      ~X \ne \emptyset
      \qquad
      \inferrule[c-out-inp]{
        \compliance{S_x}{T_x}~{}^{(\forall x\in X)}
      }{
        \compliance{\Out\set{x:S_x}_{x\in X}}{\In\set{x:T_x}_{x\in\Message}}
      }
      ~X \ne \emptyset
    \end{array}
  \]
  \caption{
    \label{tab:compliance}
    Generalized inference system $\gis{\is[C]}{\cois[C]}$ for fair compliance.}
\end{table}

Table~\ref{tab:compliance} presents the \GIS $\gis{\is[C]}{\cois[C]}$ for fair
compliance. Intuitively, a derivable judgment $\compliance{S}{T}$ means that the
client $S$ is (fairly) compliant with the server $T$.
Rule~\refrule{c-success} relates a satisfied client with a non-failed server.
Rules~\refrule{c-inp-out} and~\refrule{c-out-inp} require that, no matter which
message is exchanged between client and server, the respective continuations are
still fairly compliant. The side condition $X \ne \emptyset$ guarantees progress
by making sure that the sender is capable of sending at least one message.
As we will see, the coinductive interpretation of $\is[C]$, which consists of
these three rules, completely characterizes compliance
(Definition~\ref{def:comp}).
However, these rules do not guarantee that the interaction between client and
server can always reach a successful configuration as required by
Definition~\ref{def:fcomp}. For this, the corule~\refrule{c-sync} is essential.
Indeed, a judgment $\compliance{S}{T}$ that is derivable according to the
generalized interpretation of the \GIS $\gis{\is[C]}{\cois[C]}$ must admit a
well-founded derivation tree also in the inference system $\is[C] \cup
\cois[C]$. Since~\refrule{c-success} is the only axiom in this inference system,
finding a well-founded derivation tree in $\is[C] \cup \cois[C]$ boils down to
finding a (finite) path of synchronizations from $\session{S}{T}$ to a
successful configuration in which $S$ has reduced to $\Win$ and $T$ has reduced
to a session type other than $\TNil$.
Rule~\refrule{c-sync} allows us to find such a path by choosing the appropriate
messages exchanged between client and server.
In general, one can observe a dicotomy between the rules~\refrule{c-inp-out} and
\refrule{c-out-inp} having a universal flavor (they have many premises
corresponding to every possible interaction between client and server) and the
corule~\refrule{c-sync} having an existential flavor (it has one premise
corresponding to a particular interaction between client and server). This is
consistent with the fact that we use rules to express a safety property (which
is meant to be \emph{invariant} hence preserved by all the possible
interactions) and we use the corule to help us expressing a liveness property.
This pattern in the usage of rules and corules is quite common in \GISs because
of their interpretation and it can also be observed in the \GIS for fair
termination (Table~\ref{tab:ft}) and, to some extent, in that for fair subtyping
as well (Section~\ref{sec:subtyping}).

\begin{exa}
  \label{ex:fcomp:gis}
  Consider again $R_2 = \Out\vtrue.(\In0.\Win \branch \In\NatPlus.R_2)$ from
  Example~\ref{ex:fcomp} and $T_2 = \In\vtrue.\Out\Nat.\T_2 \branch
  \In\vfalse.\End$ from Example~\ref{ex:running}.
  In order to show that $\compliance{R_2}{T_2} \in \FlexCo{\is[C]}{\cois[C]}$ we
  have to find a possibly infinite derivation for $\compliance{R_2}{T_2}$ using
  the rules in $\is[C]$ as well as finite derivations for all of the judgments
  occurring in this derivation in $\is[C] \cup \cois[C]$.
  
  For the former we have
  \[
    \begin{prooftree}
      \[
        \[
          \justifies
          \compliance\Win{T_2}
          \using\refrule{c-success}
        \]
        \[
          \smash\vdots\mathstrut
          \justifies
          \compliance{R_2}{T_2}
        \]
        \justifies
        \compliance{
          \In0.\Win + \In\NatPlus.R_2
        }{
          \Out\Nat.T_2
        }
        \using\refrule{c-inp-out}
      \]
      \justifies
      \compliance{R_2}{T_2}
      \using\refrule{c-out-inp}
    \end{prooftree}
  \]
  where, in the application of~\refrule{c-inp-out}, we have collapsed all of the
  premises corresponding to the $\NatPlus$ messages into a single premise. Thus,
  we have proved $\compliance{R_2}{T_2} \in \CoInductive{\is[C]}$.
  Note the three judgments occurring in the above derivation tree. The finite
  derivation
  \[
    \begin{prooftree}
      \[
        \[
          \justifies
          \compliance\Win{T_2}
          \using\refrule{c-success}
        \]
        \justifies
        \compliance{
          \In0.\Win + \In\NatPlus.R_2
        }{
          \Out\Nat.T_2
        }
        \using\refrule{c-sync}
      \]
      \justifies
      \compliance{R_2}{T_2}
      \using\refrule{c-sync}
    \end{prooftree}
  \]
  shows that $\compliance{R_2}{T_2} \in \Inductive{\is[C] \cup \cois[C]}$. We
  conclude $\compliance{R_2}{T_2} \in \FlexCo{\is[C]}{\cois[C]}$.
  \eoe
\end{exa}

Observe that the corule~\refrule{c-sync} is at once essential and unsound. For
example, without it we would be able to derive the judgment
$\compliance{\R_2}{\S_2}$ despite the fact that $R_2$ is not fair compliant with
$S_2$ (Example~\ref{ex:fcomp}). At the same time, if we treated
\refrule{c-sync} as a plain rule, we would be able to derive the judgment
$\compliance{\Out\Nat.\Win}{\In0.\End}$ despite the reduction
$\session{\Out\Nat.\Win}{\In0.\End} \red \session\Win\TNil$ since \emph{there
exists} an interaction that leads to the successful configuration
$\session\Win\End$ (if the client sends $0$) but none of the others does.

\begin{thm}[compliance]
  \label{thm:compliance}
  For every $R, T \in \SessionType$, the following properties hold:
  \begin{enumerate}
  \item $R$ is compliant with $T$ if and only if $\compliance\R\T \in
    \CoInductive{\is[C]}$;
  \item $R$ is fairly compliant with $T$ if and only if
    $\compliance\R\T \in \FlexCo{\is[C]}{\cois[C]}$.
  \end{enumerate}
\end{thm}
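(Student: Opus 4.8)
The plan is to prove the two statements separately, since part (1) concerns a purely coinductive characterization while part (2) requires the full machinery of the generalized interpretation. Part (1) is the conceptually simpler one: I would establish that compliance (Definition~\ref{def:comp}) coincides with $\CoInductive{\is[C]}$. For the ``only if'' direction I would apply the coinduction principle (Proposition~\ref{prop:cp}), taking $\Spec$ to be the set of judgments $\compliance\R\T$ where $R$ is compliant with $T$, and showing $\Spec \subseteq \InfOp{\is[C]}(\Spec)$. This amounts to a case analysis on the topmost polarities of $R$ and $T$: if both are $\Win$/non-$\TNil$ in a terminated configuration we use \refrule{c-success}; otherwise a synchronization is possible and I verify that compliance is preserved in every continuation, matching \refrule{c-inp-out} or \refrule{c-out-inp}. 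The key observation is that compliance, being a safety invariant, is closed under the single-step reduction~\eqref{eq:sred}, so the continuations appearing as premises are again compliant. For the ``if'' direction I would use the induction principle (Proposition~\ref{prop:ip}) on a suitable specification, or argue directly that any judgment in $\CoInductive{\is[C]}$ satisfies Definition~\ref{def:comp} by showing that each reduction step of $\session\R\T$ corresponds to descending along the (non-well-founded) derivation tree, so that a stuck configuration must arise from an instance of \refrule{c-success}.

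For part (2), the ``only if'' direction (completeness) is where I would invoke the \emph{bounded coinduction principle} (Proposition~\ref{prop:bcp}), which is the whole point of using a \GIS here. I would set $\Spec = \set{\compliance\R\T \mid \text{$R$ is fairly compliant with $T$}}$ and discharge the two hypotheses. Consistency, $\Spec \subseteq \InfOp{\is[C]}(\Spec)$, follows much as in part~(1): fair compliance is preserved under reduction (since the defining condition quantifies over all reachable configurations via $\wred$), so the continuations remain fairly compliant and fit the premises of \refrule{c-inp-out}/\refrule{c-out-inp}, while a non-reducing configuration forces $R' = \Win$ and $T' \ne \TNil$, matching \refrule{c-success}. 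Boundedness, $\Spec \subseteq \Inductive{\is[C] \cup \cois[C]}$, is the crucial and more delicate step: from fair compliance of $R$ with $T$ applied to the configuration $\session\R\T$ itself, I obtain $\session\R\T \wred \session\Win{\T''}$ with $\T'' \ne \TNil$, i.e.\ a \emph{finite} sequence of synchronizations reaching success. I would then perform induction on the length of this reducing sequence, using one application of the corule \refrule{c-sync} for each synchronization step and topping the derivation off with \refrule{c-success}. This yields a well-founded derivation of $\compliance\R\T$ in $\is[C] \cup \cois[C]$.

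For the ``if'' direction of part (2) (soundness), suppose $\compliance\R\T \in \FlexCo{\is[C]}{\cois[C]}$ and consider any reduction $\session\R\T \wred \session{\R'}{\T'}$. By Definition~\ref{def:gis}, the generalized interpretation is a post-fixed point of $\InfOp{\is[C]}$ included in $\Inductive{\is[C] \cup \cois[C]}$. A straightforward induction on the length of this reduction (mirroring the first step of the proof of Theorem~\ref{thm:termination}) shows that $\compliance{\R'}{\T'} \in \FlexCo{\is[C]}{\cois[C]}$ as well, so it suffices to exhibit one successful continuation from $\session{\R'}{\T'}$. Since $\compliance{\R'}{\T'} \in \Inductive{\is[C] \cup \cois[C]}$, I would induct on its well-founded derivation: the only axiom available is \refrule{c-success}, so the derivation must terminate by reaching a judgment $\compliance\Win{\T''}$ with $\T'' \ne \TNil$, and reading the intervening applications of \refrule{c-sync} (together with the \refrule{c-inp-out}/\refrule{c-out-inp} steps) off as synchronizations yields exactly a reduction $\session{\R'}{\T'} \wred \session\Win{\T''}$.

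The main obstacle I anticipate is the boundedness step of part~(2): I must be careful to extract from the abstract fair-compliance condition a genuinely finite witnessing reduction and then faithfully translate each synchronization into a corule application, ensuring that the side conditions ($X \ne \emptyset$ implicit in the existence of a synchronizing message, and $T'' \ne \TNil$ for \refrule{c-success}) are met at every step. A subtlety worth flagging is the interplay between the existential \refrule{c-sync} and the universal \refrule{c-inp-out}/\refrule{c-out-inp}: boundedness only needs \emph{one} path to success through \refrule{c-sync}, whereas consistency must preserve fair compliance along \emph{every} branch, and reconciling these two readings of the same judgment within the single relation $\FlexCo{\is[C]}{\cois[C]}$ is exactly what the \GIS framework is designed to handle.
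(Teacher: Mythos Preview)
Your proposal is correct and follows essentially the same approach as the paper: for item~(2) you use the bounded coinduction principle with precisely the same decomposition into consistency (fair compliance is preserved by reduction, so the premises of \refrule{c-inp-out}/\refrule{c-out-inp} stay in $\Spec$) and boundedness (induction on the finite reduction to success, building a derivation out of \refrule{c-sync} applications topped by \refrule{c-success}), and for soundness you argue subject reduction on $\FlexCo{\is[C]}{\cois[C]}$ followed by induction on the well-founded derivation in $\is[C]\cup\cois[C]$ to extract the successful path. The paper does not spell out item~(1), but your coinduction-principle argument there is the expected one.
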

\begin{proof}[Proof sketch]
  We sketch the proof of item~(2), which is the most interesting one. In
  Section~\ref{sec:agda} we describe the full proof formalized in Agda.
  For the ``if'' part, suppose that $\compliance\R\T \in
  \FlexCo{\is[C]}{\cois[C]}$ and consider a reduction $\session\R\T \wred
  \session{R'}{T'}$. An induction on the length of this reduction, along with~\refrule{c-inp-out} and~\refrule{c-out-inp}, allows us to deduce
  $\compliance{R'}{T'} \in \FlexCo{\is[C]}{\cois[C]}$. Then we have
  $\compliance{R'}{T'} \in \Inductive{\is[C] \cup \cois[C]}$ by
  Definition~\ref{def:gis}. An induction on this (well-founded) derivation
  allows us to find a reduction $\session{R'}{T'} \wred \session\Win{T''}$ such
  that $T'' \ne \TNil$.

  For the ``only if'' part we apply the bounded coinduction principle
  (Proposition~\ref{prop:bcp}). Concerning consistency, we show that whenever
  $R$ is fairly compliant with $T$ we have that $\compliance{R}{T}$ is the
  conclusion of a rule in Table~\ref{tab:compliance} whose premises are pairs of
  fairly compliant session types. Indeed, from the hypothesis that $R$ is fairly
  compliant with $T$ we deduce that there exists a derivation $\session{R}{T}
  \wred \session\Win{T'}$ for some $T'\ne\TNil$. A case analysis on the shape of
  $R$ and $T$ allows us to deduce that either $R = \Win$ and $T = T' \ne \TNil$,
  in which case the axiom~\refrule{c-success} applies, or that $R$ and $T$ must be
  input/output session types with opposite polarities such that the sender has
  at least one non-$\TNil$ continuation and whose reducts are still fairly
  compliant (because fair compliance is preserved by reductions). Then either~\refrule{c-inp-out} or~\refrule{c-out-inp} applies.
  Concerning boundedness, we do an induction on the reduction $\session{R}{T}
  \wred \session\Win{T'}$ to build a well-founded tree made of a suitable number
  of applications of~\refrule{c-sync} topped by a single application of~\refrule{c-success}.
\end{proof}

%%% Local Variables:
%%% mode: latex
%%% TeX-master: "main"
%%% End:

\section{Subtyping}
\label{sec:subtyping}

The notions of compliance given in Section~\ref{sec:compliance} induce
corresponding semantic notions of subtyping that can be used to define a safe
substitution principle for session types~\cite{LiskovWing94}. Intuitively, $\T$
is a subtype of $\S$ if any client that successfully interacts with $\T$ does so
with $\S$ as well. The key idea is the same used for defining testing
equivalences for processes~\cite{DeNicolaHennessy84,Hennessy88,RensinkVogler07},
except that we use the term ``client'' instead of the term ``test''. 

\begin{defi}[subtyping]
  \label{def:sub}
  We say that $T$ is a \emph{subtype} of $S$ if $R$ compliant with
  $T$ implies $R$ compliant with $S$ for every $R$.
\end{defi}

\begin{defi}[fair subtyping]
  \label{def:fsub}
  We say that $T$ is a \emph{fair subtype} of $S$ if $R$ fairly compliant with
  $T$ implies $R$ fairly compliant with $S$ for every $R$.
\end{defi}

According to these definitions, when $T$ is a (fair) subtype of $S$, a process
that behaves according to $T$ can be replaced by a process that behaves
according to $S$ without compromising (fair) compliance with the clients of $T$.
At first sight this substitution principle appears to be just the opposite of
the expected/intended one, whereby it is safe to use a session channel of type
$T$ where a session channel of type $S$ is expected if $T$ is a subtype of $S$.
The mismatch is only apparent, however, and can be explained by looking
carefully at the entities being replaced in the substitution principles recalled
above (processes in one case, session channels in the other). The interested
reader may refer to Gay~\cite{Gay16} for a study of these two different, yet
related viewpoints.
What matters here is that the above notions of subtyping are ``correct by
definition'' but do not provide any hint as to the shape of two session types
$T$ and $S$ that are related by (fair) subtyping. This problem is well known in
the semantic approaches for defining subtyping
relations~\cite{FrischCastagnaBenzaken08,BernardiHennessy16} as well as in the
aforementioned testing theories for
processes~\cite{DeNicolaHennessy84,Hennessy88,RensinkVogler07}, which the two
definitions above are directly inspired from.
Therefore, it is of paramount importance to provide equivalent characterizations
of these relations, particularly in the form of inference systems.

\begin{table}
  \[
    \begin{array}{@{}c@{}}
      \inferrule[s-nil]{\mathstrut}{
        \subt\TNil\T
      }
      \qquad
      \inferrule[s-end]{\mathstrut}{
        \subt{\End[p\,]}\T
      }
      ~T \neq \TNil
      \qquad
      \infercorule[s-converge]{
        \forall\actionsA\in\traces\T\setminus\traces\S:
        \exists\actionsB \leq \actionsA, x \in \Message:
        \subt{T(\actionsB\Out\x)}{S(\actionsB\Out\x)}
      }{
        \subt\T\S
      }
      \\\\
      \inferrule[s-inp]{
        \subt{S_x}{T_x}~{}^{(\forall x\in X)}
      }{
        \subt{\In\set{x:S_x}_{x\in X}}{\In\set{x:T_x}_{x\in Y}}
      }
      ~
      \emptyset \neq X \subseteq Y
      \qquad
      \inferrule[s-out]{
        \subt{S_x}{T_x}~{}^{(\forall x\in Y)}
      }{
        \subt{\Out\set{x:S_x}_{x\in X}}{\Out\set{x:T_x}_{x\in Y}}
      }
      ~
      \emptyset \neq Y \subseteq X
    \end{array}
  \]
  \caption{
    \label{tab:subt}
    Generalized inference system $\gis{\is[F]}{\cois[F]}$ for fair
    subtyping.  }
\end{table}

The \GIS $\gis{\is[F]}{\cois[F]}$ for fair subtyping is shown in
Table~\ref{tab:subt} and described hereafter.
Rule~\refrule{s-nil} states that $\TNil$ is the least element of the subtyping
preorder, which is justified by the fact that no client successfully interacts
with $\TNil$.
Rule~\refrule{s-end} establishes that $\End[\In]$ and $\End[\Out]$ are the least
elements among all session types different from $\TNil$. In our theory, this
relation arises from the asymmetric form of compliance we have considered: a
server $\End[p\,]$ satisfies only $\Win$, which successfully interacts with any
server different from $\TNil$.
Rules~\refrule{s-inp} and~\refrule{s-out} indicate that inputs are covariant and
outputs are contravariant. That is, it is safe to replace a server with another
one that receives a superset of messages ($X \subseteq Y$ in~\refrule{s-inp})
and, dually, it is safe to replace a server with another one that sends a subset
of messages ($Y \subseteq X$ in~\refrule{s-out}). The side condition $Y \neq
\emptyset$ in~\refrule{s-out} is important to preserve progress: if the server
that behaves according to the larger session type is unable to send any message,
the client may get stuck waiting for a message that is never sent. On the other
hand, the side condition $X \neq \emptyset$ is unnecessary from a purely
technical view point, since the rule~\refrule{s-inp} without this side condition
is subsumed by~\refrule{s-end}. We have included the side condition to minimize
the overlap between different rules and for symmetry with respect to
\refrule{s-out}.
Overall, these rules are aligned with those of the subtyping relation for
session types given by Gay and Hole~\cite{GayHole05} (see also
Remark~\ref{rem:duality}).

To discuss the corule~\refrule{s-converge} that characterizes fair subtyping we
need to introduce one last piece of notation concerning session types.

\begin{defi}[residual of a session type]
  \label{def:residual}
  Given a session type $T$ and a trace $\actions$ of $T$ we write
  $T(\actions)$ for the \emph{residual} of $T$ \emph{after}
  $\actions$, namely for the session type $S$ such that
  $T \wlred\actions S$.
\end{defi}

The notion of residual is well defined since session types are
\emph{deterministic}: if $T \wlred\actions S_1$ and $T \wlred\actions S_2$, then
$S_1 = S_2$.
It is implied that $T(\actions)$ is undefined if
$\actions\not\in\traces\T$.

For the sake of presentation we describe the corule~\refrule{s-converge}
incrementally, showing how it contributes to the soundness proof of the GIS in
Table~\ref{tab:subt}. In doing so, it helps bearing in mind that the relation
$\subt\T\S$ is meant to preserve fair compliance (Definition~\ref{def:fcomp}),
namely the possibility that any client of $\T$ can terminate successfully when
interacting with $S$.
As a first approximation observe that, when the traces of $\T$ are included in
the traces of $\S$, the corule~\refrule{s-converge} boils down to the following
coaxiom:
\[
  \infercorule{ }{
    \subt\T\S
  }
  ~\traces\T \subseteq \traces\S
\]

Now consider a client $R$ that is fair compliant with $T$. It must
be the case that $\session\R\T \wred \session\Win{T'}$ for some
$T' \neq \TNil$, namely that $\R \wlred{\co\actions} \Win$ and
$T \wlred\actions T'$ for some sequence $\actions$ of actions. The
side condition $\traces\T \subseteq \traces\S$ ensures that
$\actions$ is also a trace of $S$, therefore
$\session\R\S \wred \session\Win{S'}$ for the $S' \neq \TNil$ such
that $S \wlred\actions S'$.
In general, we know from rule~\refrule{s-out} that $S$ may perform \emph{fewer}
outputs than $T$, hence not every trace of $T$ is necessarily a trace of $S$.
Writing $\leq$ for the prefix order relation on traces, the premises
\[
  \forall\actionsA\in\traces\T\setminus\traces\S:
  \exists\actionsB \leq \actionsA, x \in \Message:
  \subt{T(\actionsB\Out\x)}{S(\actionsB\Out\x)}
\]
of~\refrule{s-converge} make sure that, for every trace $\actionsA$ of $T$ that
is not a trace of $S$, there exists a common prefix $\actionsB$ of $T$ and $S$
and an output action $\Out\x$ shared by both $T(\actionsB)$ and $S(\actionsB)$
such that the residuals of $T$ and $S$ after $\actionsB\Out\x$ are \emph{one
level closer}, in the proof tree for $\subt\T\S$, to the residuals of $T$ and
$S$ for which trace inclusion holds. The requirement that $\actionsB$ be
followed by an \emph{output} $\Out\x$ is essential, since the client $R$
\emph{must} be able to accept all the outputs of $T$.

Note that the corule is unsound in general. For instance,
$\subt{\Out0.\End}{\Out\Nat.\End}$ is derivable by~\refrule{s-converge} since
$\traces{\Out0.\End} \subseteq \traces{\Out\Nat.\End}$, but $\Out0.\End$ is
\emph{not} a subtype of $\Out\Nat.\End$.

\begin{exa}
  \label{ex:subt}
  Consider once again the session types $T_i$ and $S_i$ of Example~\ref{ex:running}.
  The (infinite) derivation
  \begin{equation}
    \label{eq:der1}
    \begin{prooftree}
      \[
        \[
          \vdots
          \justifies
          \subt{T_1}{S_1}
        \]
        \justifies
        \subt{\Out\Nat.T_1}{\Out\NatPlus.S_1}
        \using\refrule{s-out}
      \]
      \quad
      \[
        \justifies
        \subt\End\End
        \using\refrule{s-end}
      \]
      \justifies
      \subt{T_1}{S_1}
      \using\refrule{s-out}
    \end{prooftree}
  \end{equation}
  proves that $\subt{T_1}{S_1} \in \CoInductive{\is[F]}$ and the (infinite) derivation
  \begin{equation}
    \label{eq:der2}
    \begin{prooftree}
      \[
        \[
          \vdots
          \justifies
          \subt{T_2}{S_2}
        \]
        \justifies
        \subt{\Out\Nat.T_2}{\Out\NatPlus.S_2}
        \using\refrule{s-out}
      \]
      \quad
      \[
        \justifies
        \subt\End\End
        \using\refrule{s-end}
      \]
      \justifies
      \subt{T_2}{S_2}
      \using\refrule{s-inp}
    \end{prooftree}
  \end{equation}
  proves that $\subt{T_2}{S_2} \in \CoInductive{\is[F]}$.
  In order to derive $\subt{\T_i}{\S_i}$ in the \GIS $\gis{\is[F]}{\cois[F]}$ we
  must find a well-founded proof tree of $\subt\T\S$ in $\is[F] \cup \cois[F]$
  and the only hope to do so is by means of~\refrule{s-converge}, since $T_i$
  and $S_i$ share traces of arbitrary length.
  Observe that every trace $\actionsA$ of $T_1$ that is not a trace
  of $S_1$ has the form $(\Out\vtrue\Out{p_k})^k\Out\vtrue\Out0\dots$
  where $p_k \in \NatPlus$. Thus, it suffices to take
  $\actionsB = \es$ and $x = 0$, noted that
  $T_1(\Out0) = S_1(\Out0) = \End$, to derive
  \[
    \begin{prooftree}
      \[
        \Justifies
        \subt\End\End
        \using\refrule{s-converge}
      \]
      \Justifies
      \subt{T_1}{S_1}
      \using\refrule{s-converge}
    \end{prooftree}
  \]
  On the other hand, every trace $\actionsA \in \traces{T_2} \setminus
  \traces{S_2}$ has the form $(\In\vtrue\Out{p_k})^k\In\vtrue\Out0\dots$ where
  $p_k \in \NatPlus$.  All the prefixes of such traces that are followed by an
  output and are shared by both $T_2$ and $S_2$ have the form
  $(\In\vtrue\Out{p_k})^k\In\vtrue$ where $p_k \in \NatPlus$, and
  $T_2(\actionsB\Out{p}) = T_2$ and $S_2(\actionsB\Out{p}) = S_2$ for all such
  prefixes and $p \in \NatPlus$. It follows that we are unable to derive
  $\subt{T_2}{S_2}$ with a well-founded proof tree in $\is[F] \cup \cois[F]$.
  This is consistent with the fact that, in Example~\ref{ex:fcomp}, we have
  found a client $R_2$ that is fairly compliant with $T_2$ but not with $S_2$.
  Intuitively, $R_2$ insists on poking the server waiting to receive $0$. This
  may happen with $T_2$, but not with $S_2$.
  In the case of $T_1$ and $S_1$ no such client can exist, since the server may
  decide to interrupt the interaction at any time by sending a $\vfalse$ message
  to the client.
  \eoe
\end{exa}

Example~\ref{ex:subt} also shows that fair subtyping is a
\emph{context-sensitive} relation in that the applicability of a rule for
deriving $\subt\T\S$ may depend on the context in which $T$ and $S$ occur.
Indeed, in the infinite derivation~\eqref{eq:der1} the rule~\refrule{s-out} is
used infinitely many times to relate the output session types $\Out\Nat.T_1$ and
$\Out\NatPlus.S_1$. In this context, rule~\refrule{s-out} can be applied
harmlessly.
On the contrary, the infinite derivation~\eqref{eq:der2} is isomorphic to the
previous one with the difference that some applications of~\refrule{s-out} have
been replaced by applications of~\refrule{s-inp}. Here too~\refrule{s-out} is
used infinitely many times, but this time to relate the output session types
$\Out\Nat.T_2$ and $\Out\NatPlus.S_2$. This derivation allows us to prove
$\subt{T_2}{S_2} \in \CoInductive{\is[F]}$, but not $\subt{T_2}{S_2} \in
\FlexCo{\is[F]}{\cois[F]}$, because~\refrule{s-out} removes the $0$ output from
$S_2$ that a client of $T_2$ may depend upon.

\begin{rem}
  Part of the reason why rule~\refrule{s-converge} is so contrived and hard to
  understand is that the property it enforces is fundamentally \emph{non-local}
  and therefore difficult to express in terms of immediate subtrees of a session
  type.
  To better illustrate the point, consider the following alternative set of
  corules meant to replace~\refrule{s-converge} in Table~\ref{tab:subt}:
  \[
    \infercorule[co-inc]{
      \mathstrut
    }{
      \subt\T\S
    }
    ~ \traces\T \subseteq \traces\S
    \qquad
    \infercorule[co-inp]{
      \subt{S_x}{T_x}~{}^{(\forall x\in\Message)}
    }{
      \subt{\In\set{x:S_x}_{x\in\Message}}{\In\set{x:T_x}_{x\in\Message}}
    }
    \qquad
    \infercorule[co-out]{
      \subt{S}{T}
    }{
      \subt{\Out x.S + S'}{\Out x.T + T'}
    }
  \]

  It is easy to see that these rules provide a sound approximation of~\refrule{s-converge}, but they are not complete. Indeed, consider the session
  types $T = \In\vtrue.T \branch \In\vfalse.(\Out\vtrue.\End +
  \Out\vfalse.\End)$ and $S = \In\vtrue.S \branch \In\vfalse.\Out\vtrue.\End$.
  We have $\subt\T\S$ and yet $\isubt\T\S$ cannot be proved with the above corules: it is not possible to prove $\isubt\T\S$ using~\refrule{co-inc} because $\traces\T \not\subseteq \traces\S$. If, on the other hand, we insist on visiting both branches of the topmost input as required by~\refrule{co-inp}, we end up requiring a proof of $\isubt\T\S$ in order to derive $\isubt\T\S$.
  \eor
\end{rem}

\begin{thm}
  \label{thm:sub}
  For every $T, S \in \SessionType$ the following properties hold:
  \begin{enumerate}
  \item $T$ is a subtype of $S$ if and only if
    $\subt\T\S \in \CoInductive{\is[F]}$;
  \item $T$ is a fair subtype of $S$ if and only if
    $\subt\T\S \in \FlexCo{\is[F]}{\cois[F]}$.
  \end{enumerate}
\end{thm}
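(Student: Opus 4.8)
The plan is to derive both items from the characterizations of (fair) compliance in Theorem~\ref{thm:compliance}, so that the \emph{safety} content is handled coinductively on $\is[C]$ and $\is[F]$, while the \emph{liveness} content of item~(2) is isolated in the corule~\refrule{s-converge}. I would prove item~(1) first and then bootstrap item~(2). For the ``if'' part of item~(1) I would apply the coinduction principle (Proposition~\ref{prop:cp}) on $\is[C]$ to the relation
\[
  \relation = \set{\ \compliance\R\S \mid \exists T:\ \compliance\R\T \in \CoInductive{\is[C]},\ \subt\T\S \in \CoInductive{\is[F]}\ },
\]
showing $\relation \subseteq \InfOp{\is[C]}(\relation)$ by a case analysis on the rules justifying the two hypotheses, indexed by the topmost polarities of $R$, $T$ and $S$. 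The key observation is that the covariance of~\refrule{s-inp} and the contravariance of~\refrule{s-out} line up exactly with the premises of~\refrule{c-inp-out} and~\refrule{c-out-inp}, so continuations stay in $\relation$; the base case $R=\Win$ uses that $\subt\T\S$ with $T\neq\TNil$ forces $S\neq\TNil$. Then $\relation\subseteq\CoInductive{\is[C]}$ together with Theorem~\ref{thm:compliance}(1) yields that every client compliant with $T$ is compliant with $S$.

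For the ``only if'' part of item~(1) I would apply Proposition~\ref{prop:cp} to the semantic subtyping relation. Given $T$ a subtype of $S$, a case analysis on the shape of $T$ identifies the applicable rule of $\is[F]$, the non-trivial cases being driven by discriminating clients: the client $\Win$ forces $S\neq\TNil$ when $T=\End[p\,]$, justifying~\refrule{s-end}; a client $\Out\x.\Win$ that sends a chosen $\x$ forces, when $T$ is a genuine input, that $S$ is an input with a larger domain ($X\subseteq Y$ in~\refrule{s-inp}); dually, a receiving client that accepts exactly the outputs of $T$ forces, when $T$ is a genuine output, that $S$ is an output with a smaller \emph{nonempty} domain ($\emptyset\neq Y\subseteq X$ in~\refrule{s-out}); in both cases prefixing the client with the matching action pushes subtyping onto the continuations, giving the premises in the relation.

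For item~(2) the safety skeleton is inherited from item~(1). For the ``if'' part I would argue fair compliance directly. A preliminary observation is that, if $R$ is compliant with $T$ and $\subt\T\S\in\FlexCo{\is[F]}{\cois[F]}\subseteq\CoInductive{\is[F]}$ (so $T$ is a subtype of $S$ by item~(1)), then every trace realizable in $\session\R\S$ is already a trace of $T$: at server-input steps the client can only send messages $T$ accepts, and at server-output steps $S$ sends a subset of the outputs of $T$. Since fair compliance and the relations are preserved along reductions, it then suffices to exhibit a single success path from the root $\session\R\S$. Fair compliance of $R$ with $T$ gives $\session\R\T\wred\session\Win{T''}$ along some $\actionsA\in\traces\T$; if $\actionsA\in\traces\S$ we are done, otherwise the well-founded derivation of $\subt\T\S$ in $\is[F]\cup\cois[F]$ supplies, through~\refrule{s-converge}, a prefix $\actionsB$ and an output $\Out\x$ whose residuals $\subt{T(\actionsB\Out\x)}{S(\actionsB\Out\x)}$ are strictly closer, in that derivation, to a pair enjoying trace inclusion; steering client and server along $\actionsB\Out\x$ and recursing on the well-founded structure yields the desired path on the $S$ side. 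For the ``only if'' part I would invoke the bounded coinduction principle (Proposition~\ref{prop:bcp}): consistency essentially replays the shape analysis of item~(1) with fair compliance in place of compliance (taking care of the existence of suitably fairly compliant, i.e.\ viable, clients), and boundedness requires showing that $T$ a fair subtype of $S$ implies $\subt\T\S\in\Inductive{\is[F]\cup\cois[F]}$.

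That boundedness step is where I expect the main obstacle, for precisely the reason the authors flag~\refrule{s-converge} as \emph{non-local}. It splits into two tasks. First, I must establish the convergence premises: for every $\actionsA\in\traces\T\setminus\traces\S$ there exist $\actionsB\le\actionsA$ and $\x\in\Message$ with $\subt{T(\actionsB\Out\x)}{S(\actionsB\Out\x)}$ related and closer to trace inclusion. The natural route is contrapositive---a persistent failure of convergence would let me assemble a client that keeps some surviving trace of $T$ off the traces of $S$, hence fairly compliant with $T$ but not with $S$, contradicting fair subtyping. Second, I must verify that the induced applications of~\refrule{s-converge} (interleaved with~\refrule{s-inp} and~\refrule{s-out}) assemble into a genuinely \emph{well-founded} tree, which calls for a decreasing measure on $(T,S)$ recording the distance to trace inclusion along outputs; proving this measure well-founded exactly when $T$ is a fair subtype of $S$ is the crux of the whole argument.
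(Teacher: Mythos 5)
Your proposal is correct and follows essentially the same route as the paper: item~(2) ``if'' by transferring reductions of $\session\R\S$ back onto $\session\R\T$ and then walking the well-founded $\is[F]\cup\cois[F]$ derivation via \refrule{s-converge} to reach success, and item~(2) ``only if'' by bounded coinduction with consistency established through discriminating clients and boundedness established contrapositively by building a client that is fairly compliant with $T$ but not with $S$. The only organizational difference is in the boundedness step you flag as the crux: rather than exhibiting an explicit decreasing measure, the paper first proves $\Inductive{\is[F]\cup\cois[F]}=\Inductive{\cois[F]}$ (so only \refrule{s-converge} matters inductively), characterizes the negation of convergence as a coinductive ``divergence'' predicate, and then obtains membership in $\Inductive{\cois[F]}$ non-constructively in a single proof by contradiction, which subsumes both of the tasks you separate out.
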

\begin{proof}[Proof sketch]
  As usual we focus on item~(2), which is the most interesting property.
  For the ``if'' part, we consider an arbitrary $R$ that fairly complies with
  $T$ and show that it fairly complies with $S$ as well. More specifically, we
  consider a reduction $\session{R}{S} \wred \session{R'}{S'}$ and show that it
  can be extended so as to achieve client satisfaction.
  The first step is to ``unzip'' this reduction into $R \wlred{\co\actions} R'$
  and $S \wlred\actions S'$ for some string $\actions$ of actions. Then, we show
  by induction on $\actions$ that there exists $T'$ such that $\subt{T'}{S'} \in
  \FlexCo{\is[F]}{\cois[F]}$ and $T \wlred\actions T'$, using the hypothesis
  $\subt{T}{S} \in \CoInductive{\is[F]}$ and the hypothesis that $R$ complies
  with $T$. This means that $R$ and $T$ may synchronize just like $R$ and $S$,
  obtaining a reduction $\session{R}{T} \wred \session{R'}{T'}$. At this point
  the existence of the reduction $\session{R'}{S'} \wred \session\Win{S''}$ is
  proved using the arguments in the discussion of rule~\refrule{s-converge}
  given earlier.

  For the ``only if'' part we use once again the bounded coinduction principle.
  In particular, we use the hypothesis that $T$ is a fair subtype of $S$ to show
  that $T$ and $S$ must have one of the forms in the conclusions of the rules in
  Table~\ref{tab:subt}. This proof is done by cases on the shape of $T$,
  constructing a canonical client of $T$ that must succeed with $S$ as well.
  Then, the coinduction principle allows us to conclude that $\subt{T}{S} \in
  \CoInductive{\is[F]}$.
  The fact that $\subt{T}{S} \in \Inductive{\is[F] \cup \cois[F]}$ also holds is
  by far the most intricate step of the proof. First of all, we establish that
  $\Inductive{\is[F] \cup \cois[F]} = \Inductive{\cois[F]}$. That is, we
  establish that rule~\refrule{s-converge} subsumes all the rules in $\is[F]$
  when they are inductively interpreted. Then, we provide a characterization of
  the \emph{negation} of~\refrule{s-converge}, which we call divergence. At this
  point we proceed by contradiction: under the hypothesis that $\subt{T}{S} \in
  \CoInductive{\is[F]}$ and that $T$ and $S$ ``diverge'', we are able to
  corecursively define a \emph{discriminating} client $R$ that fairly complies
  with $T$ but not with $S$. This contradicts the hypothesis that $T$ is a fair
  subtype of $S$ and proves, albeit in a non-constructive way, that $\subt{T}{S}
  \in \Inductive{\cois[F]}$ as requested.
\end{proof}

\begin{rem}
  \label{rem:duality}
  Most session type theories adopt a symmetric form of session type
  compatibility whereby client and server are required to terminate
  the interaction at the same time.
  It is easy to define a notion of \emph{symmetric compliance} (also known as
  \emph{peer compliance}~\cite{BernardiHennessy16}) by turning $T' \ne \TNil$
  into $T' = \End$ in Definition~\ref{def:comp}. The subtyping relation induced
  by symmetric compliance has essentially the same characterization of
  Definition~\ref{def:sub}, except that the axiom~\refrule{s-end} is replaced by
  the more familiar $\subt{\End[p\,]}{\End[q\,]}$~\cite{GayHole05}.
  On the other hand, the analogous change in Definition~\ref{def:fcomp} has much
  deeper consequences: the requirement that client and server must end the
  interaction at the same time induces a large family of session types that are
  syntactically very different, but semantically equivalent. For example, the
  session types $T$ and $S$ such that $T = \In\Nat.T$ and $S = \Out\Bool.S$,
  which describe completely unrelated protocols, would be equivalent for the
  simple reason that no client successfully interacts with them (they are not
  fairly terminating, since they do not contain any occurrence of $\End[]$).
  We have not investigated the existence of a \GIS for fair subtyping induced by
  symmetric fair compliance. A partial characterization (which however requires
  various auxiliary relations) is given by Padovani~\cite{Padovani16}.
  \eor
\end{rem}

%%% Local Variables:
%%% mode: latex
%%% TeX-master: "main"
%%% End:

\newcommand{\linkrepo}[1]{%
  \href{https://github.com/boystrange/FairSubtypingAgda/blob/main/src/#1}{\texttt{#1}}%
}
\renewcommand{\AgdaFontStyle}[1]{\texttt{#1}}
\renewcommand{\AgdaKeywordFontStyle}[1]{\texttt{#1}}
\renewcommand{\AgdaBoundFontStyle}[1]{\texttt{#1}}
\renewcommand{\AgdaUnderscore}{\char`_}

\begin{code}[hide]%
\>[0]\AgdaSymbol{\{-\#}\AgdaSpace{}%
\AgdaKeyword{OPTIONS}\AgdaSpace{}%
\AgdaPragma{--guardedness}\AgdaSpace{}%
\AgdaPragma{--sized-types}\AgdaSpace{}%
\AgdaSymbol{\#-\}}\<%
\\
\\[\AgdaEmptyExtraSkip]%
\>[0]\AgdaKeyword{import}\AgdaSpace{}%
\AgdaModule{Level}\<%
\\
\>[0]\AgdaKeyword{open}\AgdaSpace{}%
\AgdaKeyword{import}\AgdaSpace{}%
\AgdaModule{Data.Bool}\AgdaSpace{}%
\AgdaKeyword{using}\AgdaSpace{}%
\AgdaSymbol{(}\AgdaDatatype{Bool}\AgdaSymbol{;}\AgdaSpace{}%
\AgdaInductiveConstructor{true}\AgdaSymbol{;}\AgdaSpace{}%
\AgdaInductiveConstructor{false}\AgdaSymbol{)}\<%
\\
\>[0]\AgdaKeyword{import}\AgdaSpace{}%
\AgdaModule{Data.Fin}\AgdaSpace{}%
\AgdaKeyword{using}\AgdaSpace{}%
\AgdaSymbol{(}\AgdaInductiveConstructor{zero}\AgdaSymbol{;}\AgdaSpace{}%
\AgdaInductiveConstructor{suc}\AgdaSymbol{)}\<%
\\
\>[0]\AgdaKeyword{open}\AgdaSpace{}%
\AgdaKeyword{import}\AgdaSpace{}%
\AgdaModule{Data.Empty}\AgdaSpace{}%
\AgdaKeyword{using}\AgdaSpace{}%
\AgdaSymbol{(}\AgdaDatatype{⊥}\AgdaSymbol{;}\AgdaSpace{}%
\AgdaFunction{⊥-elim}\AgdaSymbol{)}\<%
\\
\>[0]\AgdaKeyword{open}\AgdaSpace{}%
\AgdaKeyword{import}\AgdaSpace{}%
\AgdaModule{Data.Nat}\AgdaSpace{}%
\AgdaKeyword{using}\AgdaSpace{}%
\AgdaSymbol{(}\AgdaDatatype{ℕ}\AgdaSymbol{;}\AgdaSpace{}%
\AgdaInductiveConstructor{zero}\AgdaSymbol{;}\AgdaSpace{}%
\AgdaInductiveConstructor{suc}\AgdaSymbol{)}\<%
\\
\>[0]\AgdaKeyword{open}\AgdaSpace{}%
\AgdaKeyword{import}\AgdaSpace{}%
\AgdaModule{Data.Vec}\AgdaSpace{}%
\AgdaKeyword{using}\AgdaSpace{}%
\AgdaSymbol{(}\AgdaInductiveConstructor{[]}\AgdaSymbol{;}\AgdaSpace{}%
\AgdaOperator{\AgdaInductiveConstructor{\AgdaUnderscore{}∷\AgdaUnderscore{}}}\AgdaSymbol{)}\<%
\\
\>[0]\AgdaKeyword{open}\AgdaSpace{}%
\AgdaKeyword{import}\AgdaSpace{}%
\AgdaModule{Data.Sum}\AgdaSpace{}%
\AgdaKeyword{using}\AgdaSpace{}%
\AgdaSymbol{(}\AgdaInductiveConstructor{inj₁}\AgdaSymbol{;}\AgdaSpace{}%
\AgdaInductiveConstructor{inj₂}\AgdaSymbol{)}\<%
\\
\>[0]\AgdaKeyword{open}\AgdaSpace{}%
\AgdaKeyword{import}\AgdaSpace{}%
\AgdaModule{Data.Product}\<%
\\
\>[0]\AgdaKeyword{open}\AgdaSpace{}%
\AgdaKeyword{import}\AgdaSpace{}%
\AgdaModule{Relation.Unary}\AgdaSpace{}%
\AgdaKeyword{using}\AgdaSpace{}%
\AgdaSymbol{(}\AgdaFunction{Pred}\AgdaSymbol{;}\AgdaSpace{}%
\AgdaOperator{\AgdaFunction{\AgdaUnderscore{}∈\AgdaUnderscore{}}}\AgdaSymbol{;}\AgdaSpace{}%
\AgdaOperator{\AgdaFunction{\AgdaUnderscore{}⊆\AgdaUnderscore{}}}\AgdaSymbol{;}\AgdaSpace{}%
\AgdaOperator{\AgdaFunction{\AgdaUnderscore{}∩\AgdaUnderscore{}}}\AgdaSymbol{;}\AgdaSpace{}%
\AgdaFunction{Empty}\AgdaSymbol{;}\AgdaSpace{}%
\AgdaFunction{Satisfiable}\AgdaSymbol{)}\<%
\\
\>[0]\AgdaKeyword{open}\AgdaSpace{}%
\AgdaKeyword{import}\AgdaSpace{}%
\AgdaModule{Relation.Binary.Construct.Closure.ReflexiveTransitive}\AgdaSpace{}%
\AgdaKeyword{using}\AgdaSpace{}%
\AgdaSymbol{(}\AgdaDatatype{Star}\AgdaSymbol{;}\AgdaSpace{}%
\AgdaInductiveConstructor{ε}\AgdaSymbol{;}\AgdaSpace{}%
\AgdaOperator{\AgdaInductiveConstructor{\AgdaUnderscore{}◅\AgdaUnderscore{}}}\AgdaSymbol{)}\<%
\\
\>[0]\AgdaKeyword{open}\AgdaSpace{}%
\AgdaKeyword{import}\AgdaSpace{}%
\AgdaModule{Relation.Binary.PropositionalEquality}\AgdaSpace{}%
\AgdaKeyword{using}\AgdaSpace{}%
\AgdaSymbol{(}\AgdaInductiveConstructor{refl}\AgdaSymbol{)}\<%
\\
\\[\AgdaEmptyExtraSkip]%
\>[0]\AgdaKeyword{open}\AgdaSpace{}%
\AgdaKeyword{import}\AgdaSpace{}%
\AgdaModule{is-lib.InfSys}\AgdaSpace{}%
\AgdaKeyword{renaming}\AgdaSpace{}%
\AgdaSymbol{(}\AgdaOperator{\AgdaFunction{FCoInd⟦\AgdaUnderscore{},\AgdaUnderscore{}⟧}}\AgdaSpace{}%
\AgdaSymbol{to}\AgdaSpace{}%
\AgdaOperator{\AgdaFunction{Gen⟦\AgdaUnderscore{},\AgdaUnderscore{}⟧}}\AgdaSymbol{)}\<%
\\
\\[\AgdaEmptyExtraSkip]%
\>[0]\AgdaComment{--\ postulate\ 𝕍\ :\ Set}\<%
\\
\>[0]\AgdaKeyword{data}\AgdaSpace{}%
\AgdaDatatype{𝕍}\AgdaSpace{}%
\AgdaSymbol{:}\AgdaSpace{}%
\AgdaPrimitive{Set}\AgdaSpace{}%
\AgdaKeyword{where}\<%
\\
\>[0][@{}l@{\AgdaIndent{0}}]%
\>[2]\AgdaInductiveConstructor{nat}%
\>[8]\AgdaSymbol{:}\AgdaSpace{}%
\AgdaDatatype{ℕ}\AgdaSpace{}%
\AgdaSymbol{→}\AgdaSpace{}%
\AgdaDatatype{𝕍}\<%
\\
\>[2]\AgdaInductiveConstructor{bool}%
\>[8]\AgdaSymbol{:}\AgdaSpace{}%
\AgdaDatatype{Bool}\AgdaSpace{}%
\AgdaSymbol{→}\AgdaSpace{}%
\AgdaDatatype{𝕍}\<%
\end{code}

\section{Agda Formalization of Fair Compliance}
\label{sec:agda}

In this section we provide a walkthrough of the Agda formalization
of fair compliance (Section~\ref{sec:compliance}). Among all the
properties we have formalized, we focus on fair compliance because
it is sufficiently representative but also accessible. The
formalization of fair termination is simpler, whereas that of fair
subtyping is substantially more involved. In this section we assume
that the reader has some acquaintance with Agda.  The
code presented here and the full development~\cite{CicconePadovani21} have been
checked using Agda 2.6.2.1 and \texttt{agda-stdlib} 1.7.1.

\subsection{Representation of session types}

We postulate the existence of $\Message$, representing the set of
values that can be exchanged in communications.

\medskip
\noindent
\begin{AgdaAlign}%
\AgdaKeyword{postulate} \AgdaDatatype{𝕍} : \AgdaDatatype{Set}
\end{AgdaAlign}
\medskip

The actual Agda formalization is parametric on an
arbitrary set $\Message$, the only requirement being that $\Message$
must be equipped with a decidable notion of equality. We will make
some assumptions on the nature of $\Message$ when we present
specific examples.
To begin the formalization, we declare the two data types we use to represent session types. Because these types are mutually recursive, we declare them in advance so that we can later refer to them from within the definition of each.
\begin{code}%
\>[0]\AgdaKeyword{data}%
\>[9]\AgdaDatatype{SessionType}\AgdaSpace{}%
\AgdaSymbol{:}\AgdaSpace{}%
\AgdaPrimitive{Set}\<%
\\
\>[0]\AgdaKeyword{record}%
\>[8]\AgdaRecord{∞SessionType}\AgdaSpace{}%
\AgdaSymbol{:}\AgdaSpace{}%
\AgdaPrimitive{Set}\<%
\end{code}

A key design choice of our formalization is the representation of continuations $\set{x:S_x}_{x\in\Message}$, which may have infinitely many branches if $\Message$ is infinite. We represent continuations in Agda as \emph{total functions} from $\Message$ to session types, thus:
\begin{code}%
\>[0]\AgdaFunction{Continuation}\AgdaSpace{}%
\AgdaSymbol{:}\AgdaSpace{}%
\AgdaPrimitive{Set}\<%
\\
\>[0]\AgdaFunction{Continuation}\AgdaSpace{}%
\AgdaSymbol{=}\AgdaSpace{}%
\AgdaDatatype{𝕍}\AgdaSpace{}%
\AgdaSymbol{→}\AgdaSpace{}%
\AgdaRecord{∞SessionType}\<%
\end{code}

The $\AgdaDatatype{SessionType}$ data type provides three constructors corresponding to the three forms of a session type (Definition~\ref{def:st}):
\begin{code}%
\>[0]\AgdaKeyword{data}\AgdaSpace{}%
\AgdaDatatype{SessionType}\AgdaSpace{}%
\AgdaKeyword{where}\<%
\\
\>[0][@{}l@{\AgdaIndent{0}}]%
\>[2]\AgdaInductiveConstructor{nil}%
\>[11]\AgdaSymbol{:}\AgdaSpace{}%
\AgdaDatatype{SessionType}\<%
\\
\>[2]\AgdaInductiveConstructor{inp}\AgdaSpace{}%
\AgdaInductiveConstructor{out}%
\>[11]\AgdaSymbol{:}\AgdaSpace{}%
\AgdaFunction{Continuation}\AgdaSpace{}%
\AgdaSymbol{→}\AgdaSpace{}%
\AgdaDatatype{SessionType}\<%
\end{code}

The $\AgdaDatatype{∞SessionType}$ wraps $\AgdaDatatype{SessionType}$
within a coinductive record, so as to make it possible to represent
\emph{infinite} session types. The record has just one field
$\AgdaField{force}$ that can be used to access the wrapped session
type. By opening the record, we make the $\AgdaField{force}$ field
publicly accessible without qualifiers.
\begin{code}%
\>[0]\AgdaKeyword{record}\AgdaSpace{}%
\AgdaRecord{∞SessionType}\AgdaSpace{}%
\AgdaKeyword{where}\<%
\\
\>[0][@{}l@{\AgdaIndent{0}}]%
\>[2]\AgdaKeyword{coinductive}\<%
\\
\>[2]\AgdaKeyword{field}\AgdaSpace{}%
\AgdaField{force}\AgdaSpace{}%
\AgdaSymbol{:}\AgdaSpace{}%
\AgdaDatatype{SessionType}\<%
\\
\>[0]\AgdaKeyword{open}\AgdaSpace{}%
\AgdaModule{∞SessionType}\AgdaSpace{}%
\AgdaKeyword{public}\<%
\end{code}

As an example, consider once again the session type $T_1$ discussed in Example~\ref{ex:running}:
\[
  T_1 = \Out\vtrue.\Out\Nat.\T_1 + \Out\vfalse.\End
\]
In this case we assume that $\Message$ is the disjoint sum of $\Nat$
(the set of natural numbers) and $\Bool$ (the set of boolean values)
with constructors $\AgdaInductiveConstructor{nat}$ and
$\AgdaInductiveConstructor{bool}$. It is useful to also define once
and for all the \emph{continuation} $\AgdaFunction{empty}$, which
maps every message to $\TNil$:
\begin{code}%
\>[0]\AgdaFunction{empty}\AgdaSpace{}%
\AgdaSymbol{:}\AgdaSpace{}%
\AgdaFunction{Continuation}\<%
\\
\>[0]\AgdaFunction{empty}\AgdaSpace{}%
\AgdaSymbol{\AgdaUnderscore{}}\AgdaSpace{}%
\AgdaSymbol{.}\AgdaField{force}\AgdaSpace{}%
\AgdaSymbol{=}\AgdaSpace{}%
\AgdaInductiveConstructor{nil}\<%
\end{code}

Now, the Agda encoding of $T_1$ is shown below:

\begin{code}%
\>[0]\AgdaFunction{T₁}\AgdaSpace{}%
\AgdaSymbol{:}\AgdaSpace{}%
\AgdaDatatype{SessionType}\<%
\\
\>[0]\AgdaFunction{T₁}\AgdaSpace{}%
\AgdaSymbol{=}\AgdaSpace{}%
\AgdaInductiveConstructor{out}\AgdaSpace{}%
\AgdaFunction{f}\<%
\\
\>[0][@{}l@{\AgdaIndent{0}}]%
\>[2]\AgdaKeyword{where}\<%
\\
\>[2][@{}l@{\AgdaIndent{0}}]%
\>[4]\AgdaFunction{f}\AgdaSpace{}%
\AgdaFunction{g}\AgdaSpace{}%
\AgdaSymbol{:}\AgdaSpace{}%
\AgdaFunction{Continuation}\<%
\\
\\[\AgdaEmptyExtraSkip]%
\>[4]\AgdaFunction{f}\AgdaSpace{}%
\AgdaSymbol{(}\AgdaInductiveConstructor{nat}\AgdaSpace{}%
\AgdaSymbol{\AgdaUnderscore{})}%
\>[20]\AgdaSymbol{.}\AgdaField{force}\AgdaSpace{}%
\AgdaSymbol{=}\AgdaSpace{}%
\AgdaInductiveConstructor{nil}\<%
\\
\>[4]\AgdaFunction{f}\AgdaSpace{}%
\AgdaSymbol{(}\AgdaInductiveConstructor{bool}\AgdaSpace{}%
\AgdaInductiveConstructor{true}\AgdaSymbol{)}%
\>[20]\AgdaSymbol{.}\AgdaField{force}\AgdaSpace{}%
\AgdaSymbol{=}\AgdaSpace{}%
\AgdaInductiveConstructor{out}\AgdaSpace{}%
\AgdaFunction{g}\<%
\\
\>[4]\AgdaFunction{f}\AgdaSpace{}%
\AgdaSymbol{(}\AgdaInductiveConstructor{bool}\AgdaSpace{}%
\AgdaInductiveConstructor{false}\AgdaSymbol{)}%
\>[20]\AgdaSymbol{.}\AgdaField{force}\AgdaSpace{}%
\AgdaSymbol{=}\AgdaSpace{}%
\AgdaInductiveConstructor{inp}\AgdaSpace{}%
\AgdaFunction{empty}\<%
\\
\\[\AgdaEmptyExtraSkip]%
\>[4]\AgdaFunction{g}\AgdaSpace{}%
\AgdaSymbol{(}\AgdaInductiveConstructor{nat}\AgdaSpace{}%
\AgdaSymbol{\AgdaUnderscore{})}%
\>[20]\AgdaSymbol{.}\AgdaField{force}\AgdaSpace{}%
\AgdaSymbol{=}\AgdaSpace{}%
\AgdaInductiveConstructor{out}\AgdaSpace{}%
\AgdaFunction{f}\<%
\\
\>[4]\AgdaFunction{g}\AgdaSpace{}%
\AgdaSymbol{(}\AgdaInductiveConstructor{bool}\AgdaSpace{}%
\AgdaSymbol{\AgdaUnderscore{})}%
\>[20]\AgdaSymbol{.}\AgdaField{force}\AgdaSpace{}%
\AgdaSymbol{=}\AgdaSpace{}%
\AgdaInductiveConstructor{nil}\<%
\end{code}

The continuations $\AgdaFunction{f}$ and
$\AgdaFunction{g}$ are defined using pattern matching on the
message argument and using copattern matching to specify the value of the $\AgdaField{force}$ field of the resulting coinductive record. They represent the two stages of the protocol:
$\AgdaFunction{f}$ allows sending a boolean (but no natural number)
and, depending on the boolean, it continues as $\AgdaFunction{g}$ or
it terminates; $\AgdaFunction{g}$ allows sending a natural number
(but no boolean) and continues as $\AgdaFunction{f}$.
This example illustrates a simple form of \emph{dependency} whereby
the structure of a communication protocol may depend on the content
of previously exchanged messages. The fact that we use Agda to write
continuations means that we can model sophisticated forms of
dependencies that are found only in the most advanced theories of
dependent session types~\cite{ToninhoCairesPfenning11,ToninhoYoshida18,ThiemannVasconcelos20,CicconePadovani20}. For
example, below is the Agda encoding of a session type
\[
  \Out{(n:\Nat)}.\underbrace{\Out\Bool\dots\Out\Bool}_n.\End[\Out]
\]
describing a channel used for sending a natural number $n$ followed by $n$ boolean values:
\begin{code}%
\>[0]\AgdaFunction{BoolVector}\AgdaSpace{}%
\AgdaSymbol{:}\AgdaSpace{}%
\AgdaDatatype{SessionType}\<%
\\
\>[0]\AgdaFunction{BoolVector}\AgdaSpace{}%
\AgdaSymbol{=}\AgdaSpace{}%
\AgdaInductiveConstructor{out}\AgdaSpace{}%
\AgdaFunction{g}\<%
\\
\>[0][@{}l@{\AgdaIndent{0}}]%
\>[2]\AgdaKeyword{where}\<%
\\
\>[2][@{}l@{\AgdaIndent{0}}]%
\>[4]\AgdaFunction{f}\AgdaSpace{}%
\AgdaSymbol{:}\AgdaSpace{}%
\AgdaDatatype{ℕ}\AgdaSpace{}%
\AgdaSymbol{→}\AgdaSpace{}%
\AgdaFunction{Continuation}\<%
\\
\>[4]\AgdaFunction{f}\AgdaSpace{}%
\AgdaInductiveConstructor{zero}%
\>[15]\AgdaSymbol{\AgdaUnderscore{}}%
\>[26]\AgdaSymbol{.}\AgdaField{force}\AgdaSpace{}%
\AgdaSymbol{=}\AgdaSpace{}%
\AgdaInductiveConstructor{nil}\<%
\\
\>[4]\AgdaFunction{f}\AgdaSpace{}%
\AgdaSymbol{(}\AgdaInductiveConstructor{suc}\AgdaSpace{}%
\AgdaBound{n}\AgdaSymbol{)}%
\>[15]\AgdaSymbol{(}\AgdaInductiveConstructor{bool}%
\>[22]\AgdaSymbol{\AgdaUnderscore{})}%
\>[26]\AgdaSymbol{.}\AgdaField{force}\AgdaSpace{}%
\AgdaSymbol{=}\AgdaSpace{}%
\AgdaInductiveConstructor{out}\AgdaSpace{}%
\AgdaSymbol{(}\AgdaFunction{f}\AgdaSpace{}%
\AgdaBound{n}\AgdaSymbol{)}\<%
\\
\>[4]\AgdaFunction{f}\AgdaSpace{}%
\AgdaSymbol{(}\AgdaInductiveConstructor{suc}\AgdaSpace{}%
\AgdaBound{n}\AgdaSymbol{)}%
\>[15]\AgdaSymbol{(}\AgdaInductiveConstructor{nat}%
\>[22]\AgdaSymbol{\AgdaUnderscore{})}%
\>[26]\AgdaSymbol{.}\AgdaField{force}\AgdaSpace{}%
\AgdaSymbol{=}\AgdaSpace{}%
\AgdaInductiveConstructor{nil}\<%
\\
\\[\AgdaEmptyExtraSkip]%
\>[4]\AgdaFunction{g}\AgdaSpace{}%
\AgdaSymbol{:}\AgdaSpace{}%
\AgdaFunction{Continuation}\<%
\\
\>[4]\AgdaFunction{g}\AgdaSpace{}%
\AgdaSymbol{(}\AgdaInductiveConstructor{nat}%
\>[13]\AgdaBound{n}\AgdaSymbol{)}%
\>[26]\AgdaSymbol{.}\AgdaField{force}\AgdaSpace{}%
\AgdaSymbol{=}\AgdaSpace{}%
\AgdaInductiveConstructor{out}\AgdaSpace{}%
\AgdaSymbol{(}\AgdaFunction{f}\AgdaSpace{}%
\AgdaBound{n}\AgdaSymbol{)}\<%
\\
\>[4]\AgdaFunction{g}\AgdaSpace{}%
\AgdaSymbol{(}\AgdaInductiveConstructor{bool}%
\>[13]\AgdaSymbol{\AgdaUnderscore{})}%
\>[26]\AgdaSymbol{.}\AgdaField{force}\AgdaSpace{}%
\AgdaSymbol{=}\AgdaSpace{}%
\AgdaInductiveConstructor{nil}\<%
\end{code}

We will not discuss further examples of dependent session types. However, note that the possibility of encoding protocols such as $\AgdaFunction{BoolVector}$ has important implications on the scope of our study: it means that the results we have presented and formally proved in Agda hold for a large family of session types that includes dependent ones.

We now provide a few auxiliary predicates on session types and continuations. First of all, we say that a session type is \emph{defined} if it is different from $\TNil$:
\begin{code}%
\>[0]\AgdaKeyword{data}\AgdaSpace{}%
\AgdaDatatype{Defined}\AgdaSpace{}%
\AgdaSymbol{:}\AgdaSpace{}%
\AgdaDatatype{SessionType}\AgdaSpace{}%
\AgdaSymbol{→}\AgdaSpace{}%
\AgdaPrimitive{Set}\AgdaSpace{}%
\AgdaKeyword{where}\<%
\\
\>[0][@{}l@{\AgdaIndent{0}}]%
\>[2]\AgdaInductiveConstructor{inp}%
\>[7]\AgdaSymbol{:}\AgdaSpace{}%
\AgdaSymbol{∀\{}\AgdaBound{f}\AgdaSymbol{\}}\AgdaSpace{}%
\AgdaSymbol{→}\AgdaSpace{}%
\AgdaDatatype{Defined}\AgdaSpace{}%
\AgdaSymbol{(}\AgdaInductiveConstructor{inp}\AgdaSpace{}%
\AgdaBound{f}\AgdaSymbol{)}\<%
\\
\>[2]\AgdaInductiveConstructor{out}%
\>[7]\AgdaSymbol{:}\AgdaSpace{}%
\AgdaSymbol{∀\{}\AgdaBound{f}\AgdaSymbol{\}}\AgdaSpace{}%
\AgdaSymbol{→}\AgdaSpace{}%
\AgdaDatatype{Defined}\AgdaSpace{}%
\AgdaSymbol{(}\AgdaInductiveConstructor{out}\AgdaSpace{}%
\AgdaBound{f}\AgdaSymbol{)}\<%
\end{code}

Concerning continuations, we define the \emph{domain} of a continuation function $f$ to be the subset $\AgdaFunction{dom}~f$ of $\Message$ such that $f~x$ is defined, that is $\AgdaFunction{dom}~f = \set{ x \in \Message \mid f~x \ne \TNil }$:
\begin{code}%
\>[0]\AgdaFunction{dom}\AgdaSpace{}%
\AgdaSymbol{:}\AgdaSpace{}%
\AgdaFunction{Continuation}\AgdaSpace{}%
\AgdaSymbol{→}\AgdaSpace{}%
\AgdaFunction{Pred}\AgdaSpace{}%
\AgdaDatatype{𝕍}\AgdaSpace{}%
\AgdaPrimitive{Level.zero}\<%
\\
\>[0]\AgdaFunction{dom}\AgdaSpace{}%
\AgdaBound{f}\AgdaSpace{}%
\AgdaBound{x}\AgdaSpace{}%
\AgdaSymbol{=}\AgdaSpace{}%
\AgdaDatatype{Defined}\AgdaSpace{}%
\AgdaSymbol{(}\AgdaBound{f}\AgdaSpace{}%
\AgdaBound{x}\AgdaSpace{}%
\AgdaSymbol{.}\AgdaField{force}\AgdaSymbol{)}\<%
\end{code}

A continuation function is said to be \emph{empty} if so is its domain.
\begin{code}%
\>[0]\AgdaFunction{EmptyContinuation}\AgdaSpace{}%
\AgdaSymbol{:}\AgdaSpace{}%
\AgdaFunction{Continuation}\AgdaSpace{}%
\AgdaSymbol{→}\AgdaSpace{}%
\AgdaPrimitive{Set}\<%
\\
\>[0]\AgdaFunction{EmptyContinuation}\AgdaSpace{}%
\AgdaBound{f}\AgdaSpace{}%
\AgdaSymbol{=}\AgdaSpace{}%
\AgdaFunction{Relation.Unary.Empty}\AgdaSpace{}%
\AgdaSymbol{(}\AgdaFunction{dom}\AgdaSpace{}%
\AgdaBound{f}\AgdaSymbol{)}\<%
\end{code}

In particular, the previously defined $\AgdaFunction{empty}$ continuation is indeed an empty one.
\begin{code}%
\>[0]\AgdaFunction{empty-is-empty}\AgdaSpace{}%
\AgdaSymbol{:}\AgdaSpace{}%
\AgdaFunction{EmptyContinuation}\AgdaSpace{}%
\AgdaFunction{empty}\<%
\\
\>[0]\AgdaFunction{empty-is-empty}\AgdaSpace{}%
\AgdaSymbol{\AgdaUnderscore{}}\AgdaSpace{}%
\AgdaSymbol{()}\<%
\end{code}

On the contrary, a non-empty continuation is said to have a \emph{witness}. We define a $\AgdaFunction{Witness}$ predicate to characterize this condition.
\begin{code}%
\>[0]\AgdaFunction{Witness}\AgdaSpace{}%
\AgdaSymbol{:}\AgdaSpace{}%
\AgdaFunction{Continuation}\AgdaSpace{}%
\AgdaSymbol{→}\AgdaSpace{}%
\AgdaPrimitive{Set}\<%
\\
\>[0]\AgdaFunction{Witness}\AgdaSpace{}%
\AgdaBound{f}\AgdaSpace{}%
\AgdaSymbol{=}\AgdaSpace{}%
\AgdaFunction{Relation.Unary.Satisfiable}\AgdaSpace{}%
\AgdaSymbol{(}\AgdaFunction{dom}\AgdaSpace{}%
\AgdaBound{f}\AgdaSymbol{)}\<%
\end{code}

We now define a predicate $\AgdaDatatype{Win}$ to characterize the session type $\End[\Out]$.
\begin{code}%
\>[0]\AgdaKeyword{data}\AgdaSpace{}%
\AgdaDatatype{Win}\AgdaSpace{}%
\AgdaSymbol{:}\AgdaSpace{}%
\AgdaDatatype{SessionType}\AgdaSpace{}%
\AgdaSymbol{→}\AgdaSpace{}%
\AgdaPrimitive{Set}\AgdaSpace{}%
\AgdaKeyword{where}\<%
\\
\>[0][@{}l@{\AgdaIndent{0}}]%
\>[2]\AgdaInductiveConstructor{out}\AgdaSpace{}%
\AgdaSymbol{:}\AgdaSpace{}%
\AgdaSymbol{∀\{}\AgdaBound{f}\AgdaSymbol{\}}\AgdaSpace{}%
\AgdaSymbol{→}\AgdaSpace{}%
\AgdaFunction{EmptyContinuation}\AgdaSpace{}%
\AgdaBound{f}\AgdaSpace{}%
\AgdaSymbol{→}\AgdaSpace{}%
\AgdaDatatype{Win}\AgdaSpace{}%
\AgdaSymbol{(}\AgdaInductiveConstructor{out}\AgdaSpace{}%
\AgdaBound{f}\AgdaSymbol{)}\<%
\end{code}

\subsection{Labeled transition system for session types}

Let us move onto the definition of transitions for session types. We begin by defining an $\AgdaDatatype{Action}$ data type to represent input/output actions, which consist of a polarity and a value.
\begin{code}%
\>[0]\AgdaKeyword{data}\AgdaSpace{}%
\AgdaDatatype{Action}\AgdaSpace{}%
\AgdaSymbol{:}\AgdaSpace{}%
\AgdaPrimitive{Set}\AgdaSpace{}%
\AgdaKeyword{where}\<%
\\
\>[0][@{}l@{\AgdaIndent{0}}]%
\>[2]\AgdaInductiveConstructor{I}\AgdaSpace{}%
\AgdaInductiveConstructor{O}\AgdaSpace{}%
\AgdaSymbol{:}\AgdaSpace{}%
\AgdaDatatype{𝕍}\AgdaSpace{}%
\AgdaSymbol{→}\AgdaSpace{}%
\AgdaDatatype{Action}\<%
\end{code}

The \emph{complementary action} of $\action$, denoted by $\co\action$ in the previous sections, is computed by the function $\AgdaFunction{co-action}$.
\begin{code}%
\>[0]\AgdaFunction{co-action}\AgdaSpace{}%
\AgdaSymbol{:}\AgdaSpace{}%
\AgdaDatatype{Action}\AgdaSpace{}%
\AgdaSymbol{→}\AgdaSpace{}%
\AgdaDatatype{Action}\<%
\\
\>[0]\AgdaFunction{co-action}\AgdaSpace{}%
\AgdaSymbol{(}\AgdaInductiveConstructor{I}\AgdaSpace{}%
\AgdaBound{x}\AgdaSymbol{)}%
\>[17]\AgdaSymbol{=}\AgdaSpace{}%
\AgdaInductiveConstructor{O}\AgdaSpace{}%
\AgdaBound{x}\<%
\\
\>[0]\AgdaFunction{co-action}\AgdaSpace{}%
\AgdaSymbol{(}\AgdaInductiveConstructor{O}\AgdaSpace{}%
\AgdaBound{x}\AgdaSymbol{)}%
\>[17]\AgdaSymbol{=}\AgdaSpace{}%
\AgdaInductiveConstructor{I}\AgdaSpace{}%
\AgdaBound{x}\<%
\end{code}

A \emph{transition} is a ternary relation among two session types and an action. A value of type $\AgdaDatatype{Transition}~S~\action~T$ represents the transition $S \lred\action T$. Note that the premise $x \in \AgdaFunction{dom}~f$ in the constructor $\AgdaInductiveConstructor{out}$ corresponds to the side condition $S \ne \TNil$ of rule~\refrule{output} in~\eqref{eq:lts}.
\begin{code}%
\>[0]\AgdaKeyword{data}\AgdaSpace{}%
\AgdaDatatype{Transition}\AgdaSpace{}%
\AgdaSymbol{:}\AgdaSpace{}%
\AgdaDatatype{SessionType}\AgdaSpace{}%
\AgdaSymbol{→}\AgdaSpace{}%
\AgdaDatatype{Action}\AgdaSpace{}%
\AgdaSymbol{→}\AgdaSpace{}%
\AgdaDatatype{SessionType}\AgdaSpace{}%
\AgdaSymbol{→}\AgdaSpace{}%
\AgdaPrimitive{Set}\AgdaSpace{}%
\AgdaKeyword{where}\<%
\\
\>[0][@{}l@{\AgdaIndent{0}}]%
\>[2]\AgdaInductiveConstructor{inp}%
\>[7]\AgdaSymbol{:}\AgdaSpace{}%
\AgdaSymbol{∀\{}\AgdaBound{f}\AgdaSpace{}%
\AgdaBound{x}\AgdaSymbol{\}}%
\>[29]\AgdaSymbol{→}\AgdaSpace{}%
\AgdaDatatype{Transition}%
\>[43]\AgdaSymbol{(}\AgdaInductiveConstructor{inp}\AgdaSpace{}%
\AgdaBound{f}\AgdaSymbol{)}%
\>[52]\AgdaSymbol{(}\AgdaInductiveConstructor{I}\AgdaSpace{}%
\AgdaBound{x}\AgdaSymbol{)}%
\>[59]\AgdaSymbol{(}\AgdaBound{f}\AgdaSpace{}%
\AgdaBound{x}\AgdaSpace{}%
\AgdaSymbol{.}\AgdaField{force}\AgdaSymbol{)}\<%
\\
\>[2]\AgdaInductiveConstructor{out}%
\>[7]\AgdaSymbol{:}\AgdaSpace{}%
\AgdaSymbol{∀\{}\AgdaBound{f}\AgdaSpace{}%
\AgdaBound{x}\AgdaSymbol{\}}\AgdaSpace{}%
\AgdaSymbol{→}\AgdaSpace{}%
\AgdaBound{x}\AgdaSpace{}%
\AgdaOperator{\AgdaFunction{∈}}\AgdaSpace{}%
\AgdaFunction{dom}\AgdaSpace{}%
\AgdaBound{f}%
\>[29]\AgdaSymbol{→}\AgdaSpace{}%
\AgdaDatatype{Transition}%
\>[43]\AgdaSymbol{(}\AgdaInductiveConstructor{out}\AgdaSpace{}%
\AgdaBound{f}\AgdaSymbol{)}%
\>[52]\AgdaSymbol{(}\AgdaInductiveConstructor{O}\AgdaSpace{}%
\AgdaBound{x}\AgdaSymbol{)}%
\>[59]\AgdaSymbol{(}\AgdaBound{f}\AgdaSpace{}%
\AgdaBound{x}\AgdaSpace{}%
\AgdaSymbol{.}\AgdaField{force}\AgdaSymbol{)}\<%
\end{code}

\subsection{Sessions}

In order to specify fair compliance, we need a representation of sessions as pairs $\session{R}{S}$ of session types, just like we have done in Section~\ref{sec:compliance}. To this aim, we introduce the $\AgdaDatatype{Session}$ data type as an alias for pairs of session types.
\begin{code}%
\>[0]\AgdaFunction{Session}\AgdaSpace{}%
\AgdaSymbol{:}\AgdaSpace{}%
\AgdaPrimitive{Set}\<%
\\
\>[0]\AgdaFunction{Session}\AgdaSpace{}%
\AgdaSymbol{=}\AgdaSpace{}%
\AgdaDatatype{SessionType}\AgdaSpace{}%
\AgdaOperator{\AgdaFunction{×}}\AgdaSpace{}%
\AgdaDatatype{SessionType}\<%
\end{code}

A session \emph{reduces} when client and server synchronize, by performing actions with opposite polarities and referring to the same message. We formalize the reduction relation in~\eqref{eq:sred} as the $\AgdaDatatype{Reduction}$ data type, so that a value of type $\AgdaDatatype{Reduction}~(\session{R}{S})~(\session{R'}{S'})$ witnesses the reduction $\session{R}{S} \red \session{R'}{S'}$.
\begin{code}%
\>[0]\AgdaKeyword{data}\AgdaSpace{}%
\AgdaDatatype{Reduction}\AgdaSpace{}%
\AgdaSymbol{:}\AgdaSpace{}%
\AgdaFunction{Session}\AgdaSpace{}%
\AgdaSymbol{→}\AgdaSpace{}%
\AgdaFunction{Session}\AgdaSpace{}%
\AgdaSymbol{→}\AgdaSpace{}%
\AgdaPrimitive{Set}\AgdaSpace{}%
\AgdaKeyword{where}\<%
\\
\>[0][@{}l@{\AgdaIndent{0}}]%
\>[2]\AgdaInductiveConstructor{sync}\AgdaSpace{}%
\AgdaSymbol{:}%
\>[10]\AgdaSymbol{∀\{}\AgdaBound{α}\AgdaSpace{}%
\AgdaBound{R}\AgdaSpace{}%
\AgdaBound{R'}\AgdaSpace{}%
\AgdaBound{S}\AgdaSpace{}%
\AgdaBound{S'}\AgdaSymbol{\}}\AgdaSpace{}%
\AgdaSymbol{→}\AgdaSpace{}%
\AgdaDatatype{Transition}\AgdaSpace{}%
\AgdaBound{R}\AgdaSpace{}%
\AgdaSymbol{(}\AgdaFunction{co-action}\AgdaSpace{}%
\AgdaBound{α}\AgdaSymbol{)}\AgdaSpace{}%
\AgdaBound{R'}\AgdaSpace{}%
\AgdaSymbol{→}\AgdaSpace{}%
\AgdaDatatype{Transition}\AgdaSpace{}%
\AgdaBound{S}\AgdaSpace{}%
\AgdaBound{α}\AgdaSpace{}%
\AgdaBound{S'}\AgdaSpace{}%
\AgdaSymbol{→}\<%
\\
\>[10]\AgdaDatatype{Reduction}\AgdaSpace{}%
\AgdaSymbol{(}\AgdaBound{R}\AgdaSpace{}%
\AgdaOperator{\AgdaInductiveConstructor{,}}\AgdaSpace{}%
\AgdaBound{S}\AgdaSymbol{)}\AgdaSpace{}%
\AgdaSymbol{(}\AgdaBound{R'}\AgdaSpace{}%
\AgdaOperator{\AgdaInductiveConstructor{,}}\AgdaSpace{}%
\AgdaBound{S'}\AgdaSymbol{)}\<%
\end{code}

The weak reduction relation is called $\AgdaDatatype{Reductions}$ and is defined as the reflexive, transitive closure of $\AgdaDatatype{Reduction}$, just like $\wred$ is the reflexive, transitive closure of $\red$. We make use of the $\AgdaDatatype{Star}$ data type from Agda's standard library to define such closure.
\begin{code}%
\>[0]\AgdaFunction{Reductions}\AgdaSpace{}%
\AgdaSymbol{:}\AgdaSpace{}%
\AgdaFunction{Session}\AgdaSpace{}%
\AgdaSymbol{→}\AgdaSpace{}%
\AgdaFunction{Session}\AgdaSpace{}%
\AgdaSymbol{→}\AgdaSpace{}%
\AgdaPrimitive{Set}\<%
\\
\>[0]\AgdaFunction{Reductions}\AgdaSpace{}%
\AgdaSymbol{=}\AgdaSpace{}%
\AgdaDatatype{Star}\AgdaSpace{}%
\AgdaDatatype{Reduction}\<%
\end{code}

\subsection{Fair compliance}

To formalize fair compliance, we define a $\AgdaDatatype{Success}$ predicate that characterizes those configurations $\session{R}{S}$ in which the client has succeeded ($R = \Win$) and the server has not failed ($S \neq \TNil$).
\begin{code}%
\>[0]\AgdaKeyword{data}\AgdaSpace{}%
\AgdaDatatype{Success}\AgdaSpace{}%
\AgdaSymbol{:}\AgdaSpace{}%
\AgdaFunction{Session}\AgdaSpace{}%
\AgdaSymbol{→}\AgdaSpace{}%
\AgdaPrimitive{Set}\AgdaSpace{}%
\AgdaKeyword{where}\<%
\\
\>[0][@{}l@{\AgdaIndent{0}}]%
\>[2]\AgdaInductiveConstructor{success}\AgdaSpace{}%
\AgdaSymbol{:}\AgdaSpace{}%
\AgdaSymbol{∀\{}\AgdaBound{R}\AgdaSpace{}%
\AgdaBound{S}\AgdaSymbol{\}}\AgdaSpace{}%
\AgdaSymbol{→}\AgdaSpace{}%
\AgdaDatatype{Win}\AgdaSpace{}%
\AgdaBound{R}\AgdaSpace{}%
\AgdaSymbol{→}\AgdaSpace{}%
\AgdaDatatype{Defined}\AgdaSpace{}%
\AgdaBound{S}\AgdaSpace{}%
\AgdaSymbol{→}\AgdaSpace{}%
\AgdaDatatype{Success}\AgdaSpace{}%
\AgdaSymbol{(}\AgdaBound{R}\AgdaSpace{}%
\AgdaOperator{\AgdaInductiveConstructor{,}}\AgdaSpace{}%
\AgdaBound{S}\AgdaSymbol{)}\<%
\end{code}

We can then weaken $\AgdaDatatype{Success}$ to $\AgdaDatatype{MaySucceed}$, to characterize those configurations \emph{that can be extended} so as to become successful ones. For this purpose we make use of the $\AgdaDatatype{Satisfiable}$ predicate and of the intersection $\AgdaFunction{∩}$ of two sets from Agda's standard library.
\begin{code}%
\>[0]\AgdaFunction{MaySucceed}\AgdaSpace{}%
\AgdaSymbol{:}\AgdaSpace{}%
\AgdaFunction{Session}\AgdaSpace{}%
\AgdaSymbol{→}\AgdaSpace{}%
\AgdaPrimitive{Set}\<%
\\
\>[0]\AgdaFunction{MaySucceed}\AgdaSpace{}%
\AgdaBound{Se}\AgdaSpace{}%
\AgdaSymbol{=}\AgdaSpace{}%
\AgdaFunction{Relation.Unary.Satisfiable}\AgdaSpace{}%
\AgdaSymbol{(}\AgdaFunction{Reductions}\AgdaSpace{}%
\AgdaBound{Se}\AgdaSpace{}%
\AgdaOperator{\AgdaFunction{∩}}\AgdaSpace{}%
\AgdaDatatype{Success}\AgdaSymbol{)}\<%
\end{code}

In words, $\mathit{Se}$ may succeed if there exists $\mathit{Se}'$ such that $\mathit{Se} \wred \mathit{Se}'$ and $\mathit{Se}'$ is a successful configuration. We can now formulate fair compliance as the property of those sessions that may succeed no matter how they reduce (Definition~\ref{def:fcomp}). This is the specification against which we prove soundness and completeness of the \GIS for fair compliance (Table~\ref{tab:compliance}).
\begin{code}%
\>[0]\AgdaFunction{FCompS}\AgdaSpace{}%
\AgdaSymbol{:}\AgdaSpace{}%
\AgdaFunction{Session}\AgdaSpace{}%
\AgdaSymbol{→}\AgdaSpace{}%
\AgdaPrimitive{Set}\<%
\\
\>[0]\AgdaFunction{FCompS}\AgdaSpace{}%
\AgdaBound{Se}\AgdaSpace{}%
\AgdaSymbol{=}\AgdaSpace{}%
\AgdaSymbol{∀\{}\AgdaBound{Se'}\AgdaSymbol{\}}\AgdaSpace{}%
\AgdaSymbol{→}\AgdaSpace{}%
\AgdaFunction{Reductions}\AgdaSpace{}%
\AgdaBound{Se}\AgdaSpace{}%
\AgdaBound{Se'}\AgdaSpace{}%
\AgdaSymbol{→}\AgdaSpace{}%
\AgdaFunction{MaySucceed}\AgdaSpace{}%
\AgdaBound{Se'}\<%
\end{code}

\subsection{\GIS for fair compliance}

We now use the Agda library for \GISs~\cite{CicconeDagninoZucca21,CicconeDagninoZucca21B} to formally define the inference system for fair compliance shown in Table~\ref{tab:compliance}. The first thing to do is to define the universe $\AgdaDatatype{U}$ of judgments that we want to derive with the inference system. We can equivalently think of  fair compliance as of a binary relation on session types or as a predicate over sessions. We take the second point of view, as it allows us to write more compact code later on.
\begin{code}%
\>[0]\AgdaFunction{U}\AgdaSpace{}%
\AgdaSymbol{:}\AgdaSpace{}%
\AgdaPrimitive{Set}\<%
\\
\>[0]\AgdaFunction{U}\AgdaSpace{}%
\AgdaSymbol{=}\AgdaSpace{}%
\AgdaFunction{Session}\<%
\end{code}

Next, we define two data types to represent the \emph{unique names} with which we identify the rules and corules of the \GIS.\@ We use the same labels of Table~\ref{tab:compliance} except for the corule~\refrule{c-sync} which we split into \emph{two} symmetric corules to avoid reasoning on opposite polarities.
\begin{code}%
\>[0]\AgdaKeyword{data}\AgdaSpace{}%
\AgdaDatatype{RuleNames}\AgdaSpace{}%
\AgdaSymbol{:}\AgdaSpace{}%
\AgdaPrimitive{Set}\AgdaSpace{}%
\AgdaKeyword{where}\<%
\\
\>[0][@{}l@{\AgdaIndent{0}}]%
\>[2]\AgdaInductiveConstructor{success}\AgdaSpace{}%
\AgdaInductiveConstructor{inp-out}\AgdaSpace{}%
\AgdaInductiveConstructor{out-inp}\AgdaSpace{}%
\AgdaSymbol{:}\AgdaSpace{}%
\AgdaDatatype{RuleNames}\<%
\\
\\[\AgdaEmptyExtraSkip]%
\>[0]\AgdaKeyword{data}\AgdaSpace{}%
\AgdaDatatype{CoRuleNames}\AgdaSpace{}%
\AgdaSymbol{:}\AgdaSpace{}%
\AgdaPrimitive{Set}\AgdaSpace{}%
\AgdaKeyword{where}\<%
\\
\>[0][@{}l@{\AgdaIndent{0}}]%
\>[2]\AgdaInductiveConstructor{inp-out}\AgdaSpace{}%
\AgdaInductiveConstructor{out-inp}\AgdaSpace{}%
\AgdaSymbol{:}\AgdaSpace{}%
\AgdaDatatype{CoRuleNames}\<%
\end{code}

There are two different ways of defining rules and corules, depending on whether these have a finite or a possibly infinite number of premises. Clearly, (co)rules with finitely many premises are just a special case of those with possibly infinite ones, but the \GIS library provides some syntactic sugar to specify (co)rules of the former kind in a slightly easier way. We use a finite rule to specify~\refrule{c-success}:
\begin{code}%
\>[0]\AgdaFunction{success-rule}\AgdaSpace{}%
\AgdaSymbol{:}\AgdaSpace{}%
\AgdaRecord{FinMetaRule}\AgdaSpace{}%
\AgdaFunction{U}\<%
\\
\>[0]\AgdaFunction{success-rule}\AgdaSpace{}%
\AgdaSymbol{.}\AgdaField{Ctx}%
\>[29]\AgdaSymbol{=}\AgdaSpace{}%
\AgdaFunction{Σ[}\AgdaSpace{}%
\AgdaBound{Se}\AgdaSpace{}%
\AgdaFunction{∈}\AgdaSpace{}%
\AgdaFunction{Session}\AgdaSpace{}%
\AgdaFunction{]}\AgdaSpace{}%
\AgdaDatatype{Success}\AgdaSpace{}%
\AgdaBound{Se}\<%
\\
\>[0]\AgdaFunction{success-rule}\AgdaSpace{}%
\AgdaSymbol{.}\AgdaField{comp}\AgdaSpace{}%
\AgdaSymbol{(}\AgdaBound{Se}\AgdaSpace{}%
\AgdaOperator{\AgdaInductiveConstructor{,}}\AgdaSpace{}%
\AgdaSymbol{\AgdaUnderscore{})}%
\>[29]\AgdaSymbol{=}\AgdaSpace{}%
\AgdaInductiveConstructor{[]}\AgdaSpace{}%
\AgdaOperator{\AgdaInductiveConstructor{,}}\AgdaSpace{}%
\AgdaBound{Se}\<%
\end{code}

Basically, a finite rule consists of a \emph{context}, a finite list of
\emph{premises} and a \emph{conclusion}. The definition of
$\AgdaFunction{success-rule}$ uses copattern matching to specify an Agda record whose
fields contain such elements. The context field $\AgdaField{Ctx}$ specifies the
type of metavariables occurring in the rule, as well as possible side conditions
that these metavariables are supposed to satisfy. In the specific case of this
rule, we have a single metavariable $\AgdaBound{Se}$ that represents a
successful configuration. The $\AgdaField{comp}$ field is a pair with the list
of premises and the conclusion of the rule. In this case the list is
empty and the conclusion is simply $\AgdaBound{Se}$. Since this field
depends on the metavariable $\AgdaBound{Se}$, we pattern match on a pair to
refer to those components of the context that are needed to express premises and
conclusion.

Concerning~\refrule{c-out-inp} and~\refrule{c-inp-out}, these rules have a possibly infinite set of premises if $\Message$ is infinite. Therefore, we specify the rules using the most general form allowed by the \GIS Agda library. 
\begin{code}%
\>[0]\AgdaFunction{out-inp-rule}\AgdaSpace{}%
\AgdaSymbol{:}\AgdaSpace{}%
\AgdaRecord{MetaRule}\AgdaSpace{}%
\AgdaFunction{U}\<%
\\
\>[0]\AgdaFunction{out-inp-rule}\AgdaSpace{}%
\AgdaSymbol{.}\AgdaField{Ctx}\AgdaSpace{}%
\AgdaSymbol{=}\AgdaSpace{}%
\AgdaFunction{Σ[}\AgdaSpace{}%
\AgdaBound{(}\AgdaBound{f}\AgdaSpace{}%
\AgdaOperator{\AgdaInductiveConstructor{,}}\AgdaSpace{}%
\AgdaBound{\AgdaUnderscore{})}\AgdaSpace{}%
\AgdaFunction{∈}\AgdaSpace{}%
\AgdaFunction{Continuation}\AgdaSpace{}%
\AgdaOperator{\AgdaFunction{×}}\AgdaSpace{}%
\AgdaFunction{Continuation}\AgdaSpace{}%
\AgdaFunction{]}\AgdaSpace{}%
\AgdaFunction{Witness}\AgdaSpace{}%
\AgdaBound{f}\<%
\\
\>[0]\AgdaFunction{out-inp-rule}\AgdaSpace{}%
\AgdaSymbol{.}\AgdaField{Pos}%
\>[22]\AgdaSymbol{((}\AgdaBound{f}\AgdaSpace{}%
\AgdaOperator{\AgdaInductiveConstructor{,}}\AgdaSpace{}%
\AgdaSymbol{\AgdaUnderscore{})}\AgdaSpace{}%
\AgdaOperator{\AgdaInductiveConstructor{,}}\AgdaSpace{}%
\AgdaSymbol{\AgdaUnderscore{})}%
\>[37]\AgdaSymbol{=}\AgdaSpace{}%
\AgdaFunction{Σ[}\AgdaSpace{}%
\AgdaBound{x}\AgdaSpace{}%
\AgdaFunction{∈}\AgdaSpace{}%
\AgdaDatatype{𝕍}\AgdaSpace{}%
\AgdaFunction{]}\AgdaSpace{}%
\AgdaBound{x}\AgdaSpace{}%
\AgdaOperator{\AgdaFunction{∈}}\AgdaSpace{}%
\AgdaFunction{dom}\AgdaSpace{}%
\AgdaBound{f}\<%
\\
\>[0]\AgdaFunction{out-inp-rule}\AgdaSpace{}%
\AgdaSymbol{.}\AgdaField{prems}%
\>[22]\AgdaSymbol{((}\AgdaBound{f}\AgdaSpace{}%
\AgdaOperator{\AgdaInductiveConstructor{,}}\AgdaSpace{}%
\AgdaBound{g}\AgdaSymbol{)}\AgdaSpace{}%
\AgdaOperator{\AgdaInductiveConstructor{,}}\AgdaSpace{}%
\AgdaSymbol{\AgdaUnderscore{})}%
\>[37]\AgdaSymbol{=}\AgdaSpace{}%
\AgdaSymbol{λ}\AgdaSpace{}%
\AgdaSymbol{(}\AgdaBound{x}\AgdaSpace{}%
\AgdaOperator{\AgdaInductiveConstructor{,}}\AgdaSpace{}%
\AgdaSymbol{\AgdaUnderscore{})}\AgdaSpace{}%
\AgdaSymbol{→}\AgdaSpace{}%
\AgdaBound{f}\AgdaSpace{}%
\AgdaBound{x}\AgdaSpace{}%
\AgdaSymbol{.}\AgdaField{force}\AgdaSpace{}%
\AgdaOperator{\AgdaInductiveConstructor{,}}\AgdaSpace{}%
\AgdaBound{g}\AgdaSpace{}%
\AgdaBound{x}\AgdaSpace{}%
\AgdaSymbol{.}\AgdaField{force}\<%
\\
\>[0]\AgdaFunction{out-inp-rule}\AgdaSpace{}%
\AgdaSymbol{.}\AgdaField{conclu}%
\>[22]\AgdaSymbol{((}\AgdaBound{f}\AgdaSpace{}%
\AgdaOperator{\AgdaInductiveConstructor{,}}\AgdaSpace{}%
\AgdaBound{g}\AgdaSymbol{)}\AgdaSpace{}%
\AgdaOperator{\AgdaInductiveConstructor{,}}\AgdaSpace{}%
\AgdaSymbol{\AgdaUnderscore{})}%
\>[37]\AgdaSymbol{=}\AgdaSpace{}%
\AgdaInductiveConstructor{out}\AgdaSpace{}%
\AgdaBound{f}\AgdaSpace{}%
\AgdaOperator{\AgdaInductiveConstructor{,}}\AgdaSpace{}%
\AgdaInductiveConstructor{inp}\AgdaSpace{}%
\AgdaBound{g}\<%
\\
\\[\AgdaEmptyExtraSkip]%
\>[0]\AgdaFunction{inp-out-rule}\AgdaSpace{}%
\AgdaSymbol{:}\AgdaSpace{}%
\AgdaRecord{MetaRule}\AgdaSpace{}%
\AgdaFunction{U}\<%
\\
\>[0]\AgdaFunction{inp-out-rule}\AgdaSpace{}%
\AgdaSymbol{.}\AgdaField{Ctx}\AgdaSpace{}%
\AgdaSymbol{=}\AgdaSpace{}%
\AgdaFunction{Σ[}\AgdaSpace{}%
\AgdaBound{(\AgdaUnderscore{}}\AgdaSpace{}%
\AgdaOperator{\AgdaInductiveConstructor{,}}\AgdaSpace{}%
\AgdaBound{g}\AgdaBound{)}\AgdaSpace{}%
\AgdaFunction{∈}\AgdaSpace{}%
\AgdaFunction{Continuation}\AgdaSpace{}%
\AgdaOperator{\AgdaFunction{×}}\AgdaSpace{}%
\AgdaFunction{Continuation}\AgdaSpace{}%
\AgdaFunction{]}\AgdaSpace{}%
\AgdaFunction{Witness}\AgdaSpace{}%
\AgdaBound{g}\<%
\\
\>[0]\AgdaFunction{inp-out-rule}\AgdaSpace{}%
\AgdaSymbol{.}\AgdaField{Pos}%
\>[22]\AgdaSymbol{((\AgdaUnderscore{}}\AgdaSpace{}%
\AgdaOperator{\AgdaInductiveConstructor{,}}\AgdaSpace{}%
\AgdaBound{g}\AgdaSymbol{)}\AgdaSpace{}%
\AgdaOperator{\AgdaInductiveConstructor{,}}\AgdaSpace{}%
\AgdaSymbol{\AgdaUnderscore{})}%
\>[37]\AgdaSymbol{=}\AgdaSpace{}%
\AgdaFunction{Σ[}\AgdaSpace{}%
\AgdaBound{x}\AgdaSpace{}%
\AgdaFunction{∈}\AgdaSpace{}%
\AgdaDatatype{𝕍}\AgdaSpace{}%
\AgdaFunction{]}\AgdaSpace{}%
\AgdaBound{x}\AgdaSpace{}%
\AgdaOperator{\AgdaFunction{∈}}\AgdaSpace{}%
\AgdaFunction{dom}\AgdaSpace{}%
\AgdaBound{g}\<%
\\
\>[0]\AgdaFunction{inp-out-rule}\AgdaSpace{}%
\AgdaSymbol{.}\AgdaField{prems}%
\>[22]\AgdaSymbol{((}\AgdaBound{f}\AgdaSpace{}%
\AgdaOperator{\AgdaInductiveConstructor{,}}\AgdaSpace{}%
\AgdaBound{g}\AgdaSymbol{)}\AgdaSpace{}%
\AgdaOperator{\AgdaInductiveConstructor{,}}\AgdaSpace{}%
\AgdaSymbol{\AgdaUnderscore{})}%
\>[37]\AgdaSymbol{=}\AgdaSpace{}%
\AgdaSymbol{λ}\AgdaSpace{}%
\AgdaSymbol{(}\AgdaBound{x}\AgdaSpace{}%
\AgdaOperator{\AgdaInductiveConstructor{,}}\AgdaSpace{}%
\AgdaSymbol{\AgdaUnderscore{})}\AgdaSpace{}%
\AgdaSymbol{→}\AgdaSpace{}%
\AgdaBound{f}\AgdaSpace{}%
\AgdaBound{x}\AgdaSpace{}%
\AgdaSymbol{.}\AgdaField{force}\AgdaSpace{}%
\AgdaOperator{\AgdaInductiveConstructor{,}}\AgdaSpace{}%
\AgdaBound{g}\AgdaSpace{}%
\AgdaBound{x}\AgdaSpace{}%
\AgdaSymbol{.}\AgdaField{force}\<%
\\
\>[0]\AgdaFunction{inp-out-rule}\AgdaSpace{}%
\AgdaSymbol{.}\AgdaField{conclu}%
\>[22]\AgdaSymbol{((}\AgdaBound{f}\AgdaSpace{}%
\AgdaOperator{\AgdaInductiveConstructor{,}}\AgdaSpace{}%
\AgdaBound{g}\AgdaSymbol{)}\AgdaSpace{}%
\AgdaOperator{\AgdaInductiveConstructor{,}}\AgdaSpace{}%
\AgdaSymbol{\AgdaUnderscore{})}%
\>[37]\AgdaSymbol{=}\AgdaSpace{}%
\AgdaInductiveConstructor{inp}\AgdaSpace{}%
\AgdaBound{f}\AgdaSpace{}%
\AgdaOperator{\AgdaInductiveConstructor{,}}\AgdaSpace{}%
\AgdaInductiveConstructor{out}\AgdaSpace{}%
\AgdaBound{g}\<%
\end{code}

Again, a rule with possibly infinite premises is specified as an Agda
record, this time with four fields: the $\AgdaField{Ctx}$ field provides the
type of the metavariables along with possible side conditions, just as in the
case of finite rules. All the subsequent fields use pattern matching
to access the needed parts of the context. The $\AgdaField{Pos}$ field is the
\emph{domain} of the function that \emph{generates} the premises given a
context. In the above rules, the domain coincides with that of the continuation
function corresponding to the output session type, since we want to specify a
fair compliance premise for every message that can be sent. We can think of
$\AgdaField{Pos}$ as of the type of the \emph{position} of each premise above
the line of a rule. The $\AgdaField{prems}$ field contains the function that,
given a context and a position, yields the premise found at that position.
Above, the premise is the pair of continuations after an exchange of a message
$\AgdaBound{x}$. Finally, the $\AgdaField{conclu}$ field contains the conclusion
of the rule.

The specification of corules is no different from that of plain rules. As we have anticipated, we split~\refrule{c-sync} into two corules, each having exactly one premise.
\begin{code}%
\>[0]\AgdaFunction{out-inp-corule}\AgdaSpace{}%
\AgdaSymbol{:}\AgdaSpace{}%
\AgdaRecord{FinMetaRule}\AgdaSpace{}%
\AgdaFunction{U}\<%
\\
\>[0]\AgdaFunction{out-inp-corule}\AgdaSpace{}%
\AgdaSymbol{.}\AgdaField{Ctx}\AgdaSpace{}%
\AgdaSymbol{=}\AgdaSpace{}%
\AgdaFunction{Σ[}\AgdaSpace{}%
\AgdaBound{(}\AgdaBound{f}\AgdaSpace{}%
\AgdaOperator{\AgdaInductiveConstructor{,}}\AgdaSpace{}%
\AgdaBound{\AgdaUnderscore{})}\AgdaSpace{}%
\AgdaFunction{∈}\AgdaSpace{}%
\AgdaFunction{Continuation}\AgdaSpace{}%
\AgdaOperator{\AgdaFunction{×}}\AgdaSpace{}%
\AgdaFunction{Continuation}\AgdaSpace{}%
\AgdaFunction{]}\AgdaSpace{}%
\AgdaFunction{Witness}\AgdaSpace{}%
\AgdaBound{f}\<%
\\
\>[0]\AgdaFunction{out-inp-corule}\AgdaSpace{}%
\AgdaSymbol{.}\AgdaField{comp}\AgdaSpace{}%
\AgdaSymbol{((}\AgdaBound{f}\AgdaSpace{}%
\AgdaOperator{\AgdaInductiveConstructor{,}}\AgdaSpace{}%
\AgdaBound{g}\AgdaSymbol{)}\AgdaSpace{}%
\AgdaOperator{\AgdaInductiveConstructor{,}}\AgdaSpace{}%
\AgdaBound{x}\AgdaSpace{}%
\AgdaOperator{\AgdaInductiveConstructor{,}}\AgdaSpace{}%
\AgdaSymbol{\AgdaUnderscore{})}%
\>[549I]\AgdaSymbol{=}\AgdaSpace{}%
\AgdaSymbol{(}\AgdaBound{f}\AgdaSpace{}%
\AgdaBound{x}\AgdaSpace{}%
\AgdaSymbol{.}\AgdaField{force}\AgdaSpace{}%
\AgdaOperator{\AgdaInductiveConstructor{,}}\AgdaSpace{}%
\AgdaBound{g}\AgdaSpace{}%
\AgdaBound{x}\AgdaSpace{}%
\AgdaSymbol{.}\AgdaField{force}\AgdaSymbol{)}\AgdaSpace{}%
\AgdaOperator{\AgdaInductiveConstructor{∷}}\AgdaSpace{}%
\AgdaInductiveConstructor{[]}\<%
\\
\>[.][@{}l@{}]\<[549I]%
\>[39]\AgdaOperator{\AgdaInductiveConstructor{,}}\AgdaSpace{}%
\AgdaSymbol{(}\AgdaInductiveConstructor{out}\AgdaSpace{}%
\AgdaBound{f}\AgdaSpace{}%
\AgdaOperator{\AgdaInductiveConstructor{,}}\AgdaSpace{}%
\AgdaInductiveConstructor{inp}\AgdaSpace{}%
\AgdaBound{g}\AgdaSymbol{)}\<%
\\
\\[\AgdaEmptyExtraSkip]%
\>[0]\AgdaFunction{inp-out-corule}\AgdaSpace{}%
\AgdaSymbol{:}\AgdaSpace{}%
\AgdaRecord{FinMetaRule}\AgdaSpace{}%
\AgdaFunction{U}\<%
\\
\>[0]\AgdaFunction{inp-out-corule}\AgdaSpace{}%
\AgdaSymbol{.}\AgdaField{Ctx}\AgdaSpace{}%
\AgdaSymbol{=}\AgdaSpace{}%
\AgdaFunction{Σ[}\AgdaSpace{}%
\AgdaBound{(\AgdaUnderscore{}}\AgdaSpace{}%
\AgdaOperator{\AgdaInductiveConstructor{,}}\AgdaSpace{}%
\AgdaBound{g}\AgdaBound{)}\AgdaSpace{}%
\AgdaFunction{∈}\AgdaSpace{}%
\AgdaFunction{Continuation}\AgdaSpace{}%
\AgdaOperator{\AgdaFunction{×}}\AgdaSpace{}%
\AgdaFunction{Continuation}\AgdaSpace{}%
\AgdaFunction{]}\AgdaSpace{}%
\AgdaFunction{Witness}\AgdaSpace{}%
\AgdaBound{g}\<%
\\
\>[0]\AgdaFunction{inp-out-corule}\AgdaSpace{}%
\AgdaSymbol{.}\AgdaField{comp}\AgdaSpace{}%
\AgdaSymbol{((}\AgdaBound{f}\AgdaSpace{}%
\AgdaOperator{\AgdaInductiveConstructor{,}}\AgdaSpace{}%
\AgdaBound{g}\AgdaSymbol{)}\AgdaSpace{}%
\AgdaOperator{\AgdaInductiveConstructor{,}}\AgdaSpace{}%
\AgdaBound{x}\AgdaSpace{}%
\AgdaOperator{\AgdaInductiveConstructor{,}}\AgdaSpace{}%
\AgdaSymbol{\AgdaUnderscore{})}%
\>[588I]\AgdaSymbol{=}\AgdaSpace{}%
\AgdaSymbol{(}\AgdaBound{f}\AgdaSpace{}%
\AgdaBound{x}\AgdaSpace{}%
\AgdaSymbol{.}\AgdaField{force}\AgdaSpace{}%
\AgdaOperator{\AgdaInductiveConstructor{,}}\AgdaSpace{}%
\AgdaBound{g}\AgdaSpace{}%
\AgdaBound{x}\AgdaSpace{}%
\AgdaSymbol{.}\AgdaField{force}\AgdaSymbol{)}\AgdaSpace{}%
\AgdaOperator{\AgdaInductiveConstructor{∷}}\AgdaSpace{}%
\AgdaInductiveConstructor{[]}\<%
\\
\>[.][@{}l@{}]\<[588I]%
\>[39]\AgdaOperator{\AgdaInductiveConstructor{,}}\AgdaSpace{}%
\AgdaSymbol{(}\AgdaInductiveConstructor{inp}\AgdaSpace{}%
\AgdaBound{f}\AgdaSpace{}%
\AgdaOperator{\AgdaInductiveConstructor{,}}\AgdaSpace{}%
\AgdaInductiveConstructor{out}\AgdaSpace{}%
\AgdaBound{g}\AgdaSymbol{)}\<%
\end{code}

We can now define two inference systems, $\AgdaFunction{FCompIS}$ that consists of the plain rules only and $\AgdaFunction{FCompCOIS}$ that consists of the corules only. These are called $\is[C]$ and $\cois[C]$ in Section~\ref{sec:compliance}.
\begin{code}%
\>[0]\AgdaFunction{FCompIS}\AgdaSpace{}%
\AgdaSymbol{:}\AgdaSpace{}%
\AgdaRecord{IS}\AgdaSpace{}%
\AgdaFunction{U}\<%
\\
\>[0]\AgdaFunction{FCompIS}\AgdaSpace{}%
\AgdaSymbol{.}\AgdaField{Names}%
\>[24]\AgdaSymbol{=}\AgdaSpace{}%
\AgdaDatatype{RuleNames}\<%
\\
\>[0]\AgdaFunction{FCompIS}\AgdaSpace{}%
\AgdaSymbol{.}\AgdaField{rules}\AgdaSpace{}%
\AgdaInductiveConstructor{success}%
\>[24]\AgdaSymbol{=}\AgdaSpace{}%
\AgdaFunction{from}\AgdaSpace{}%
\AgdaFunction{success-rule}\<%
\\
\>[0]\AgdaFunction{FCompIS}\AgdaSpace{}%
\AgdaSymbol{.}\AgdaField{rules}\AgdaSpace{}%
\AgdaInductiveConstructor{out-inp}%
\>[24]\AgdaSymbol{=}\AgdaSpace{}%
\AgdaFunction{out-inp-rule}\<%
\\
\>[0]\AgdaFunction{FCompIS}\AgdaSpace{}%
\AgdaSymbol{.}\AgdaField{rules}\AgdaSpace{}%
\AgdaInductiveConstructor{inp-out}%
\>[24]\AgdaSymbol{=}\AgdaSpace{}%
\AgdaFunction{inp-out-rule}\<%
\\
\\[\AgdaEmptyExtraSkip]%
\>[0]\AgdaFunction{FCompCOIS}\AgdaSpace{}%
\AgdaSymbol{:}\AgdaSpace{}%
\AgdaRecord{IS}\AgdaSpace{}%
\AgdaFunction{U}\<%
\\
\>[0]\AgdaFunction{FCompCOIS}\AgdaSpace{}%
\AgdaSymbol{.}\AgdaField{Names}%
\>[26]\AgdaSymbol{=}\AgdaSpace{}%
\AgdaDatatype{CoRuleNames}\<%
\\
\>[0]\AgdaFunction{FCompCOIS}\AgdaSpace{}%
\AgdaSymbol{.}\AgdaField{rules}\AgdaSpace{}%
\AgdaInductiveConstructor{out-inp}%
\>[26]\AgdaSymbol{=}\AgdaSpace{}%
\AgdaFunction{from}\AgdaSpace{}%
\AgdaFunction{out-inp-corule}\<%
\\
\>[0]\AgdaFunction{FCompCOIS}\AgdaSpace{}%
\AgdaSymbol{.}\AgdaField{rules}\AgdaSpace{}%
\AgdaInductiveConstructor{inp-out}%
\>[26]\AgdaSymbol{=}\AgdaSpace{}%
\AgdaFunction{from}\AgdaSpace{}%
\AgdaFunction{inp-out-corule}\<%
\end{code}

An inference system is encoded as an Agda record with two fields:
$\AgdaField{Names}$ is the data type representing the names of the rules,
whereas $\AgdaField{rules}$ is a function associating each name to the
corresponding rule. We use copatterns to define the fields of such a record and
pattern matching to discriminate each rule. The auxiliary function
$\AgdaFunction{from}$ converts a finite rule into its more general
form on-the-fly, so that the internal representation of all rules is uniform.

We obtain the generalized interpretation of $\gis{\is[C]}{\cois[C]}$, which we call $\AgdaFunction{FCompG}$, through the library function $\AgdaFunction{Gen}$.
\begin{code}%
\>[0]\AgdaFunction{FCompG}\AgdaSpace{}%
\AgdaSymbol{:}\AgdaSpace{}%
\AgdaFunction{Session}\AgdaSpace{}%
\AgdaSymbol{→}\AgdaSpace{}%
\AgdaPrimitive{Set}\<%
\\
\>[0]\AgdaFunction{FCompG}\AgdaSpace{}%
\AgdaSymbol{=}\AgdaSpace{}%
\AgdaOperator{\AgdaFunction{Gen⟦}}\AgdaSpace{}%
\AgdaFunction{FCompIS}\AgdaSpace{}%
\AgdaOperator{\AgdaFunction{,}}\AgdaSpace{}%
\AgdaFunction{FCompCOIS}\AgdaSpace{}%
\AgdaOperator{\AgdaFunction{⟧}}\<%
\end{code}

The relation $\compliance{}{}$ defined by the \GIS in Table~\ref{tab:compliance} is now just a curried version of $\AgdaFunction{FCompG}$.
\begin{code}%
\>[0]\AgdaOperator{\AgdaFunction{\AgdaUnderscore{}⊣\AgdaUnderscore{}}}\AgdaSpace{}%
\AgdaSymbol{:}\AgdaSpace{}%
\AgdaDatatype{SessionType}\AgdaSpace{}%
\AgdaSymbol{→}\AgdaSpace{}%
\AgdaDatatype{SessionType}\AgdaSpace{}%
\AgdaSymbol{→}\AgdaSpace{}%
\AgdaPrimitive{Set}\<%
\\
\>[0]\AgdaBound{R}\AgdaSpace{}%
\AgdaOperator{\AgdaFunction{⊣}}\AgdaSpace{}%
\AgdaBound{S}\AgdaSpace{}%
\AgdaSymbol{=}\AgdaSpace{}%
\AgdaFunction{FCompG}\AgdaSpace{}%
\AgdaSymbol{(}\AgdaBound{R}\AgdaSpace{}%
\AgdaOperator{\AgdaInductiveConstructor{,}}\AgdaSpace{}%
\AgdaBound{S}\AgdaSymbol{)}\<%
\end{code}

We also define a predicate $\AgdaFunction{FCompI}$ as the inductive interpretation of the union of $\AgdaFunction{FCompIS}$ and $\AgdaFunction{FCompCOIS}$, which is useful in the soundness and boundedness proofs of the \GIS.
\begin{code}%
\>[0]\AgdaFunction{FCompI}\AgdaSpace{}%
\AgdaSymbol{:}\AgdaSpace{}%
\AgdaFunction{Session}\AgdaSpace{}%
\AgdaSymbol{→}\AgdaSpace{}%
\AgdaPrimitive{Set}\<%
\\
\>[0]\AgdaFunction{FCompI}\AgdaSpace{}%
\AgdaSymbol{=}\AgdaSpace{}%
\AgdaOperator{\AgdaDatatype{Ind⟦}}\AgdaSpace{}%
\AgdaFunction{FCompIS}\AgdaSpace{}%
\AgdaOperator{\AgdaFunction{∪}}\AgdaSpace{}%
\AgdaFunction{FCompCOIS}\AgdaSpace{}%
\AgdaOperator{\AgdaDatatype{⟧}}\<%
\end{code}

\subsection{Soundness}

\GISs provide no canonical way for proving the soundness of the generalized interpretation of an inference system, so we have to handcraft the proof. We start by proving that the inductive interpretation of the inference system with the corules implies the existence of a reduction leading to a successful configuration.
\begin{code}%
\>[0]\AgdaFunction{FCompI→MaySucceed}\AgdaSpace{}%
\AgdaSymbol{:}\AgdaSpace{}%
\AgdaSymbol{∀\{}\AgdaBound{Se}\AgdaSymbol{\}}\AgdaSpace{}%
\AgdaSymbol{→}\AgdaSpace{}%
\AgdaFunction{FCompI}\AgdaSpace{}%
\AgdaBound{Se}\AgdaSpace{}%
\AgdaSymbol{→}\AgdaSpace{}%
\AgdaFunction{MaySucceed}\AgdaSpace{}%
\AgdaBound{Se}\<%
\\
\>[0]\AgdaFunction{FCompI→MaySucceed}\AgdaSpace{}%
\AgdaSymbol{(}\AgdaInductiveConstructor{fold}\AgdaSpace{}%
\AgdaSymbol{(}\AgdaInductiveConstructor{inj₁}\AgdaSpace{}%
\AgdaInductiveConstructor{success}\AgdaSpace{}%
\AgdaOperator{\AgdaInductiveConstructor{,}}\AgdaSpace{}%
\AgdaSymbol{(\AgdaUnderscore{}}\AgdaSpace{}%
\AgdaOperator{\AgdaInductiveConstructor{,}}\AgdaSpace{}%
\AgdaBound{succ}\AgdaSymbol{)}\AgdaSpace{}%
\AgdaOperator{\AgdaInductiveConstructor{,}}\AgdaSpace{}%
\AgdaInductiveConstructor{refl}\AgdaSpace{}%
\AgdaOperator{\AgdaInductiveConstructor{,}}\AgdaSpace{}%
\AgdaSymbol{\AgdaUnderscore{}))}\AgdaSpace{}%
\AgdaSymbol{=}\AgdaSpace{}%
\AgdaSymbol{\AgdaUnderscore{}}\AgdaSpace{}%
\AgdaOperator{\AgdaInductiveConstructor{,}}\AgdaSpace{}%
\AgdaInductiveConstructor{ε}\AgdaSpace{}%
\AgdaOperator{\AgdaInductiveConstructor{,}}\AgdaSpace{}%
\AgdaBound{succ}\<%
\\
\>[0]\AgdaFunction{FCompI→MaySucceed}\AgdaSpace{}%
\AgdaSymbol{(}\AgdaInductiveConstructor{fold}\AgdaSpace{}%
\AgdaSymbol{(}\AgdaInductiveConstructor{inj₁}\AgdaSpace{}%
\AgdaInductiveConstructor{out-inp}\AgdaSpace{}%
\AgdaOperator{\AgdaInductiveConstructor{,}}\AgdaSpace{}%
\AgdaSymbol{(\AgdaUnderscore{}}\AgdaSpace{}%
\AgdaOperator{\AgdaInductiveConstructor{,}}\AgdaSpace{}%
\AgdaSymbol{\AgdaUnderscore{}}\AgdaSpace{}%
\AgdaOperator{\AgdaInductiveConstructor{,}}\AgdaSpace{}%
\AgdaBound{fx}\AgdaSymbol{)}\AgdaSpace{}%
\AgdaOperator{\AgdaInductiveConstructor{,}}\AgdaSpace{}%
\AgdaInductiveConstructor{refl}\AgdaSpace{}%
\AgdaOperator{\AgdaInductiveConstructor{,}}\AgdaSpace{}%
\AgdaBound{pr}\AgdaSymbol{))}\AgdaSpace{}%
\AgdaSymbol{=}\<%
\\
\>[0][@{}l@{\AgdaIndent{0}}]%
\>[2]\AgdaKeyword{let}\AgdaSpace{}%
\AgdaSymbol{\AgdaUnderscore{}}\AgdaSpace{}%
\AgdaOperator{\AgdaInductiveConstructor{,}}\AgdaSpace{}%
\AgdaBound{reds}\AgdaSpace{}%
\AgdaOperator{\AgdaInductiveConstructor{,}}\AgdaSpace{}%
\AgdaBound{succ}\AgdaSpace{}%
\AgdaSymbol{=}\AgdaSpace{}%
\AgdaFunction{FCompI→MaySucceed}\AgdaSpace{}%
\AgdaSymbol{(}\AgdaBound{pr}\AgdaSpace{}%
\AgdaSymbol{(\AgdaUnderscore{}}\AgdaSpace{}%
\AgdaOperator{\AgdaInductiveConstructor{,}}\AgdaSpace{}%
\AgdaBound{fx}\AgdaSymbol{))}\AgdaSpace{}%
\AgdaKeyword{in}\<%
\\
\>[2]\AgdaSymbol{\AgdaUnderscore{}}\AgdaSpace{}%
\AgdaOperator{\AgdaInductiveConstructor{,}}\AgdaSpace{}%
\AgdaInductiveConstructor{sync}\AgdaSpace{}%
\AgdaSymbol{(}\AgdaInductiveConstructor{out}\AgdaSpace{}%
\AgdaBound{fx}\AgdaSymbol{)}\AgdaSpace{}%
\AgdaInductiveConstructor{inp}\AgdaSpace{}%
\AgdaOperator{\AgdaInductiveConstructor{◅}}\AgdaSpace{}%
\AgdaBound{reds}\AgdaSpace{}%
\AgdaOperator{\AgdaInductiveConstructor{,}}\AgdaSpace{}%
\AgdaBound{succ}\<%
\\
\>[0]\AgdaFunction{FCompI→MaySucceed}\AgdaSpace{}%
\AgdaSymbol{(}\AgdaInductiveConstructor{fold}\AgdaSpace{}%
\AgdaSymbol{(}\AgdaInductiveConstructor{inj₁}\AgdaSpace{}%
\AgdaInductiveConstructor{inp-out}\AgdaSpace{}%
\AgdaOperator{\AgdaInductiveConstructor{,}}\AgdaSpace{}%
\AgdaSymbol{(\AgdaUnderscore{}}\AgdaSpace{}%
\AgdaOperator{\AgdaInductiveConstructor{,}}\AgdaSpace{}%
\AgdaSymbol{\AgdaUnderscore{}}\AgdaSpace{}%
\AgdaOperator{\AgdaInductiveConstructor{,}}\AgdaSpace{}%
\AgdaBound{gx}\AgdaSymbol{)}\AgdaSpace{}%
\AgdaOperator{\AgdaInductiveConstructor{,}}\AgdaSpace{}%
\AgdaInductiveConstructor{refl}\AgdaSpace{}%
\AgdaOperator{\AgdaInductiveConstructor{,}}\AgdaSpace{}%
\AgdaBound{pr}\AgdaSymbol{))}\AgdaSpace{}%
\AgdaSymbol{=}\<%
\\
\>[0][@{}l@{\AgdaIndent{0}}]%
\>[2]\AgdaKeyword{let}\AgdaSpace{}%
\AgdaSymbol{\AgdaUnderscore{}}\AgdaSpace{}%
\AgdaOperator{\AgdaInductiveConstructor{,}}\AgdaSpace{}%
\AgdaBound{reds}\AgdaSpace{}%
\AgdaOperator{\AgdaInductiveConstructor{,}}\AgdaSpace{}%
\AgdaBound{succ}\AgdaSpace{}%
\AgdaSymbol{=}\AgdaSpace{}%
\AgdaFunction{FCompI→MaySucceed}\AgdaSpace{}%
\AgdaSymbol{(}\AgdaBound{pr}\AgdaSpace{}%
\AgdaSymbol{(\AgdaUnderscore{}}\AgdaSpace{}%
\AgdaOperator{\AgdaInductiveConstructor{,}}\AgdaSpace{}%
\AgdaBound{gx}\AgdaSymbol{))}\AgdaSpace{}%
\AgdaKeyword{in}\<%
\\
\>[2]\AgdaSymbol{\AgdaUnderscore{}}\AgdaSpace{}%
\AgdaOperator{\AgdaInductiveConstructor{,}}\AgdaSpace{}%
\AgdaInductiveConstructor{sync}\AgdaSpace{}%
\AgdaInductiveConstructor{inp}\AgdaSpace{}%
\AgdaSymbol{(}\AgdaInductiveConstructor{out}\AgdaSpace{}%
\AgdaBound{gx}\AgdaSymbol{)}\AgdaSpace{}%
\AgdaOperator{\AgdaInductiveConstructor{◅}}\AgdaSpace{}%
\AgdaBound{reds}\AgdaSpace{}%
\AgdaOperator{\AgdaInductiveConstructor{,}}\AgdaSpace{}%
\AgdaBound{succ}\<%
\\
\>[0]\AgdaFunction{FCompI→MaySucceed}\AgdaSpace{}%
\AgdaSymbol{(}\AgdaInductiveConstructor{fold}\AgdaSpace{}%
\AgdaSymbol{(}\AgdaInductiveConstructor{inj₂}\AgdaSpace{}%
\AgdaInductiveConstructor{out-inp}\AgdaSpace{}%
\AgdaOperator{\AgdaInductiveConstructor{,}}\AgdaSpace{}%
\AgdaSymbol{(\AgdaUnderscore{}}\AgdaSpace{}%
\AgdaOperator{\AgdaInductiveConstructor{,}}\AgdaSpace{}%
\AgdaSymbol{\AgdaUnderscore{}}\AgdaSpace{}%
\AgdaOperator{\AgdaInductiveConstructor{,}}\AgdaSpace{}%
\AgdaBound{fx}\AgdaSymbol{)}\AgdaSpace{}%
\AgdaOperator{\AgdaInductiveConstructor{,}}\AgdaSpace{}%
\AgdaInductiveConstructor{refl}\AgdaSpace{}%
\AgdaOperator{\AgdaInductiveConstructor{,}}\AgdaSpace{}%
\AgdaBound{pr}\AgdaSymbol{))}\AgdaSpace{}%
\AgdaSymbol{=}\<%
\\
\>[0][@{}l@{\AgdaIndent{0}}]%
\>[2]\AgdaKeyword{let}\AgdaSpace{}%
\AgdaSymbol{\AgdaUnderscore{}}\AgdaSpace{}%
\AgdaOperator{\AgdaInductiveConstructor{,}}\AgdaSpace{}%
\AgdaBound{reds}\AgdaSpace{}%
\AgdaOperator{\AgdaInductiveConstructor{,}}\AgdaSpace{}%
\AgdaBound{succ}\AgdaSpace{}%
\AgdaSymbol{=}\AgdaSpace{}%
\AgdaFunction{FCompI→MaySucceed}\AgdaSpace{}%
\AgdaSymbol{(}\AgdaBound{pr}\AgdaSpace{}%
\AgdaInductiveConstructor{Data.Fin.zero}\AgdaSymbol{)}\AgdaSpace{}%
\AgdaKeyword{in}\<%
\\
\>[2]\AgdaSymbol{\AgdaUnderscore{}}\AgdaSpace{}%
\AgdaOperator{\AgdaInductiveConstructor{,}}\AgdaSpace{}%
\AgdaInductiveConstructor{sync}\AgdaSpace{}%
\AgdaSymbol{(}\AgdaInductiveConstructor{out}\AgdaSpace{}%
\AgdaBound{fx}\AgdaSymbol{)}\AgdaSpace{}%
\AgdaInductiveConstructor{inp}\AgdaSpace{}%
\AgdaOperator{\AgdaInductiveConstructor{◅}}\AgdaSpace{}%
\AgdaBound{reds}\AgdaSpace{}%
\AgdaOperator{\AgdaInductiveConstructor{,}}\AgdaSpace{}%
\AgdaBound{succ}\<%
\\
\>[0]\AgdaFunction{FCompI→MaySucceed}\AgdaSpace{}%
\AgdaSymbol{(}\AgdaInductiveConstructor{fold}\AgdaSpace{}%
\AgdaSymbol{(}\AgdaInductiveConstructor{inj₂}\AgdaSpace{}%
\AgdaInductiveConstructor{inp-out}\AgdaSpace{}%
\AgdaOperator{\AgdaInductiveConstructor{,}}\AgdaSpace{}%
\AgdaSymbol{(\AgdaUnderscore{}}\AgdaSpace{}%
\AgdaOperator{\AgdaInductiveConstructor{,}}\AgdaSpace{}%
\AgdaSymbol{\AgdaUnderscore{}}\AgdaSpace{}%
\AgdaOperator{\AgdaInductiveConstructor{,}}\AgdaSpace{}%
\AgdaBound{gx}\AgdaSymbol{)}\AgdaSpace{}%
\AgdaOperator{\AgdaInductiveConstructor{,}}\AgdaSpace{}%
\AgdaInductiveConstructor{refl}\AgdaSpace{}%
\AgdaOperator{\AgdaInductiveConstructor{,}}\AgdaSpace{}%
\AgdaBound{pr}\AgdaSymbol{))}\AgdaSpace{}%
\AgdaSymbol{=}\<%
\\
\>[0][@{}l@{\AgdaIndent{0}}]%
\>[2]\AgdaKeyword{let}\AgdaSpace{}%
\AgdaSymbol{\AgdaUnderscore{}}\AgdaSpace{}%
\AgdaOperator{\AgdaInductiveConstructor{,}}\AgdaSpace{}%
\AgdaBound{reds}\AgdaSpace{}%
\AgdaOperator{\AgdaInductiveConstructor{,}}\AgdaSpace{}%
\AgdaBound{succ}\AgdaSpace{}%
\AgdaSymbol{=}\AgdaSpace{}%
\AgdaFunction{FCompI→MaySucceed}\AgdaSpace{}%
\AgdaSymbol{(}\AgdaBound{pr}\AgdaSpace{}%
\AgdaInductiveConstructor{Data.Fin.zero}\AgdaSymbol{)}\AgdaSpace{}%
\AgdaKeyword{in}\<%
\\
\>[2]\AgdaSymbol{\AgdaUnderscore{}}\AgdaSpace{}%
\AgdaOperator{\AgdaInductiveConstructor{,}}\AgdaSpace{}%
\AgdaInductiveConstructor{sync}\AgdaSpace{}%
\AgdaInductiveConstructor{inp}\AgdaSpace{}%
\AgdaSymbol{(}\AgdaInductiveConstructor{out}\AgdaSpace{}%
\AgdaBound{gx}\AgdaSymbol{)}\AgdaSpace{}%
\AgdaOperator{\AgdaInductiveConstructor{◅}}\AgdaSpace{}%
\AgdaBound{reds}\AgdaSpace{}%
\AgdaOperator{\AgdaInductiveConstructor{,}}\AgdaSpace{}%
\AgdaBound{succ}\<%
\end{code}

There are two things worth noting here. First, in the union of the two inference systems each rule is identified by a name of the form $\AgdaInductiveConstructor{inj₁}~n$ or $\AgdaInductiveConstructor{inj₂}~n$ where $n$ is either the name of a rule or of a corule, respectively.
Also, we use the function $\AgdaBound{pr}$ to access the premises of a (co)rule by their position. For the plain rules $\AgdaInductiveConstructor{out-inp}$ and $\AgdaInductiveConstructor{inp-out}$ the position is the witness $\AgdaBound{fx}$ or $\AgdaBound{gx}$ that the value exchanged in the synchronization belongs to the domain of the continuation function of the sender. For the corules $\AgdaInductiveConstructor{out-inp}$ and $\AgdaInductiveConstructor{inp-out}$, we use the position $\AgdaInductiveConstructor{Data.Fin.zero}$ to access the first and only premise in their list of premises.

The next auxiliary result establishes a ``subject reduction'' property for fair compliance: if $\compliance{R}{S}$ and $\session{R}{S} \wred \session{R'}{S'}$, then $\compliance{R'}{S'}$. Note that this property is trivial to prove when we consider the specification of fair compliance (Definition~\ref{def:fcomp}), but here we are referring to the predicate defined by the \GIS.\@ The proof consists of a simple induction on the reduction $\session{R}{S} \wred \session{R'}{S'}$, where $\AgdaInductiveConstructor{ε}$ (constructor of $\AgdaFunction{Star}$) represents the base case (when there are no reductions) and $\AgdaBound{red}~\AgdaInductiveConstructor{◅}~\AgdaBound{reds}$ represents a chain of reductions starting with the single reduction $\AgdaBound{red}$ followed by the reductions $\AgdaBound{reds}$.
\begin{code}%
\>[0]\AgdaFunction{sr}\AgdaSpace{}%
\AgdaSymbol{:}\AgdaSpace{}%
\AgdaSymbol{∀\{}\AgdaBound{Se}\AgdaSpace{}%
\AgdaBound{Se'}\AgdaSymbol{\}}\AgdaSpace{}%
\AgdaSymbol{→}\AgdaSpace{}%
\AgdaFunction{FCompG}\AgdaSpace{}%
\AgdaBound{Se}\AgdaSpace{}%
\AgdaSymbol{→}\AgdaSpace{}%
\AgdaFunction{Reductions}\AgdaSpace{}%
\AgdaBound{Se}\AgdaSpace{}%
\AgdaBound{Se'}\AgdaSpace{}%
\AgdaSymbol{→}\AgdaSpace{}%
\AgdaFunction{FCompG}\AgdaSpace{}%
\AgdaBound{Se'}\<%
\\
\>[0]\AgdaFunction{sr}\AgdaSpace{}%
\AgdaBound{fc}\AgdaSpace{}%
\AgdaInductiveConstructor{ε}\AgdaSpace{}%
\AgdaSymbol{=}\AgdaSpace{}%
\AgdaBound{fc}\<%
\\
\>[0]\AgdaFunction{sr}\AgdaSpace{}%
\AgdaBound{fc}\AgdaSpace{}%
\AgdaSymbol{(\AgdaUnderscore{}}\AgdaSpace{}%
\AgdaOperator{\AgdaInductiveConstructor{◅}}\AgdaSpace{}%
\AgdaSymbol{\AgdaUnderscore{})}\AgdaSpace{}%
\AgdaKeyword{with}\AgdaSpace{}%
\AgdaBound{fc}\AgdaSpace{}%
\AgdaSymbol{.}\AgdaField{CoInd⟦\AgdaUnderscore{}⟧.unfold}\<%
\\
\>[0]\AgdaFunction{sr}\AgdaSpace{}%
\AgdaSymbol{\AgdaUnderscore{}}\AgdaSpace{}%
\AgdaSymbol{(}\AgdaInductiveConstructor{sync}\AgdaSpace{}%
\AgdaSymbol{(}\AgdaInductiveConstructor{out}\AgdaSpace{}%
\AgdaBound{fx}\AgdaSymbol{)}\AgdaSpace{}%
\AgdaInductiveConstructor{inp}\AgdaSpace{}%
\AgdaOperator{\AgdaInductiveConstructor{◅}}\AgdaSpace{}%
\AgdaSymbol{\AgdaUnderscore{})}\AgdaSpace{}%
\AgdaSymbol{|}\AgdaSpace{}%
\AgdaInductiveConstructor{success}\AgdaSpace{}%
\AgdaOperator{\AgdaInductiveConstructor{,}}\AgdaSpace{}%
\AgdaSymbol{((\AgdaUnderscore{}}\AgdaSpace{}%
\AgdaOperator{\AgdaInductiveConstructor{,}}\AgdaSpace{}%
\AgdaInductiveConstructor{success}\AgdaSpace{}%
\AgdaSymbol{(}\AgdaInductiveConstructor{out}\AgdaSpace{}%
\AgdaBound{e}\AgdaSymbol{)}\AgdaSpace{}%
\AgdaSymbol{\AgdaUnderscore{})}\AgdaSpace{}%
\AgdaOperator{\AgdaInductiveConstructor{,}}\AgdaSpace{}%
\AgdaSymbol{\AgdaUnderscore{})}\AgdaSpace{}%
\AgdaOperator{\AgdaInductiveConstructor{,}}\AgdaSpace{}%
\AgdaInductiveConstructor{refl}\AgdaSpace{}%
\AgdaOperator{\AgdaInductiveConstructor{,}}\AgdaSpace{}%
\AgdaSymbol{\AgdaUnderscore{}}\AgdaSpace{}%
\AgdaSymbol{=}\<%
\\
\>[0][@{}l@{\AgdaIndent{0}}]%
\>[2]\AgdaFunction{⊥-elim}\AgdaSpace{}%
\AgdaSymbol{(}\AgdaBound{e}\AgdaSpace{}%
\AgdaSymbol{\AgdaUnderscore{}}\AgdaSpace{}%
\AgdaBound{fx}\AgdaSymbol{)}\<%
\\
\>[0]\AgdaFunction{sr}\AgdaSpace{}%
\AgdaSymbol{\AgdaUnderscore{}}\AgdaSpace{}%
\AgdaSymbol{(}\AgdaInductiveConstructor{sync}\AgdaSpace{}%
\AgdaInductiveConstructor{inp}\AgdaSpace{}%
\AgdaSymbol{(}\AgdaInductiveConstructor{out}\AgdaSpace{}%
\AgdaBound{gx}\AgdaSymbol{)}\AgdaSpace{}%
\AgdaOperator{\AgdaInductiveConstructor{◅}}\AgdaSpace{}%
\AgdaBound{reds}\AgdaSymbol{)}\AgdaSpace{}%
\AgdaSymbol{|}\AgdaSpace{}%
\AgdaInductiveConstructor{inp-out}\AgdaSpace{}%
\AgdaOperator{\AgdaInductiveConstructor{,}}\AgdaSpace{}%
\AgdaSymbol{\AgdaUnderscore{}}\AgdaSpace{}%
\AgdaOperator{\AgdaInductiveConstructor{,}}\AgdaSpace{}%
\AgdaInductiveConstructor{refl}\AgdaSpace{}%
\AgdaOperator{\AgdaInductiveConstructor{,}}\AgdaSpace{}%
\AgdaBound{pr}\AgdaSpace{}%
\AgdaSymbol{=}\AgdaSpace{}%
\AgdaFunction{sr}\AgdaSpace{}%
\AgdaSymbol{(}\AgdaBound{pr}\AgdaSpace{}%
\AgdaSymbol{(\AgdaUnderscore{}}\AgdaSpace{}%
\AgdaOperator{\AgdaInductiveConstructor{,}}\AgdaSpace{}%
\AgdaBound{gx}\AgdaSymbol{))}\AgdaSpace{}%
\AgdaBound{reds}\<%
\\
\>[0]\AgdaFunction{sr}\AgdaSpace{}%
\AgdaSymbol{\AgdaUnderscore{}}\AgdaSpace{}%
\AgdaSymbol{(}\AgdaInductiveConstructor{sync}\AgdaSpace{}%
\AgdaSymbol{(}\AgdaInductiveConstructor{out}\AgdaSpace{}%
\AgdaBound{fx}\AgdaSymbol{)}\AgdaSpace{}%
\AgdaInductiveConstructor{inp}\AgdaSpace{}%
\AgdaOperator{\AgdaInductiveConstructor{◅}}\AgdaSpace{}%
\AgdaBound{reds}\AgdaSymbol{)}\AgdaSpace{}%
\AgdaSymbol{|}\AgdaSpace{}%
\AgdaInductiveConstructor{out-inp}\AgdaSpace{}%
\AgdaOperator{\AgdaInductiveConstructor{,}}\AgdaSpace{}%
\AgdaSymbol{\AgdaUnderscore{}}\AgdaSpace{}%
\AgdaOperator{\AgdaInductiveConstructor{,}}\AgdaSpace{}%
\AgdaInductiveConstructor{refl}\AgdaSpace{}%
\AgdaOperator{\AgdaInductiveConstructor{,}}\AgdaSpace{}%
\AgdaBound{pr}\AgdaSpace{}%
\AgdaSymbol{=}\AgdaSpace{}%
\AgdaFunction{sr}\AgdaSpace{}%
\AgdaSymbol{(}\AgdaBound{pr}\AgdaSpace{}%
\AgdaSymbol{(\AgdaUnderscore{}}\AgdaSpace{}%
\AgdaOperator{\AgdaInductiveConstructor{,}}\AgdaSpace{}%
\AgdaBound{fx}\AgdaSymbol{))}\AgdaSpace{}%
\AgdaBound{reds}\<%
\end{code}

Note that we have an absurd case, in which a successful configuration apparently reduces, which we rule out using false elimination ($\AgdaFunction{⊥-elim}$).
The soundness proof is a simple combination of the above auxiliary results. We
use the library function
\[
  \AgdaFunction{fcoind-to-ind} :
  \forall\{\AgdaBound{is} : \AgdaDatatype{IS}~\AgdaDatatype{U}\}
         \{\AgdaBound{cois} : \AgdaDatatype{IS}~\AgdaDatatype{U}\}
  ~\AgdaSymbol{→}~
  \AgdaFunction{FCoInd⟦}~\AgdaBound{is}~\AgdaFunction{,}~\AgdaBound{cois}~\AgdaFunction{⟧}
  ~\AgdaFunction{⊆}~
  \AgdaFunction{Ind⟦}~\AgdaBound{is}~\AgdaFunction{∪}~\AgdaBound{cois}~\AgdaFunction{⟧}
\]
to extract an inductive derivation of $\AgdaFunction{FCompI}~\mathit{Se}$ from a
derivation of $\AgdaFunction{FCompG}~\mathit{Se}$ in the \GIS for fair
compliance.
\begin{code}%
\>[0]\AgdaFunction{sound}\AgdaSpace{}%
\AgdaSymbol{:}\AgdaSpace{}%
\AgdaSymbol{∀\{}\AgdaBound{Se}\AgdaSymbol{\}}\AgdaSpace{}%
\AgdaSymbol{→}\AgdaSpace{}%
\AgdaFunction{FCompG}\AgdaSpace{}%
\AgdaBound{Se}\AgdaSpace{}%
\AgdaSymbol{→}\AgdaSpace{}%
\AgdaFunction{FCompS}\AgdaSpace{}%
\AgdaBound{Se}\<%
\\
\>[0]\AgdaFunction{sound}\AgdaSpace{}%
\AgdaBound{fc}\AgdaSpace{}%
\AgdaBound{reds}\AgdaSpace{}%
\AgdaSymbol{=}\AgdaSpace{}%
\AgdaFunction{FCompI→MaySucceed}\AgdaSpace{}%
\AgdaSymbol{(}\AgdaFunction{fcoind-to-ind}\AgdaSpace{}%
\AgdaSymbol{(}\AgdaFunction{sr}\AgdaSpace{}%
\AgdaBound{fc}\AgdaSpace{}%
\AgdaBound{reds}\AgdaSymbol{))}\<%
\end{code}

\subsection{Completeness}

For the completeness result we appeal to the bounded coinduction principle of \GISs (Proposition~\ref{prop:bcp}), which requires us to prove boundedness and consistency of $\AgdaFunction{FCompS}$. Concerning boundedness, we start by computing a proof of $\AgdaFunction{FCompI}~\mathit{Se}$ for every session $\mathit{Se}$ that may reduce a successful configuration, by induction on the reduction.
\begin{code}%
\>[0]\AgdaFunction{MaySucceed→FCompI}\AgdaSpace{}%
\AgdaSymbol{:}\AgdaSpace{}%
\AgdaSymbol{∀\{}\AgdaBound{Se}\AgdaSymbol{\}}\AgdaSpace{}%
\AgdaSymbol{→}\AgdaSpace{}%
\AgdaFunction{MaySucceed}\AgdaSpace{}%
\AgdaBound{Se}\AgdaSpace{}%
\AgdaSymbol{→}\AgdaSpace{}%
\AgdaFunction{FCompI}\AgdaSpace{}%
\AgdaBound{Se}\<%
\\
\>[0]\AgdaFunction{MaySucceed→FCompI}\AgdaSpace{}%
\AgdaSymbol{(\AgdaUnderscore{}}\AgdaSpace{}%
\AgdaOperator{\AgdaInductiveConstructor{,}}\AgdaSpace{}%
\AgdaBound{reds}\AgdaSpace{}%
\AgdaOperator{\AgdaInductiveConstructor{,}}\AgdaSpace{}%
\AgdaBound{succ}\AgdaSymbol{)}\AgdaSpace{}%
\AgdaSymbol{=}\AgdaSpace{}%
\AgdaFunction{aux}\AgdaSpace{}%
\AgdaBound{reds}\AgdaSpace{}%
\AgdaBound{succ}\<%
\\
\>[0][@{}l@{\AgdaIndent{0}}]%
\>[2]\AgdaKeyword{where}\<%
\\
\>[2][@{}l@{\AgdaIndent{0}}]%
\>[4]\AgdaFunction{aux}\AgdaSpace{}%
\AgdaSymbol{:}\AgdaSpace{}%
\AgdaSymbol{∀\{}\AgdaBound{Se}\AgdaSpace{}%
\AgdaBound{Se'}\AgdaSymbol{\}}\AgdaSpace{}%
\AgdaSymbol{→}\AgdaSpace{}%
\AgdaFunction{Reductions}\AgdaSpace{}%
\AgdaBound{Se}\AgdaSpace{}%
\AgdaBound{Se'}\AgdaSpace{}%
\AgdaSymbol{→}\AgdaSpace{}%
\AgdaDatatype{Success}\AgdaSpace{}%
\AgdaBound{Se'}\AgdaSpace{}%
\AgdaSymbol{→}\AgdaSpace{}%
\AgdaFunction{FCompI}\AgdaSpace{}%
\AgdaBound{Se}\<%
\\
\>[4]\AgdaFunction{aux}\AgdaSpace{}%
\AgdaInductiveConstructor{ε}\AgdaSpace{}%
\AgdaBound{succ}\AgdaSpace{}%
\AgdaSymbol{=}\AgdaSpace{}%
\AgdaFunction{apply-ind}\AgdaSpace{}%
\AgdaSymbol{(}\AgdaInductiveConstructor{inj₁}\AgdaSpace{}%
\AgdaInductiveConstructor{success}\AgdaSymbol{)}\AgdaSpace{}%
\AgdaSymbol{(\AgdaUnderscore{}}\AgdaSpace{}%
\AgdaOperator{\AgdaInductiveConstructor{,}}\AgdaSpace{}%
\AgdaBound{succ}\AgdaSymbol{)}\AgdaSpace{}%
\AgdaSymbol{λ}\AgdaSpace{}%
\AgdaSymbol{()}\<%
\\
\>[4]\AgdaFunction{aux}\AgdaSpace{}%
\AgdaSymbol{(}\AgdaInductiveConstructor{sync}\AgdaSpace{}%
\AgdaSymbol{(}\AgdaInductiveConstructor{out}\AgdaSpace{}%
\AgdaBound{fx}\AgdaSymbol{)}\AgdaSpace{}%
\AgdaInductiveConstructor{inp}\AgdaSpace{}%
\AgdaOperator{\AgdaInductiveConstructor{◅}}\AgdaSpace{}%
\AgdaBound{red}\AgdaSymbol{)}\AgdaSpace{}%
\AgdaBound{succ}\AgdaSpace{}%
\AgdaSymbol{=}\<%
\\
\>[4][@{}l@{\AgdaIndent{0}}]%
\>[6]\AgdaFunction{apply-ind}\AgdaSpace{}%
\AgdaSymbol{(}\AgdaInductiveConstructor{inj₂}\AgdaSpace{}%
\AgdaInductiveConstructor{out-inp}\AgdaSymbol{)}\AgdaSpace{}%
\AgdaSymbol{(\AgdaUnderscore{}}\AgdaSpace{}%
\AgdaOperator{\AgdaInductiveConstructor{,}}\AgdaSpace{}%
\AgdaSymbol{\AgdaUnderscore{}}\AgdaSpace{}%
\AgdaOperator{\AgdaInductiveConstructor{,}}\AgdaSpace{}%
\AgdaBound{fx}\AgdaSymbol{)}\AgdaSpace{}%
\AgdaSymbol{λ\{}\AgdaInductiveConstructor{Data.Fin.zero}\AgdaSpace{}%
\AgdaSymbol{→}\AgdaSpace{}%
\AgdaFunction{aux}\AgdaSpace{}%
\AgdaBound{red}\AgdaSpace{}%
\AgdaBound{succ}\AgdaSymbol{\}}\<%
\\
\>[4]\AgdaFunction{aux}\AgdaSpace{}%
\AgdaSymbol{(}\AgdaInductiveConstructor{sync}\AgdaSpace{}%
\AgdaInductiveConstructor{inp}\AgdaSpace{}%
\AgdaSymbol{(}\AgdaInductiveConstructor{out}\AgdaSpace{}%
\AgdaBound{gx}\AgdaSymbol{)}\AgdaSpace{}%
\AgdaOperator{\AgdaInductiveConstructor{◅}}\AgdaSpace{}%
\AgdaBound{red}\AgdaSymbol{)}\AgdaSpace{}%
\AgdaBound{succ}\AgdaSpace{}%
\AgdaSymbol{=}\<%
\\
\>[4][@{}l@{\AgdaIndent{0}}]%
\>[6]\AgdaFunction{apply-ind}\AgdaSpace{}%
\AgdaSymbol{(}\AgdaInductiveConstructor{inj₂}\AgdaSpace{}%
\AgdaInductiveConstructor{inp-out}\AgdaSymbol{)}\AgdaSpace{}%
\AgdaSymbol{(\AgdaUnderscore{}}\AgdaSpace{}%
\AgdaOperator{\AgdaInductiveConstructor{,}}\AgdaSpace{}%
\AgdaSymbol{\AgdaUnderscore{}}\AgdaSpace{}%
\AgdaOperator{\AgdaInductiveConstructor{,}}\AgdaSpace{}%
\AgdaBound{gx}\AgdaSymbol{)}\AgdaSpace{}%
\AgdaSymbol{λ\{}\AgdaInductiveConstructor{Data.Fin.zero}\AgdaSpace{}%
\AgdaSymbol{→}\AgdaSpace{}%
\AgdaFunction{aux}\AgdaSpace{}%
\AgdaBound{red}\AgdaSpace{}%
\AgdaBound{succ}\AgdaSymbol{\}}\<%
\end{code}

Then, boundedness follows by observing that $R$ fairly compliant with $S$ implies the existence of a successful configuration reachable from $\session{R}{S}$.
\begin{code}%
\>[0]\AgdaFunction{bounded}\AgdaSpace{}%
\AgdaSymbol{:}\AgdaSpace{}%
\AgdaSymbol{∀\{}\AgdaBound{Se}\AgdaSymbol{\}}\AgdaSpace{}%
\AgdaSymbol{→}\AgdaSpace{}%
\AgdaFunction{FCompS}\AgdaSpace{}%
\AgdaBound{Se}\AgdaSpace{}%
\AgdaSymbol{→}\AgdaSpace{}%
\AgdaFunction{FCompI}\AgdaSpace{}%
\AgdaBound{Se}\<%
\\
\>[0]\AgdaFunction{bounded}\AgdaSpace{}%
\AgdaBound{fc}\AgdaSpace{}%
\AgdaSymbol{=}\AgdaSpace{}%
\AgdaFunction{MaySucceed→FCompI}\AgdaSpace{}%
\AgdaSymbol{(}\AgdaBound{fc}\AgdaSpace{}%
\AgdaInductiveConstructor{ε}\AgdaSymbol{)}\<%
\end{code}

Showing that $\AgdaFunction{FCompS}$ is consistent means showing that every configuration $\AgdaBound{Se}$ that satisfies $\AgdaFunction{FCompS}$ is found in the conclusion of a rule in the inference system $\AgdaFunction{FCompIS}$ whose premises are all configurations that in turn satisfy $\AgdaFunction{FCompS}$. This follows by a straightforward case analysis on the \emph{first} reduction of $\AgdaBound{Se}$ that leads to a successful configuration.
\begin{code}%
\>[0]\AgdaFunction{consistent}\AgdaSpace{}%
\AgdaSymbol{:}\AgdaSpace{}%
\AgdaSymbol{∀\{}\AgdaBound{Se}\AgdaSymbol{\}}\AgdaSpace{}%
\AgdaSymbol{→}\AgdaSpace{}%
\AgdaFunction{FCompS}\AgdaSpace{}%
\AgdaBound{Se}\AgdaSpace{}%
\AgdaSymbol{→}\AgdaSpace{}%
\AgdaOperator{\AgdaFunction{ISF[}}\AgdaSpace{}%
\AgdaFunction{FCompIS}\AgdaSpace{}%
\AgdaOperator{\AgdaFunction{]}}\AgdaSpace{}%
\AgdaFunction{FCompS}\AgdaSpace{}%
\AgdaBound{Se}\<%
\\
\>[0]\AgdaFunction{consistent}\AgdaSpace{}%
\AgdaBound{fc}\AgdaSpace{}%
\AgdaKeyword{with}\AgdaSpace{}%
\AgdaBound{fc}\AgdaSpace{}%
\AgdaInductiveConstructor{ε}\<%
\\
\>[0]\AgdaSymbol{...}\AgdaSpace{}%
\AgdaSymbol{|}\AgdaSpace{}%
\AgdaSymbol{\AgdaUnderscore{}}\AgdaSpace{}%
\AgdaOperator{\AgdaInductiveConstructor{,}}\AgdaSpace{}%
\AgdaInductiveConstructor{ε}\AgdaSpace{}%
\AgdaOperator{\AgdaInductiveConstructor{,}}\AgdaSpace{}%
\AgdaBound{succ}\AgdaSpace{}%
\AgdaSymbol{=}\AgdaSpace{}%
\AgdaInductiveConstructor{success}\AgdaSpace{}%
\AgdaOperator{\AgdaInductiveConstructor{,}}\AgdaSpace{}%
\AgdaSymbol{(\AgdaUnderscore{}}\AgdaSpace{}%
\AgdaOperator{\AgdaInductiveConstructor{,}}\AgdaSpace{}%
\AgdaBound{succ}\AgdaSymbol{)}\AgdaSpace{}%
\AgdaOperator{\AgdaInductiveConstructor{,}}\AgdaSpace{}%
\AgdaInductiveConstructor{refl}\AgdaSpace{}%
\AgdaOperator{\AgdaInductiveConstructor{,}}\AgdaSpace{}%
\AgdaSymbol{λ}\AgdaSpace{}%
\AgdaSymbol{()}\<%
\\
\>[0]\AgdaSymbol{...}\AgdaSpace{}%
\AgdaSymbol{|}\AgdaSpace{}%
\AgdaSymbol{\AgdaUnderscore{}}\AgdaSpace{}%
\AgdaOperator{\AgdaInductiveConstructor{,}}\AgdaSpace{}%
\AgdaInductiveConstructor{sync}\AgdaSpace{}%
\AgdaSymbol{(}\AgdaInductiveConstructor{out}\AgdaSpace{}%
\AgdaBound{fx}\AgdaSymbol{)}\AgdaSpace{}%
\AgdaInductiveConstructor{inp}\AgdaSpace{}%
\AgdaOperator{\AgdaInductiveConstructor{◅}}\AgdaSpace{}%
\AgdaSymbol{\AgdaUnderscore{}}\AgdaSpace{}%
\AgdaOperator{\AgdaInductiveConstructor{,}}\AgdaSpace{}%
\AgdaSymbol{\AgdaUnderscore{}}\AgdaSpace{}%
\AgdaSymbol{=}\<%
\\
\>[0][@{}l@{\AgdaIndent{0}}]%
\>[2]\AgdaInductiveConstructor{out-inp}\AgdaSpace{}%
\AgdaOperator{\AgdaInductiveConstructor{,}}\AgdaSpace{}%
\AgdaSymbol{(\AgdaUnderscore{}}\AgdaSpace{}%
\AgdaOperator{\AgdaInductiveConstructor{,}}\AgdaSpace{}%
\AgdaSymbol{\AgdaUnderscore{}}\AgdaSpace{}%
\AgdaOperator{\AgdaInductiveConstructor{,}}\AgdaSpace{}%
\AgdaBound{fx}\AgdaSymbol{)}\AgdaSpace{}%
\AgdaOperator{\AgdaInductiveConstructor{,}}\AgdaSpace{}%
\AgdaInductiveConstructor{refl}\AgdaSpace{}%
\AgdaOperator{\AgdaInductiveConstructor{,}}\AgdaSpace{}%
\AgdaSymbol{λ}\AgdaSpace{}%
\AgdaSymbol{(\AgdaUnderscore{}}\AgdaSpace{}%
\AgdaOperator{\AgdaInductiveConstructor{,}}\AgdaSpace{}%
\AgdaBound{gx}\AgdaSymbol{)}\AgdaSpace{}%
\AgdaBound{reds}\AgdaSpace{}%
\AgdaSymbol{→}\AgdaSpace{}%
\AgdaBound{fc}\AgdaSpace{}%
\AgdaSymbol{(}\AgdaInductiveConstructor{sync}\AgdaSpace{}%
\AgdaSymbol{(}\AgdaInductiveConstructor{out}\AgdaSpace{}%
\AgdaBound{gx}\AgdaSymbol{)}\AgdaSpace{}%
\AgdaInductiveConstructor{inp}\AgdaSpace{}%
\AgdaOperator{\AgdaInductiveConstructor{◅}}\AgdaSpace{}%
\AgdaBound{reds}\AgdaSymbol{)}\<%
\\
\>[0]\AgdaSymbol{...}\AgdaSpace{}%
\AgdaSymbol{|}\AgdaSpace{}%
\AgdaSymbol{\AgdaUnderscore{}}\AgdaSpace{}%
\AgdaOperator{\AgdaInductiveConstructor{,}}\AgdaSpace{}%
\AgdaInductiveConstructor{sync}\AgdaSpace{}%
\AgdaInductiveConstructor{inp}\AgdaSpace{}%
\AgdaSymbol{(}\AgdaInductiveConstructor{out}\AgdaSpace{}%
\AgdaBound{gx}\AgdaSymbol{)}\AgdaSpace{}%
\AgdaOperator{\AgdaInductiveConstructor{◅}}\AgdaSpace{}%
\AgdaSymbol{\AgdaUnderscore{}}\AgdaSpace{}%
\AgdaOperator{\AgdaInductiveConstructor{,}}\AgdaSpace{}%
\AgdaSymbol{\AgdaUnderscore{}}\AgdaSpace{}%
\AgdaSymbol{=}\<%
\\
\>[0][@{}l@{\AgdaIndent{0}}]%
\>[2]\AgdaInductiveConstructor{inp-out}\AgdaSpace{}%
\AgdaOperator{\AgdaInductiveConstructor{,}}\AgdaSpace{}%
\AgdaSymbol{(\AgdaUnderscore{}}\AgdaSpace{}%
\AgdaOperator{\AgdaInductiveConstructor{,}}\AgdaSpace{}%
\AgdaSymbol{\AgdaUnderscore{}}\AgdaSpace{}%
\AgdaOperator{\AgdaInductiveConstructor{,}}\AgdaSpace{}%
\AgdaBound{gx}\AgdaSymbol{)}\AgdaSpace{}%
\AgdaOperator{\AgdaInductiveConstructor{,}}\AgdaSpace{}%
\AgdaInductiveConstructor{refl}\AgdaSpace{}%
\AgdaOperator{\AgdaInductiveConstructor{,}}\AgdaSpace{}%
\AgdaSymbol{λ}\AgdaSpace{}%
\AgdaSymbol{(\AgdaUnderscore{}}\AgdaSpace{}%
\AgdaOperator{\AgdaInductiveConstructor{,}}\AgdaSpace{}%
\AgdaBound{fx}\AgdaSymbol{)}\AgdaSpace{}%
\AgdaBound{reds}\AgdaSpace{}%
\AgdaSymbol{→}\AgdaSpace{}%
\AgdaBound{fc}\AgdaSpace{}%
\AgdaSymbol{(}\AgdaInductiveConstructor{sync}\AgdaSpace{}%
\AgdaInductiveConstructor{inp}\AgdaSpace{}%
\AgdaSymbol{(}\AgdaInductiveConstructor{out}\AgdaSpace{}%
\AgdaBound{fx}\AgdaSymbol{)}\AgdaSpace{}%
\AgdaOperator{\AgdaInductiveConstructor{◅}}\AgdaSpace{}%
\AgdaBound{reds}\AgdaSymbol{)}\<%
\end{code}

We obtain the completeness proof using the library function
\begin{align*}
  \AgdaFunction{bounded-coind} & ~\AgdaSymbol{:}~ 
  \AgdaSymbol{(}\AgdaBound{is}~\AgdaSymbol{:}~\AgdaDatatype{IS}~\AgdaDatatype{U}\AgdaSymbol{)}
  \AgdaSymbol(\AgdaBound{cois}~\AgdaSymbol:~\AgdaDatatype{IS}~\AgdaDatatype{U}\AgdaSymbol)
  \AgdaSymbol(\AgdaBound{spec}~\AgdaSymbol:~\AgdaDatatype{U}~\AgdaSymbol{→}~\AgdaDatatype{Set}\AgdaSymbol)
  \\
  & ~\AgdaSymbol{→}~
  \AgdaBound{spec}~\AgdaFunction{⊆}~\AgdaFunction{Ind⟦}~\AgdaBound{is}~\AgdaFunction{∪}~\AgdaBound{cois}~\AgdaFunction{⟧}
  ~\AgdaSymbol{→}~
  \AgdaBound{spec}~\AgdaFunction{⊆}~\AgdaFunction{ISF[}~\AgdaBound{is}~\AgdaFunction{]}~\AgdaBound{spec}
  \\
  & ~\AgdaSymbol{→}~
  \AgdaBound{spec}~\AgdaFunction{⊆}~\AgdaFunction{FCoInd⟦}~\AgdaBound{is}~\AgdaFunction{,}~\AgdaBound{cois}~\AgdaFunction{⟧}
\end{align*}
which applies the bounded coinduction principle to the boundedness and
consistency proofs.
\begin{code}%
\>[0]\AgdaFunction{complete}\AgdaSpace{}%
\AgdaSymbol{:}\AgdaSpace{}%
\AgdaSymbol{∀\{}\AgdaBound{Se}\AgdaSymbol{\}}\AgdaSpace{}%
\AgdaSymbol{→}\AgdaSpace{}%
\AgdaFunction{FCompS}\AgdaSpace{}%
\AgdaBound{Se}\AgdaSpace{}%
\AgdaSymbol{→}\AgdaSpace{}%
\AgdaFunction{FCompG}\AgdaSpace{}%
\AgdaBound{Se}\<%
\\
\>[0]\AgdaFunction{complete}\AgdaSpace{}%
\AgdaSymbol{=}\AgdaSpace{}%
\AgdaOperator{\AgdaFunction{bounded-coind[}}\AgdaSpace{}%
\AgdaFunction{FCompIS}\AgdaSpace{}%
\AgdaOperator{\AgdaFunction{,}}\AgdaSpace{}%
\AgdaFunction{FCompCOIS}\AgdaSpace{}%
\AgdaOperator{\AgdaFunction{]}}\AgdaSpace{}%
\AgdaFunction{FCompS}\AgdaSpace{}%
\AgdaFunction{bounded}\AgdaSpace{}%
\AgdaFunction{consistent}\<%
\end{code}

%%% Local Variables:
%%% mode: latex
%%% TeX-master: "main"
%%% End:

\section{Concluding Remarks}
\label{sec:conclusion}

We have shown that generalized inference systems are an effective framework for
defining sound and complete proof systems of (some) mixed safety and liveness
properties of (dependent) session types (Definitions~\ref{def:wt} and~\ref{def:fcomp}), as well as of a liveness-preserving subtyping relation
(Definition~\ref{def:fsub}).
We think that this achievement is more than a coincidence.  One of the
fundamental results in model checking states that every property can be
expressed as the conjunction of a safety property and a liveness
property~\cite{AlpernSchneider85,AlpernSchneider87,BaierKatoen08}. The
connections between safety and liveness on one side and coinduction and
induction on the other make \GISs appropriate for characterizing combined safety
and liveness properties.

Murgia~\cite{Murgia19} studies a variety of compliance relations for processes
and session types, showing that many of them are fixed points of a functional
operator, but not necessarily the least or the greatest ones. In particular, he
shows that \emph{progress compliance}, which is akin to our compliance
(Definition~\ref{def:comp}), is a greatest fixed point and that
\emph{should-testing compliance}, which is akin to our fair compliance
(Definition~\ref{def:fcomp}), is an intermediate fixed point. These results are
consistent with Theorem~\ref{thm:compliance}.
We have extended these results to subtyping (Definiton~\ref{def:sub}) and fair
subtyping (Definition~\ref{def:fsub}). Previous alternative characterizations of
fair subtyping and the related \emph{should-testing preorder} either require the
use of auxiliary relations~\cite{Padovani13,Padovani16} or are denotational in
nature~\cite{RensinkVogler07} and therefore not as insightful as desirable.
Using \GISs, we have obtained complete characterizations of fair compliance and
fair subtyping by simply \emph{adding a few corules} to the proof systems of
their ``unfair'' counterparts.

We have coded all the notions and results discussed in the paper in
Agda~\cite{Norell07}, thus providing the first machine-checked formalization of
liveness properties and liveness-preserving subtyping relations for dependent
session types.
The Agda formalization is not entirely constructive since it makes use of three
postulates: the \emph{extensionality axiom}, the \emph{law of excluded middle},
and a specific instance of this latter postulate concerning the inductive
characterization of \emph{convergence} (see~\refrule{s-converge} in
Table~\ref{tab:subt}) and its negation, which is characterized using an
existentially quantified, coinductive definition.
Note that the Agda library for \GISs is a standalone
development~\cite{Ciccone20,CicconeDagninoZucca21,CicconeDagninoZucca21B}, on
top of which we have built our own~\cite{CicconePadovani21}. This makes it easy
to extend our results to other families of processes or to different properties.
Other mechanizations (in Coq) of session types have been presented by
Castro-Perez et al.~\cite{Castro-PerezFGY21} and by Cruz-Felipe et al.~\cite{Cruz-FilipeMP21}.
In both cases, types are represented as an inductive datatype and the usual
recursion operator, but Castro-Perez et al.~\cite{Castro-PerezFGY21} also
provide a representation based on coinductive trees that they prove to be
trace-equivalent to the recursive one and that is similar to our own
(Section~\ref{sec:agda}).
Cruz-Felipe et al.~\cite{Cruz-FilipeMP21} restrict labels to a two-value set. We
have made the same design choice in the original version of this paper~\cite{CicconePadovani21B}, while in the present one we have generalized the
formalization to arbitrary sets with decidable equality.

In this paper we have focused on properties of session types alone. In a
different work~\cite{CicconeP22}, we have successfully integrated the \GIS for
fair subtyping (Section~\ref{sec:subtyping}) into a session type system for the
enforcement of liveness properties of processes. This problem has remained open
for a long time~\cite{Padovani13,Padovani16} because the integration of fair
subtyping into a coinductively-interpreted session type system is
(unsurprisingly) challenging. By contrast, session type systems making use of
safety-preserving subtyping relations are quite
widespread~\cite{GayHole05,CastagnaDezaniGiachinoPadovani09,HuttelEtAl16,CastagnaDezaniGiachinoPadovani19}.
In accordance with the achievements described in this paper, \GISs proved to be
appropriate for defining such type system.
A natural development of the results presented in this paper is their extension
to multiparty and higher-order session types. Both extensions appear to be
technically easy to accommodate. In particular, the characterization of fair
subtyping for multiparty session types is essentially the same as that for the
binary case~\cite{Padovani16}. Also, Ciccone and Padovani~\cite{CicconeP22} have
shown that uncontrolled variance of higher-order session types may break the
liveness-preserving feature of fair subtyping. To which extent fair subtyping
can be relaxed to allow for co/contra-variance of higher-order session types
remains to be established.

%%% Local Variables:
%%% mode: latex
%%% TeX-master: "main"
%%% End:

\section*{Acknowledgement}

We are grateful to Francesco Dagnino for his feedback on an early
version of this paper and to the anonymous reviewers for their
comments and suggestions that helped us improving the paper.

\bibliographystyle{alphaurl}
\bibliography{main}

\end{document}